\documentclass[11pt]{article}

\usepackage{fullpage}
\usepackage{amssymb,amsfonts,xspace,graphicx,relsize,bm,mathtools,xcolor,amsmath,breqn,algorithm,algpseudocode,mathtools,mathrsfs,multirow}
\usepackage[T1]{fontenc}
\usepackage[utf8]{inputenc}
\usepackage[pagebackref]{hyperref}

\usepackage[thmmarks]{ntheorem}

\usepackage{cleveref}
\usepackage{thm-restate}
\usepackage{array}
\usepackage{parskip}

\usepackage[usestackEOL]{stackengine}

\usepackage{url}
\def\01{\{0,1\}}

\newcommand{\eps}{\varepsilon}
\newcommand{\ket}[1]{|#1\rangle}

\newcommand{\braket}[2]{\langle#1|#2\rangle}
\newcommand{\Tr}{\mbox{\rm Tr}}

\newcommand{\norm}[1]{\mbox{$\parallel{#1}\parallel$}}
\newcommand{\Id}{I}

\newcommand{\Cc}{{\mathcal C}} 

\newcommand{\X}{\ensuremath{\mathcal{X}}}

\newcommand{\A}{\ensuremath{\mathcal{A}}}

\newcommand{\R}{\ensuremath{\mathbb{R}}}

\newcommand{\G}{\ensuremath{\mathcal{G}}}

\usepackage{ntheorem}

\DeclareMathOperator{\poly}{poly}

\DeclareMathOperator{\polylog}{polylog}

\newcommand{\Se}{\ensuremath{\mathcal{S}}}

\newtheorem{theorem}{Theorem}[section]
\newtheorem{definition}[theorem]{Definition}

\newtheorem{lemma}[theorem]{Lemma}

\newtheorem{claim}[theorem]{Claim}
\newcommand{\pmset}[1]{\{-1,1\}^{#1}} 

\newenvironment{customthm}[1]
  {\innercustomthm}
  {\endinnercustomthm}

\def\01{\{0,1\}}

\newcommand{\sgn}{\mathrm{sign}}
\DeclareMathOperator{\sign}{sign}

\usepackage{caption}

\newcommand{\VC}{\ensuremath{\mathsf{VC}}}

\usepackage{array}
\newcolumntype{M}[1]{>{\centering\arraybackslash}m{#1}}

\usepackage{kpfonts}

\usepackage[margin=1in]{geometry}
\hypersetup{
	colorlinks,
	linkcolor={blue!100!black},
	citecolor={blue!100!black},
}
\usepackage{comment}
\usepackage{multicol}

\newenvironment{proof}
{\noindent {\bf Proof. }}
{{\hfill $\Box$}\\
	\smallskip}

\algtext*{EndFor}

\setlength{\parindent}{1.5em}

\renewcommand\thmcontinues[1]{Formal Statement}

\usepackage{footnote}
\makesavenoteenv{tabular}

\makeatletter
\newcommand{\algmargin}{\the\ALG@thistlm}   
\makeatother
\algnewcommand{\parState}[1]{\State%
    \parbox[t]{\dimexpr\linewidth-\algmargin}{\strut #1\strut}}
\usepackage{indentfirst}
\setlength{\parindent}{0.75cm}%

\begin{document}
	\title{ Quantum Boosting}

\author{
	Srinivasan Arunachalam\thanks{IBM Research. Srinivasan.Arunachalam@ibm.com} 
	\and
Reevu Maity~\thanks{Clarendon Laboratory, University of Oxford. reevu.maity@physics.ox.ac.uk}
}
\date{}
\maketitle

\begin{abstract}
Suppose we have a weak learning algorithm $\A$ for a Boolean-valued problem: $\A$ produces hypotheses whose bias $\gamma$ is small, only slightly better than random guessing (this could, for instance, be due to implementing $\A$ on a noisy device), can we \emph{boost}  the performance of~$\A$ so that $\A$'s output is correct on $2/3$ of the inputs?
  
    Boosting is a technique that  converts a weak and inaccurate machine learning algorithm into a \emph{strong} accurate learning algorithm. The AdaBoost algorithm by Freund and Schapire (for which they were awarded the G{\"o}del prize in 2003) is one of the widely used boosting algorithms, with many applications in theory and practice. Suppose we have a $\gamma$-weak learner~for a Boolean concept class $\Cc$ that takes time $R(\Cc)$, then the time complexity of AdaBoost scales~as $\VC(\Cc)\cdot  \poly(R(\Cc), 1/\gamma)$, where $\VC(\Cc)$ is the $\VC$-dimension of $\Cc$.
In this paper, we show how \emph{quantum} techniques can improve the time complexity of classical AdaBoost. To this end, suppose we have a $\gamma$-weak \emph{quantum} learner for a Boolean concept class $\Cc$ that takes time $Q(\Cc)$, we introduce a quantum boosting algorithm whose complexity scales as $\sqrt{\VC(\Cc)}\cdot  \poly(Q(\Cc),1/\gamma)$; thereby achieving a quadratic quantum improvement over classical AdaBoost in terms of  $\VC(\Cc)$. 
\end{abstract}


\section{Introduction}
 In the last decade, machine learning (ML) has received tremendous attention due to its success in practice. Given the broad applications of ML, there has been a lot of interest in understanding what are the  learning tasks for which \emph{quantum computers} could provide a speedup. In this direction,  there has been a flurry of quantum algorithms for practically relevant machine learning tasks that theoretically promise either exponential or polynomial quantum speed-ups over classical computers. In the past, theoretical works on quantum machine learning (QML) have focused on developing efficient quantum algorithms with favourable quantum complexities to solve interesting learning problems. More recently, there have been efforts in understanding the interplay between quantum machine learning algorithms and small noisy quantum devices.

The field of QML has given us algorithms for various quantum and classical learning tasks such as (i) quantum improvements to classical algorithms for practically-motivated machine learning tasks such as support vector machines~\cite{quantumsvm}, linear algebra~\cite{prakash:linear}, perceptron learning~\cite{kapoor:perceptronlearning}, kernel-based classifiers~\cite{havlicek:enhancedfeature,li:sublinearalgorithms}, algorithms to compute gradients~\cite{rebentrost:gradient,arunachalam:quantumautoencoders}, clustering~\cite{aimeur:clustering,kerenidis:qmeans}; (ii)  arbitrary quantum states in the PAC setting~\cite{aaronson:qlearnability}, shadow tomography of quantum states~\cite{aaronson:shadow, van2018improvements}, learnability of \emph{quantum objects} such as the class of stabilizer states~\cite{rocchetto:stabilizer}, low-entanglement states~\cite{mithuna:simulata}; (iii) a  quantum framework for learning Boolean-valued concept classes~\cite{bernstein:complexity,bshouty:quantumpac,atici&servedio:qlearning,arunachalam:qexactlearning}; (iv) quantum algorithms for optimization~\cite{harrow:hhl,convex:2,convex:1}; (v) quantum algorithms for machine learning based on generative models~\cite{gao:gan,lloyd2018quGAN}.

While these results seem promising and establish that quantum computers can indeed provide an improvement for interesting machine learning tasks, there are still several practically motivated challenges that remain to be addressed. One important question is whether the assumptions made in some quantum machine learning algorithms are practically feasible? Recently, a couple of works~\cite{chia:dequantize,jethwani:dequantizeSVD} demonstrated that under certain assumptions QML algorithms can be dequantized. In other words, they showed the existence of  \emph{efficient classical} algorithms for machine learning tasks which were previously believed to provide exponential quantum speedups. In this paper we address another important question which is motivated by practical implementation of QML~algorithms:
\begin{quote}
    Suppose $\A$ is a QML algorithm that is theoretically designed to perform \emph{very well}. However, when implemented on a noisy quantum computer, the performance of $\A$ is \emph{weak}, i.e., the output of $\A$ {is} correct on a \emph{slightly} better-than-half fraction of the inputs. Can we \emph{boost} the performance of $\A$ so that $\A$'s output is correct on $2/3$ of the inputs?
\end{quote}

 The classical \emph{Adaptive Boosting} algorithm  (also referred to as AdaBoost) due to Freund and Schapire~\cite{freund:boosting} can be immediately used to convert a weak quantum learning algorithm to a strong algorithm. In this paper, we provide a \emph{quantum boosting} algorithm that quadratically improves upon the classical AdaBoost algorithm. Using our quantum boosting algorithm, not only can we convert a weak and inaccurate  QML algorithm into a strong accurate algorithm, but we can do it in time that is quadratically faster than classical boosting~techniques in terms of some of the parameters of the algorithm.

\subsection{Boosting}
We now briefly describe the Probably Approximately Correct (PAC) model of learning introduced by Valiant~\cite{valiant:paclearning}. For every $n\geq 1$, let $\Cc_n \subseteq \{c:\01^n\rightarrow \{-1,1\} \}$ and $\Cc = \bigcup_{n \geq 1} \Cc_n$ be a concept class. For $\gamma>0$, we say an algorithm~$\A$ \emph{$\gamma$-learns}~$\Cc$ in the PAC model if: for every $n\geq 1$, $c \in \Cc_n$ and  distribution $\mathcal{D}:\01^n\rightarrow [0,1]$, given $n$ and labelled examples $(x,c(x))$ where $x\sim \mathcal{D}$, $\A$ outputs $h:\01^n\rightarrow \{-1,1\}$ such that $\Pr_{x\sim \mathcal{D}}[h(x)=c(x)]\geq 1/2+~\gamma$. In the \emph{quantum PAC} model, we allow a \emph{quantum learner} to possess a quantum computer and \emph{quantum examples} $\sum_x\sqrt{\mathcal{D}_x}\ket{x,c(x)}$.  We call $\gamma$ the bias of an algorithm, i.e.,~$\gamma$ measures the advantage over random guessing. We say $\A$ is a \emph{weak learner} (resp.~\emph{strong learner}) if the bias $\gamma$ scales inverse polynomially with $n$, i.e., $\gamma=1/\poly(n)$ (resp.~$\gamma$ is a universal constant independent of $n$, for simplicity we let $\gamma=1/6$). We formally define these notions in Section~\ref{sec:definitions}.

In the early 1990s, Freund and Schapire~\cite{schapire:boostfirst,freund:boostfirst,freund:boosting} came up with a boosting algorithm called \emph{AdaBoost}  that \emph{efficiently} solves the following problem: suppose we are given a weak learner as a black-box, can we use this black-box to obtain a strong learner? The AdaBoost algorithm by Freund and Schapire was one of the few theoretical boosting algorithms that were simple enough to be extremely useful and successful in practice, with applications ranging from game theory, statistics, optimization, biology, vision and speech recognition~\cite{schapire:foundations}. Given the success of AdaBoost in theory and practice, Freund and Schapire  won the G{\"o}del prize in 2003.

\paragraph{AdaBoost algorithm.}
 We now give a sketch of the classical AdaBoost algorithm and provide more details in Section~\ref{sec:classicalAdaBoost}. Let $\A$ be a weak PAC learner for $\Cc=\cup_{n\geq 1}\Cc_n$ that runs in time $R(\Cc)$ and has bias $\gamma>0$, i.e.,~$\A$ does slightly better than random guessing (think of $\gamma$ as inverse-polynomial in $n$). The goal of boosting is the following: for every $n\geq 1$, \emph{unknown} distribution $\mathcal{D}:
\01^n\rightarrow  [0,1]$ and \emph{unknown} concept $c\in \Cc_n$, construct a hypothesis $H:\01^n\rightarrow \01$ that satisfies 
\begin{align}
\label{eq:stronglearningabstract}
\Pr_{x\sim \mathcal{D}}[H(x)= c(x)]\geq \frac{2}{3},
\end{align}
where $[\cdot]$ is the indicator function which outputs $1$ if $H(x)=c(x)$ and outputs $0$ otherwise.  
AdaBoost algorithm by Freund and Schapire produces such an $H$ by invoking~$\A$ polynomially many times. The algorithm works as follows: it first obtains~$M$ different labelled examples $S=\{(x_i,c(x_i)):i\in [M]\}$ where $x_i\sim \mathcal{D}$ and then AdaBoost is an iterative algorithm that runs for $T$ steps (for some $M,T$ which we specify later). Let $D^1$ be the uniform distribution on $S$. At the $t$th step, AdaBoost defines a distribution $D^t$ depending on $D^{t-1}$ and invokes $\A$ on the  training set $S$ and distribution~$D^t$. Using the output hypothesis $h_t$ of $\A$, AdaBoost computes the \emph{weighted error}
\begin{align}
\label{eq:defnofepst}
\varepsilon_t=\Pr_{x\sim D^t}[h_t(x)\neq c(x)],
\end{align} 
which is the probability of $h_t$  misclassifying a randomly selected training example drawn from the distribution $D^t$. 
The algorithm then uses $\varepsilon_t$ to compute a \emph{weight} $\alpha_t=\frac{1}{2}\ln\Big(\frac{1-\varepsilon_t}{\varepsilon_t}\Big)$ and updates the distribution $D^t$ to $D^{t+1}$ as follows
\begin{align}
\label{eq:distributionupdateabstract_a}
    D^{t+1}_x &= \frac{D^{t}_x}{Z_{t}}
          \times \begin{cases} 
          e^{-\alpha_t} & \text{ if } h_{t}(x) = c(x) \\
          e^{\alpha_t} & \text{ otherwise },
          \end{cases}
\end{align}
where $Z_t=\sum_{x \in S} D^t_x \exp(-c(x)\alpha_t h_t(x))$.\footnote{This distribution update rule is also referred to as the \emph{Multiplicative Weights Update Method} (MMUW). See~\cite[Section~3.6]{arora:multiplicative} on how one can cast AdaBoost into the standard MMUW framework.}  After $T$ iterations, the algorithm outputs the hypothesis $H$
$$
 H(x) = \sign \Big( \sum_{t=1}^T \alpha_t h_t(x) \Big),
$$
where $\alpha_t$ is the weight and $h_t$ is the weak hypothesis computed in the $t$th iteration.\footnote{Note that without loss of generality, we can assume $\sum_t\alpha_t=1$ since renormalizing $\alpha_t$s will not change $H$.} 

It remains to answer three important questions: (1) What is $T$, (2) What is $M$, (3) Why does $H$ satisfy Eq.~\eqref{eq:stronglearningabstract}? The punchline of AdaBoost is the following: by selecting the number of iterations $T=O(\log M)$, the hypothesis $H$ satisfies $H(x)=c(x)$ for every $x\in S$. However, note that this does not imply that $H$ is a strong hypothesis, i.e., it is not clear if $H$ satisfies Eq.~\eqref{eq:stronglearningabstract}. Freund and Schapire showed that, if the number of labelled examples $M$ is at least $O(\VC(\Cc))$ (where $\VC(\Cc)$ is a combinatorial dimension that can be associated with the concept class $\Cc$), then with high probability (where the probability is taken over the randomness of the algorithm and the training set $S$), the final hypothesis $H$ satisfies
$$
\Pr_{x\sim \mathcal{D}} [H(x)= c(x)]\geq 2/3.
$$
In other words, by picking $M$ large enough, not only perfectly classifies every $x\in S$, but is also $2/3$-close to $c$ under $\mathcal{D}$. 
Hence $H$ is a strong hypothesis for the target concept $c$ under the unknown distribution $\mathcal{D}$ which had support on $\01^n$. The overall time complexity of AdaBoost is $\widetilde{O}( n \cdot R(\Cc) \cdot \VC(\Cc))$: the algorithm runs for $T=\log M=\log \VC(\Cc)$ rounds, and in each round  we run a weak learner with time complexity $R(\Cc)$, compute the weighted error $\varepsilon_t$ which takes time $O(M)=O(\VC(\Cc))$ and then update the distributions using arithmetic operations in time $O(n)$.

\subsection{Our results}
The main contribution of this paper is a quantum boosting algorithm that quadratically improves upon the classical algorithm in $\VC(\Cc)$.

\begin{theorem}[Informal]
    For $n\geq 1$ let $\Cc_n \subseteq  \{c:\01^n\rightarrow \{-1,1\} \}$ and  $\Cc = \bigcup_{n \geq 1} \Cc_n$. Let $\A$ be a $\gamma$-weak quantum PAC learner for $\Cc$  that takes time $Q(\Cc)$. Then the quantum time complexity of converting $\A$ to a strong PAC learner is
    $$
    T_Q=\widetilde{O}\Bigg(\sqrt{\VC(\Cc)} \cdot Q(\Cc)^{3/2} \cdot \frac{n^2}{\gamma^{11}}\Bigg).
    $$
\end{theorem}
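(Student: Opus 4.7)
The plan is to imitate classical AdaBoost round-by-round, replacing only the two bottleneck operations---computing the weighted error $\varepsilon_t$ on the training set $S$ and sampling from $D^t$ to feed the weak learner---by quantum amplitude estimation and quantum rejection sampling respectively. Since both subroutines give a square-root speedup in the training-set size $M=\widetilde{O}(\VC(\Cc)/\poly(\gamma))$, the per-round $\widetilde{O}(M)$ cost of classical AdaBoost is replaced by $\widetilde{O}(\sqrt{M}/\gamma)$, which, summed over $T=\widetilde{O}(1/\gamma^2)$ rounds and multiplied by the cost of one invocation of $\A$, delivers the claimed $\sqrt{\VC(\Cc)}$ scaling.

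Concretely, I would first measure $M$ quantum examples to obtain a classical sample $S=\{(x_i,c(x_i))\}_{i\in[M]}$ held in QRAM. In round $t$, rather than materialize $D^t$, I represent it implicitly through the running list $(\tilde\alpha_s,h_s)_{s<t}$, so that any $D^t_i$ can be evaluated in $\widetilde{O}(t)$ time as $\frac{1}{Z_t}\exp\bigl(-c(x_i)\sum_{s<t}\tilde\alpha_s h_s(x_i)\bigr)$, with $Z_t$ estimated once via a separate amplitude-estimation call. A quantum sampler for $D^t$ is then built by starting from the uniform superposition over $S$, performing a controlled rotation with angle $\arcsin(\sqrt{D^t_i/D^t_{\max}})$, and amplifying success to a constant using $\widetilde{O}(\sqrt{M\cdot D^t_{\max}})$ state-preparation queries. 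Using this sampler I (i) invoke $\A$ on the resulting quantum examples from $D^t$ to obtain $h_t$; (ii) run amplitude estimation on the event $[h_t(x)\neq c(x)]$ to obtain $\tilde\varepsilon_t$ with additive error $\eta=\widetilde{\Theta}(\gamma)$ in $\widetilde{O}(\sqrt{M}/\gamma)$ queries; and (iii) set $\tilde\alpha_t=\tfrac{1}{2}\ln((1-\tilde\varepsilon_t)/\tilde\varepsilon_t)$. After $T$ rounds, output $H(x)=\sign\bigl(\sum_{t=1}^T \tilde\alpha_t h_t(x)\bigr)$.

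The correctness proof then splits into two essentially independent pieces. First, a \emph{robust} version of the Freund--Schapire training-error bound: if $|\tilde\varepsilon_t-\varepsilon_t|\leq \eta$ in each round, then $\prod_t Z_t$---and hence the empirical error of $H$---remains at most $\exp(-\Omega(\gamma^2 T))$ provided $\eta$ is a sufficiently small polynomial in $\gamma$. The standard uniform-convergence argument in $\VC(\Cc)$ then upgrades empirical accuracy to the PAC guarantee $\Pr_{x\sim\D}[H(x)=c(x)]\geq 2/3$; pushing the allowed tolerance $\eta$ through $T=\widetilde{O}(1/\gamma^2)$ rounds is what forces the $\gamma^{-11}$ factor in the final bound. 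Second, I multiply the per-round costs: $Q(\Cc)^{3/2}$ for running $\A$ on a non-standard distribution with amplified state preparation (the extra $\sqrt{Q(\Cc)}$ covering the amplification required to drive $\A$'s internal success probability to high confidence), plus $\widetilde{O}(\sqrt{M}/\gamma)$ amplitude-estimation queries for $\tilde\varepsilon_t$, plus $\widetilde{O}(n)$ arithmetic per point for the weight oracle. The main technical obstacle I anticipate is controlling $D^t_{\max}$ uniformly across rounds, since AdaBoost's multiplicative updates can concentrate $D^t$ on a few hard examples and blow up the rejection-sampling overhead $\sqrt{M\cdot D^t_{\max}}$; handling this will likely require either exploiting the explicit normalisation bounds $Z_t\leq \exp(-\Omega(\gamma^2))$ to cap $D^t_{\max}$, or a walk-based state preparation that avoids the worst-case rejection cost.
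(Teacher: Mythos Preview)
Your high-level plan---replace the two $O(M)$ bottlenecks by amplitude-amplification state preparation and amplitude estimation of $\varepsilon_t$---is exactly the paper's strategy. The gap is in the error analysis, which is where essentially all the technical work in the paper lies. First, an additive estimate $|\tilde\varepsilon_t-\varepsilon_t|\le\eta$ with $\eta=\widetilde\Theta(\gamma)$ does not by itself control the distribution update: the AdaBoost normalizer involves the ratio $\varepsilon_t/\tilde\varepsilon_t$, which is unbounded when $\varepsilon_t\ll\eta$. The paper handles this by a \emph{multiplicative} amplitude-estimation subroutine that either returns $\varepsilon'_t$ with $|\varepsilon'_t-\widetilde\varepsilon_t|\le\delta\varepsilon'_t$ (the ``yes'' branch), or declares $\widetilde\varepsilon_t<1/(QT^2)$ and switches to a \emph{modified} update rule (the ``no'' branch, Eq.~(14) in the paper). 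This yes/no split and the nonstandard update are the paper's main technical contribution; your proposal replaces them with an unproven ``robust Freund--Schapire bound'' that, as stated, does not go through without at least a clipping/thresholding argument you do not supply. Second, you do not address why the weak learner still succeeds when it is fed $Q$ copies of an \emph{approximate} quantum example state rather than $\sum_x\sqrt{D^t_x}\ket{x,c(x)}$. The paper proves (Claims~4.3 and~4.5) that the sub-normalized $\widetilde D^t$ satisfies $\sum_x\widetilde D^t_x\in[1-30\delta,1]$ and that the resulting state has overlap $\ge 1-50\delta$ with the ideal one; pushing this through $Q$ copies is precisely what forces $\delta=O(1/(QT^2))$ and hence the $Q^{3/2}$ factor. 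Your explanation of $Q^{3/2}$ as ``amplification of $\A$'s success probability'' is incorrect---boosting $2/3$ to $1-1/T$ costs $O(\log T)$, not $\sqrt Q$.

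Finally, the obstacle you flag---controlling $D^t_{\max}$ for rejection sampling---is a red herring. Since $D^t_{\max}\le 1$ always, $\sqrt{M\cdot D^t_{\max}}\le\sqrt M$, so your proposed sampler already meets the target cost in the worst case. The paper sidesteps $D^t_{\max}$ entirely: it rotates by $\arcsin\sqrt{\widetilde D^t_x}$ directly (valid because each $\widetilde D^t_x\le 1$) and the amplification cost is $\Theta\bigl(\sqrt{M/\sum_x\widetilde D^t_x}\bigr)=\Theta(\sqrt M)$ by Claim~4.3. Your separate amplitude-estimation call for $Z_t$ is therefore unnecessary and, as you half-realize, introduces its own dynamic-range headaches when $\tilde\alpha_s$ is large. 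The cleaner route is the paper's: use the closed form $Z_t=2(1+2\delta)\sqrt{\varepsilon'_t(1-\varepsilon'_t)}$ with a small inflation and prove directly that the resulting pseudodistribution stays nearly normalized.
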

The classical complexity of AdaBoost scales as $\widetilde{O}(\VC(\Cc) \cdot R(\Cc) \cdot n /\gamma^4)$ where $R(\Cc)$ is the time complexity of a classical PAC learner. Comparing this bound with our main result, we get a quadratic improvement in terms of $\VC(\Cc)$ and also observe that the time complexity of quantum PAC learning $Q(\Cc)$ could be polynomially or even exponentially smaller than classical PAC learning time complexity~$R(\Cc)$.\footnote{In~\cite{arunachalam:optimalpaclearning}, the authors prove that the \emph{sample} complexity of classical and quantum PAC learning is the same up to constant factors, but there exist concept classes demonstrated by~\cite{servediogortler:equiv} for which there could be exponential separations in \emph{time} complexity between quantum and classical learning (under complexity theoretic assumptions).} 

There have been a few prior works~\cite{neven2012qboost,schuld2018quantum,Hsieh:adaboost} which touch upon AdaBoost but none of them rigorously prove that quantum techniques can improve boosting. As far as we are aware, ours is the \emph{first work} that proves quantum algorithms can quadratically improve the complexity of classical~AdaBoost. Given the importance of AdaBoost in classical machine learning, our quadratic quantum improvement could potentially have various applications in~QML. 
 We believe that the $(1/\gamma)$-dependence on our complexity should be  improvable using quantum techniques (and we leave it as an open question).
 Although our complexity is weaker than the classical complexity in terms of $1/\gamma=\poly(n)$, observe that many concept classes have $\VC(\Cc)$ that scales \emph{exponentially} with $n$, in which case our quadratic improvement in terms of $\VC(\Cc)$ ``beats" the ``polynomial loss" (in terms of~$1/\gamma$) in the complexity of  our quantum boosting algorithm.

\paragraph{Applications to NISQ algorithms.}
We now consider the scenario where QML algorithms are implemented on a \emph{noisy-intermediate scale quantum} computer (often referred to as NISQ~\cite{preskill:nisq}): suppose $\A$ is a quantum learner for the following well-known classification problem: given a set~$S$ of training points $x\in \R^d$ labelled as either $0$ or $1$, decide  if the label of an $x\notin S$ is $0$ or $1$. Let us assume $\A$ is \emph{designed} to be a strong learner, i.e., $\A$ outputs a separating hyperplane $H:\R^d\rightarrow \01$ for the classification problem such that, at most a constant $\leq 1/3$-fraction of the points in $\R^d$ are misclassified by $H$. However when implementing $\A$ on a NISQ machine, suppose the errors in the quantum device can only guarantee that $H$ classifies a  $(1/2+\gamma)$-fraction of the points correctly (think of  $\gamma=1/\poly(d)$, i.e., $H$ does barely better than random guessing). Then how do we use the NISQ machine to produce a hyperplane that correctly classifies a 2/3-fraction of the inputs?  Our quantum boosting algorithm can be used here to find a good hyperplane using multiple invocations of the NISQ device. Although our quantum boosting algorithm uses quantum phase estimation as a subroutine, which isn't a NISQ-friendly quantum algorithm, we leave it as an open question if one could use variational techniques as proposed by Peruzzo et al.~\cite{vqeforpe} to replace the quantum phase estimation step.

We now give more details. Let $N$ be a power of $2$, $n=\log_2 N$ and  $\X\subseteq \01^N$. Let $x\in \X$ be an unknown string, which can be thought of as a set of $N$ points labelled $0$ or $1$. Furthermore, let us make the assumption that for a set of $M$ uniformly random  $i_1,\ldots,i_M\in [N]$, our algorithm has knowledge of $S=\{(i_1,x_{i_1}),\ldots,(i_M,x_{i_M})\}$. We believe that this is a realistic assumption, since in most learning algorithms we often have prior knowledge of the  unknown string $x$ at uniformly random coordinates $\{i_1,\ldots,i_M\}$. Let $\A$ be a quantum algorithm that, on input a distribution $ \mathcal{D}:[N]\rightarrow [0,1]$ and $S$, takes time at most~$Q$ to output a string $y\in \01^N$ that satisfies $\Pr_{i\sim \mathcal{D}} [y_i= x_i]\geq \frac{1}{2}+\gamma$. Our quantum boosting algorithm can be used to perform the following: suppose 
$|S|\geq \widetilde{\Omega}\Big(\VC(\X)/\gamma^2\Big),
$ then in time 
$$
\widetilde{O}\Bigg(\sqrt{\VC(\X)}\cdot \frac{Q^{3/2}}{\gamma^{11}} \cdot N^2 \Bigg), 
$$
our quantum algorithm can produce a string $\widetilde{y}$ such that $\Pr_{} [\widetilde{y}_i= x_i]\geq 2/3$ (i.e., the Hamming distance between $\widetilde{y}$ and $x$ is at most $N/3$), where the probability is taken over uniformly random $i\in [N]$.\footnote{Note that we use the uniform distribution here because we started with the assumption that $S=\{(i_1,x_{i_1}),\ldots,(i_M,x_{i_M})\}$ was obtained by uniformly sampling $i_1,\ldots,i_M\in [N]$. Our quantum boosting algorithm works equally well in case obtain indices $i_1,\ldots,i_M$ from an arbitrary (possibly unknown) distribution $D$ instead of the uniform distribution, in which case we obtain a $\widetilde{y}$ such that $\Pr_{i\sim D}[\widetilde{y}_i=x_i]\geq 2/3$.} Classically, one could have used the AdaBoost algorithm to perform the same task, which would have taken time that depends linearly on $\VC(\X)$.

\subsection{Proof sketch} 
We now give a sketch of our quantum boosting algorithm. The quantum algorithm  follows the structure of the classical AdaBoost algorithm. On a very high level, our quantum speedup is obtained by using quantum techniques to estimate the quantity 
$$
\varepsilon_t=\Pr_{x\sim D^t}[h_t(x)\neq c(x)]=\sum_{x\in S} D^t_x \cdot  [h_t(x)\neq c(x)],
$$
quadratically faster than classical methods. In order to do so, use the quantum algorithm for mean estimation, which given a set of numbers $\alpha_1,\ldots,\alpha_M \in [0,1]$, produces an approximation of $\frac{1}{M}\sum_{i\in [M]}\alpha_i$ up to an additive error $\delta$ in time~${\Theta}(\sqrt{M}/\delta)$~\cite{nayakwu:meanestimation,brassard:meanestimation},\footnote{In this section we omit poly-logarithmic factors in the complexity for simplicity.} whereas classical methods take  time~$\Theta(M)$.

\subsubsection{Why does quantum not ``trivially" give a quantum speedup to AdaBoost?}
\label{sec:whynotrivialspeedup}
Although our quantum speedup might seem like an immediate application of quantum mean estimation, using the mean estimation subroutine to improve classical AdaBoost comes with various issues which we highlight  now.
\begin{enumerate}
    \item \textbf{Errors while computing $\varepsilon_t$s}: Quantumly, the mean estimation subroutine \emph{approximates} ${\varepsilon_t}$ up to an additive error $\delta$ in time $O(\sqrt{M}/\delta)$. Suppose we obtain $\varepsilon'_t$ satisfying $|\varepsilon'_t-\varepsilon_t|\leq \delta$. Recall that the distribution update in the $t$th step of AdaBoost is given by
\begin{align}
\label{eq:distributionupdateabstract}
    D^{t+1}_x &= \frac{D^{t}_x}{Z_{t}}
          \times \begin{cases} 
          e^{-\alpha_t} & \text{ if } h_{t}(x) = c(x) \\
          e^{\alpha_t} & \text{ otherwise },
          \end{cases}
\end{align}
    where $Z_t=\sum_{x\in S} D^t_x \exp(-c(x)\alpha_t h_t(x))$ and $\alpha_t=\frac{1}{2}\ln((1-\varepsilon_t)/\varepsilon_t)$. Given an  additive approximation $\varepsilon'_t$ of $\varepsilon_t$, first note that the approximate weights    $\alpha'_t=\frac{1}{2}\ln((1-\varepsilon'_t)/ \varepsilon'_t)$ could be very far from $\alpha_t$. Moreover, it is not clear why $\widetilde{D}^{t+1}$ defined as
    $$
    \widetilde{D}^{t+1}_x=\frac{1}{Z_t}\cdot D^t_x \exp(\alpha_t' c(x)\cdot h_t(x))
    $$
    is  \emph{even close} to a distribution. Another possible way to update our distribution would be
    \begin{align}
        \label{eq:distupdateex}
    \widetilde{D}^{t+1}_x=\frac{1}{Z'_t}\cdot D^t_x \exp(-\alpha_t' c(x)\cdot h_t(x)),
    \end{align}
    where $Z'_t=\sum_{x\in S} D^t_x \exp(-c(x)\alpha'_t h_t(x))$, so by definition $\widetilde{D}^{t+1}$ in Eq.~\eqref{eq:distupdateex} is a distribution. However, in this case note that a quantum learner cannot exactly compute $Z'_t$ in time $O(\sqrt{M})$ but instead can  \emph{approximate} $Z'_t$ and we face the same issue as mentioned above.\footnote{Note that computing $Z'_t$ would take time $O(M)$ classically even though we have knowledge of $\alpha'_t$ since $Z'_t$  involves a summation of $M$ terms.  Hence, we need to approximate $Z'_t$ again using a mean estimation algorithm.}

    \item \textbf{Strong approximation of $\varepsilon_t$:} One possible way to get around this would be to estimate  $\varepsilon_t$ \emph{very well} so that one could potentially show that $\widetilde{D}^{t+1}$ is close to a distribution. However,  it is not too hard to see that if $\widetilde{D}^{t+1}$ should be close to a distribution, then we require a $\delta=1/\sqrt{M}$-approximation of $\varepsilon_t$. Such a strong approximation increases the complexity from $O(\sqrt{M})$ to $O(M)$ which removes the entire quantum speedup.
    
    \item \textbf{Noisy inputs to a quantum learner:} Let us further assume that we could spend time $O(M)$ as mentioned above to estimate $\varepsilon_t$ very well (instead of using classical techniques to compute~$\varepsilon_t$). Suppose we obtain $\widetilde{D}^{t+1}$ which is close to a distribution. Recall that the input  to a quantum learner should be copies of a quantum state $\ket{\psi_t}=\sum_{x\in S}\sqrt{D^t_x}\ket{x,c(x)}$. However, we only have access to a quantum state $\ket{\phi_t}=\sum_{x\in S}\sqrt{\widetilde{D}^t_x}\ket{x,c(x)} +\ket{\chi_t}$, where $\ket{\chi_t}$ is orthogonal to the first part of $\ket{\phi_t}$ (note that  $\ket{\phi_t}$ is no longer a quantum state without the additional quantum register $\ket{\chi_t}$). Now it is unclear what will be the output of a quantum learner on input $\ket{\phi_t}$ instead of $\ket{\psi_t}$.
    
    \item \textbf{Why is the final hypothesis  good:} Assume for now that we are able to show that the learner on input copies of $\ket{\phi_t}$ produces a weak hypothesis $h_t$ for the target concept~$c$. Then, after $T$ steps of the quantum boosting algorithm, the final hypothesis would be $H(x)= \sign\Big(\sum_{t=1}^T \alpha'_t h_t(x)\Big)$. It is not at all  clear why $H$ should satisfy $H(x)=c(x)$ for even a constant fraction of the $x$s in $S$. Observe that the analysis of classical AdaBoost crucially used that ${H}(x)=c(x)$ for almost every $x\in S$ in order to conclude that the generalization error is small, i.e., $\Pr_{x \sim \mathcal{D}} [H(x)\neq c(x)]\leq 1/3$ where $\mathcal{D}$ is an unknown distribution over $\01^n$.
    \end{enumerate}

In this paper, our main contribution is a \emph{quantum boosting algorithm} that overcomes all the issues mentioned above. 

\subsubsection{Quantum boosting algorithm} 
 We now give more details of our quantum boosting algorithm. In order to avoid the issues mentioned in the previous section, our main technical contribution is the following:  we provide a quantum algorithm that modifies the standard distribution update rule of classical AdaBoost in order to take care of the \emph{ approximations} of $\varepsilon_t$s. We show that the output of our modified quantum boosting algorithm  has the same guarantees as classical AdaBoost.
 
Before we elaborate more on the quantum boosting algorithm, we remark that the modified distribution update rule is also applicable in classical AdaBoost. 
Suppose in classical AdaBoost, we obtain the approximations $\varepsilon'_t$s instead of the \emph{exact} weighted errors $\varepsilon_t$s in time $P$. Then our \emph{robust classical AdaBoost} algorithm (i.e., AdaBoost with modified distribution update) can still produce a hypothesis $H$ that has small training error and the complexity of such a robust classical AdaBoost algorithm will be proportional to  $O(P)$. Clearly, it is possible that $P$ could be \emph{much smaller} than~$M$ (which is the time taken by classical AdaBoost to compute $\varepsilon_t$ exactly) in which case the robust classical AdaBoost algorithm is faster than standard classical AdaBoost (in fact we are not aware if the classical AdaBoost or the MMUW algorithm is robust to errors). 

We now discuss the important modification  in our quantum boosting algorithm: the distribution update step. As mentioned before, classically one can compute the quantity $\varepsilon=\Pr_{x\sim D}[h(x)\neq c(x)]$ in time $O(M)$. Quantumly, we describe a subroutine that for a fixed $\delta$, performs the following: outputs `yes' if $\varepsilon\geq \Omega((1-\delta)/(QT^2))$ and `no' otherwise. In the `yes' instance when $\varepsilon$ is large, the algorithm also outputs an approximation $\varepsilon'$ that satisfies $|\varepsilon' - \varepsilon | \leq \delta \varepsilon'$ and in the `no' instance, the algorithm outputs an $\varepsilon'$ that satisfies $|\varepsilon' - \varepsilon | \leq 1/(Q T^2)$. The essential point here is the subroutine takes time $O(\sqrt{M})$.\footnote{We remark that the subroutine outputs `yes' or `no' with high probability (here, for simplicity in exposition, we assume that the subroutine always correctly outputs `yes' or `no').} The subroutine crucially uses the fact that in the `yes' instance,  the complexity of the standard quantum mean estimation algorithm scales as $O(\sqrt{M})$. However, in the `no' instance when $\varepsilon$ is ``small", obtaining a good multiplicative approximation of $\varepsilon'$ using the quantum mean estimation algorithm could potentially take time $O(M)$. In this case, we observe that we \emph{do not} need a good approximation of $\varepsilon$ and instead we set $\varepsilon'=\tau=1/(QT^2)$. We justify this~shortly.

Depending on whether we are in the `yes' instance or `no' instance of the subroutine, we update the distribution differently. In the `yes' instance, we make a distribution update that resembles the standard AdaBoost update using the approximation $\varepsilon'_t$ instead of $\varepsilon_t$. We let $Z_t = 2 \sqrt{\varepsilon'_t (1 - \varepsilon'_t)}$, $\alpha'_t=\frac{1}{2}\ln\Big(\frac{1-\varepsilon'_t}{\varepsilon'_t}\Big)$ and update  $\widetilde{D}^{t}_x$ as follows:
\begin{align}
\label{eq:yesinstanceabstract}
\widetilde{D}^{t+1}_x &= \frac{\widetilde{D}^{t}_x}{(1 + 2 \delta) \hspace{1pt}  Z_{t}}\times
          \begin{cases} 
          e^{-\alpha'_t} & \text{ if } h_{t}(x) = c(x) \\
          e^{\alpha'_t} & \text{ otherwise }.
          \end{cases}
\end{align}
However, in the `no' instance when $\varepsilon_t$ is small, we cannot hope to get a good multiplicative approximation in time $O(\sqrt{M})$. In this case, we crucially observe that $\alpha_t = \frac{1}{2}\ln\Big(\frac{1-\varepsilon_t}{\varepsilon_t}\Big)$ is large, hence with a ``worse approximation" $\varepsilon'_t$ and $\alpha'_t =\frac{1}{2}\ln\Big(\frac{1-\varepsilon'_t}{\varepsilon'_t}\Big)$, we can still show that the hypothesis $H(x)=\sign\Big(\sum_{t=1}^T \alpha'_t h_t(x)\Big)$ has small training error.  As a result, in the `no' instance, we simply let $\varepsilon'_t =\tau$, $Z_t = 2 \sqrt{\tau (1 - \tau )} $ and $ \alpha'_t = \frac{1}{2}\ln\Big(\frac{1-\tau}{\tau}\Big)$ and update $\widetilde{D}^{t}_x$ as follows: 
\begin{align}
\label{eq:noinstanceabstract}
          \widetilde{D}^{t+1}_x &= \frac{\widetilde{D}^{t}_x}{(1 + 2/(QT^2)) Z_{t}}
          \times \begin{cases} 
          (2 - 1/(Q T^2))e^{-\alpha'_t} & \text{ if } h_{t}(x) = c(x) \\
          (1/(QT^2))e^{\alpha'_t} & \text{ otherwise }.
          \end{cases}
\end{align}
Note that the distribution update in Eq.~\eqref{eq:noinstanceabstract} is \emph{not} the standard boosting distribution update and differs from it by  assigning higher weights to the correctly classified training examples and lower weights to the misclassified ones. In both cases of the distribution update  in Eq.~\eqref{eq:yesinstanceabstract},~\eqref{eq:noinstanceabstract}, observe that $\widetilde{D}$ need not be a true distribution. However, we are able to show that $\widetilde{D}$ is very close to a distribution, i.e., we argue that $\sum_{x \in S} \widetilde{D}_x\in [1-30 \delta,1]$. This aspect is very crucial because, in every iteration of the quantum boosting algorithm, we will pass copies of
$$
\ket{\Phi'}=\sum_{x\in S}\sqrt{\widetilde{D}_x}\ket{x,c(x)}+\ket{\chi},\footnote{We need $\ket{\chi}$ because $\widetilde{D}$ is not a distribution, and  $\sum_{x \in S}\sqrt{\widetilde{D}_x}\ket{x,c(x)}$ need not be a valid quantum~state.}
$$
to the quantum learner instead of the ideal quantum state
$$
\ket{\Phi}=\sum_{x\in S}\sqrt{{D}_x}\ket{x,c(x)}.
$$
A priori it is not clear, what will be the output of the weak quantum learner on the input $\ket{\Phi'}$. However, we show that the state $\ket{\Phi'}$ is  close to $\ket{\Phi}$, in particular we show that $|\braket{\Phi'}{\Phi}|\geq 1-\delta$. Suppose a weak quantum learner outputs a weak hypothesis $h$ when given copies of~$\ket{\Phi}$ (with probability at least $1-1/T$). Using the properties of $\widetilde{D}$ and the lower bound on $|\braket{\Phi'}{\Phi}|$, we show that the same quantum learner will output a weak hypothesis $h$ when given copies of the state $\ket{\Phi'}$, with probability at least $1-2/T$. {A union bound over the $T$ iterations of the algorithm shows that with probability at least $2/3$, we obtain a strong hypothesis $H$ after the $T$ iterations}.  Finally, after~$T$ rounds, our quantum boosting algorithm outputs the hypothesis $H(x)=\sign\Big(\sum_{t=1}^T \alpha'_t h_t(x)\Big)$ for all $x\in \01^n$.

It remains to show that the final hypothesis $H$ has small training error. We remark that the calculations to prove this are mathematically technical and are the non-trivial aspects of our quantum~algorithm. Crucially, we use the structure of the modified distribution updates to show that~$H$ has small training error. In order to go from small training error to small generalization error, we use the same ideas as in classical AdaBoost to show that, if the number of classical labelled examples~$M$ is at least $O(\VC(\Cc))$, then $H$ has generalization error at most $1/3$. The overall time complexity of our quantum boosting algorithm is dominated by the subroutine for estimating~$\varepsilon$ in every iteration, which scales as $O(\sqrt{M})$. The remaining part of the quantum boosting algorithm invokes the weak quantum learner which takes time $Q(\Cc)$ and performs arithmetic operations in the distribution update state which takes time~$O(n^2)$. So the overall complexity of our quantum boosting algorithm scales as $\widetilde{O}(n^2 \cdot \sqrt{\VC(\Cc)} \cdot Q(\Cc)^{3/2})$, which is quadratically better than the classical AdaBoost complexity in terms of $\VC(\Cc)$.

\paragraph{Related works.} 
 As we mentioned earlier, there are only a few works that consider boosting in the quantum setting.  Neven et al.~\cite{neven2012qboost} considered a heuristic variant of the AdaBoost algorithm and showed how to implement it using the adiabatic quantum hardware on a D-Wave machine. They give numerical evidence that quantum computers should give a speedup to AdaBoost. Schuld and Petruccione~\cite{schuld2018quantum} consider the problem of \emph{boosting} the performance of quantum classifiers and use AdaBoost as a subroutine. 

Finally, in a recent work Wang et al.~\cite{Hsieh:adaboost} proposed a quantum algorithm to improve the performance of weak learning algorithms.\footnote{In their paper, they consider the setting where the learners are \emph{probabilistic}. We do not discuss that setting here, but our analysis works exactly the same in case the learners are probabilistic.} Their quantum algorithm departs from standard boosting algorithms in several ways since they make various assumptions in their work. With reference to Section~\ref{sec:whynotrivialspeedup}: (1) they assume knowledge of $\varepsilon_t$s \emph{exactly}: note that this is not possible in $o(M)$ time and standard quantum mean estimation takes time in $O(\sqrt{M})$ in order to \emph{approximate} the mean; (2) their quantum algorithm only approximates $\alpha_1,\ldots,\alpha_T$ additively; (3) given such additive approximations, it is not clear why the final output hypothesis  $H$ has small training error. We believe that in order to prove their quantum algorithm outputs a \emph{strong} hypothesis, the time complexity should be $O(M\cdot T^2)$ instead of the claimed $O(\sqrt{M}\cdot T^2)$. However using our techniques, we can improve their complexity to $O(\sqrt{M}T^{5})$. Although this complexity might seem worse than the classical AdaBoost complexity of $O(MT)$, note that we set $T=O(\log M)$ in AdaBoost for the convergence analysis. Hence, the overall quantum complexity is quadratically better than classical AdaBoost in terms of $M$, which is fixed to be $\VC(\Cc)$ in order to have small generalization~error.

\paragraph{Organization.} In Section~\ref{sec:prelim}, we formally define the classical and quantum learning models and state the required claims for our quantum boosting algorithm. In Section~\ref{sec:classicalAdaBoost}, we discuss how classical AdaBoost can improve the performance of weak quantum machine learning algorithms. In Section~\ref{sec:quantumboost}, we finally describe the quantum boosting algorithm which is quadratically faster than the classical AdaBoost algorithm.

\section{Preliminaries}
\label{sec:prelim}

\subsection{Learning definitions}
\label{sec:definitions}

Throughout this paper, we let $[n]=\{1,\ldots,n\}$. Let $c:\01^n\rightarrow \{-1,1\}$. We say $\A$ is given \emph{query} access to $c$ if, $\A$ can \emph{query} $c$, i.e., $\A$ can obtain $c(x)$ for $x$ of its choice. Similar, we say $\A$ has \emph{quantum query} access to $c$, if $\A$ can query $c$ in a superposition, i.e., $\A$ can perform the map
$$
{O}_c:\ket{x,b}\rightarrow \ket{x,c(x)\cdot b}, 
$$
for every $x\in \01^n$ and $b\in \{-1,1\}$. We now introduce our main learning model.

\paragraph{PAC learning.} 	The Probably Approximately Correct (PAC)  model of learning was introduced by   Valiant~\cite{valiant:paclearning} in 1984. A concept class $\Cc$ is a collection of concepts. Often, $\Cc$ is composed of a subclass of functions $\{\Cc_n\}_{n\geq 1}$, i.e., $\Cc=\cup_{n\geq 1}\Cc_n$, where $\Cc_n$ is a
  collection of Boolean functions $c:\01^n\rightarrow \{-1,1\}$, which are often referred to as \emph{concepts}. In the PAC learning model, a learner~$\A$ is given $n\geq 1$ and access to \emph{labelled examples} $(x,c(x))$ where $(x,c(x))$ is drawn according to the unknown distribution $\mathcal{D}:\01^n\rightarrow [0,1]$ and $c\in \Cc_n$ is the \emph{unknown} target concept (which the learner is trying to learn). The goal of $\A$ is to output a hypothesis $h:\01^n\rightarrow \{-1,1\}$ that is $\eta$-close to $c$ under $\mathcal{D}$. We say that $\A$ is an \emph{$(\eta,\delta)$-PAC learner} for a concept class~$\Cc$ if it satisfies: 

	\begin{quote}
     for every $n\geq 1$, $c\in \Cc_n$ and distributions $\mathcal{D}$, $\A$ takes as input $n,\delta,\eta$ and  labelled examples $(x,c(x))$ and with probability $\geq 1-\delta$, $\A$ outputs a hypothesis $h$
    such~that 
    $$\Pr_{x\sim \mathcal{D}}[h(x)\neq c(x)]\leq \eta.$$
	\end{quote}
 The sample complexity and time complexity of a learner is the number of labelled examples and number of bit-wise operations (i.e., time taken)  that suffices to learn $\Cc$ (under the hardest concept $c\in \Cc$ and distribution $\mathcal{D}$).  
 
 Throughout this paper, we assume that all concept classes $\Cc$ are defined as $\Cc=\cup_{n\geq 1}\Cc_n$ where $\Cc_n\subseteq \{c:\01^n\rightarrow \01\}$ (in fact from here onwards, we will assume that $\Cc_n$ is always a subset of $\{c:\01^n\rightarrow \01\}$ and do not explicitly mention it).
 
 In the quantum PAC model, a learner is a \emph{quantum algorithm} given access to the quantum examples $\sum_x\sqrt{\mathcal{D}_x}\ket{x,c(x)}$. The quantum sample complexity is the number of quantum examples used by the quantum learner to learn $\Cc$ (on the hardest $c\in \Cc$ and distribution $\mathcal{D}$) and the \emph{time complexity} of a quantum algorithm is the total number of gates involved (i.e., the number of gates it takes to implement various unitaries during the quantum algorithm) as well as the number of gates it takes to prepare quantum states.  The remaining aspects of the quantum PAC learner is defined analogous to the classical PAC model. For more on this subject, the interested reader is referred to~\cite{arunachalam:quantumsurveylearning}. We now define what it means for an algorithm $\A$ to be a strong and weak learner for a concept class $\Cc$.

	\begin{definition} [Weak  learner]
	\label{def:QAB_weak_learner}
	Let $\Cc=\cup_{n\geq 1}\Cc_n$ be a concept class. We say~$\A$ is a \emph{weak (quantum) learner} for $\Cc$ if it satisfies the following: there exists a polynomial $p$ such that for  all $n\geq 1$, for all $c\in \Cc_n$ and distributions $\mathcal{D}:\01^n\rightarrow [0,1]$, algorithm $\A$, given $n$ and (quantum) query access to $c$, with probability $\geq 2/3$, outputs a hypothesis $h:\01^n\rightarrow \01$ satisfying
	\begin{align}
	    \label{eq:weakhypothesis}
	  \Pr_{x\sim \mathcal{D}} [h(x)=c(x)] \geq \frac{1}{2}+\frac{1}{p(n)}.
	\end{align}
	\end{definition}

    	\begin{definition} [Strong learner]
	\label{def:QAB_strong_learner}
	Let $\Cc=\cup_{n\geq 1}\Cc_n$ be a concept class. We say~$\A$ is a \emph{strong (quantum) learner} for $\Cc$ if it satisfies the following:  for all $n\geq 1$, $c\in \Cc_n$ and distributions $\mathcal{D}:\01^n\rightarrow [0,1]$, algorithm $\A$, given $n$ and (quantum) query access to $c$, with probability $\geq 2/3$, outputs a hypothesis $h:\01^n\rightarrow \01$ satisfying
		\begin{align}
	    \label{eq:stronghypothesis}
	\Pr_{x\sim \mathcal{D}} [h(x)=c(x)] \geq \frac{2}{3}.
	\end{align}

	\end{definition}

Throughout this paper, we will assume that we have classical or quantum query access to the output  hypothesis $h$. Similarly, we say $h$ is a \emph{weak hypothesis} (resp.~\emph{strong hypothesis}) under $\mathcal{D}$ if $h$ satisfies Eq.~\eqref{eq:weakhypothesis} (resp.~Eq.~\eqref{eq:stronghypothesis}). We now define the Vapnik-Chervonenkis dimension (also referred to as VC dimension)~\cite{vapnik:vcdimension}.
	
\begin{definition} ($\VC$ dimension~\cite{vapnik:vcdimension})
 Fix a concept class $\Cc$ over $\01^n$. A set $\Se=\{s_1,\ldots,s_t\}\subseteq \01^n$ is said to be \emph{shattered} by a concept class $\Cc$ if  $\{(c({s_1}) \cdots c({s_t})) : c\in \Cc\} =\{-1,1\}^{t}$. In other words, for every labeling $\ell\in \{-1,1\}^{t}$, there exists a $c\in \Cc$ such that $(c({s_1}) \cdots c({s_t}))=\ell$. The $\VC$ dimension of $\Cc$ (denoted by $\VC(\Cc)$) is the size of the largest $\Se\subseteq \01^n$ that is shattered by~$\Cc$.  
\end{definition}

We now define  two important misclassification errors which we will encounter often. Suppose an algorithm $\A$ is given a set of labelled examples $S=\{(x_1,y_1),\ldots,(x_M,y_M)\}$ where $(x_i,y_i)\in \01^n\times \{-1,1\} $ is drawn from a joint distribution $\mathcal{D}:\01^n\times \{-1,1\} \rightarrow [0,1]$ and suppose $\A$ outputs a hypothesis $h:\01^n\rightarrow \{-1,1\} $.
  The  \emph{training error} of $h$ is defined as the error of $h$ on the \emph{training set} S, i.e., 
$$
\text{training error of } h= \frac{1}{M} \sum_{i=1}^M [h(x_i)\neq y_i]. 
$$
Ultimately, the goal of $\A$ should be to do well on labelled examples $(x,y) \notin S$ where $(x,y)$ is sampled from the same distribution $\mathcal{D}:\01^n\times \{-1,1\} \rightarrow [0,1]$ that generated the training set $S$. In order to quantify the \emph{goodness} of the hypothesis $h$, the \emph{true} error or the \emph{generalization error} is defined as
$$
\text{generalization error of } h= \Pr_{(x,y)\sim \mathcal{D}} [h(x)\neq y]. 
$$

\subsection{Required claims}
In order to prove our main results, we will use the following well-known results. 
  
  \begin{theorem}[Amplitude Amplification~\cite{Brassard:AmpEst}]
   \label{thm:amp_amplification}
   Let $p,a,a'>0$. There is a quantum algorithm $\A$ that satisfies the following: given access to a unitary $U$ such that $U\ket{0}=\ket{\psi}$ where $\ket{\psi}=\sqrt{p}\ket{\psi_0}+\sqrt{1-p}\ket{\psi_1}$ for an \emph{unknown} $p>a$ and  $\ket{\psi_0},\ket{\psi_1}$ are orthogonal quantum states, $\A$ makes an expected number of $\Theta(\sqrt{a'/a})$ queries to $U,U^{-1}$  and outputs~$\ket{\psi_0}$ with probability $a'>0$.
   \end{theorem}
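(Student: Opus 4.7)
The plan is to instantiate the standard amplitude amplification construction of Brassard, H\o{}yer, Mosca and Tapp. First I would introduce the Grover-like operator $G = -\,U R_0 U^{-1} R_{\psi_0}$, where $R_0 = I - 2\kb{0}$ is the reflection about $\ket{0}$ and $R_{\psi_0} = I - 2\kb{\psi_0}$ is the reflection about the target subspace (implemented in the usual way via a flag-checking oracle that we may take to be implicit in $U$). A direct calculation in the real two-dimensional subspace spanned by $\{\ket{\psi_0},\ket{\psi_1}\}$ shows that $G$ acts as a rotation by angle $2\theta$, where $\theta = \arcsin(\sqrt{p})$; hence after $k$ Grover iterations,
$$
G^k\,\ket{\psi} \;=\; \sin\bigl((2k+1)\theta\bigr)\,\ket{\psi_0} + \cos\bigl((2k+1)\theta\bigr)\,\ket{\psi_1},
$$
and each iteration uses exactly one query each to $U$ and $U^{-1}$.

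Next, because $p$ is unknown I cannot fix a single $k$ that lands precisely on amplitude $\sqrt{a'}$. I would invoke the standard randomized-rounds trick: pick $K = \lceil c\sqrt{a'/a}\,\rceil$ for an absolute constant $c$, sample $k$ uniformly from $\{0,1,\ldots,K-1\}$, apply $G^k$ to $\ket{\psi}$, and measure in the $\{\ket{\psi_0},\ket{\psi_1}\}$ basis. Averaging $\sin^2((2k+1)\theta)$ over $k$ gives the closed form
$$
\frac{1}{K}\sum_{k=0}^{K-1}\sin^2\bigl((2k+1)\theta\bigr) \;=\; \frac{1}{2} \;-\; \frac{\sin(4K\theta)}{4K\,\sin(2\theta)},
$$
and combining this identity with the lower bound $\theta \geq \arcsin(\sqrt{a}) \geq \sqrt{a}$ coming from $p>a$ shows that the expected success probability is $\Omega(a')$. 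The expected query cost is $2\,\E[k] = \Theta(K) = \Theta(\sqrt{a'/a})$, matching the claim.

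The main obstacle I anticipate is controlling the estimate in the ``overshoot'' regime $p \gg a$, where the rotation angle $(2k+1)\theta$ can cross $\pi/2$ and the per-$k$ success probability oscillates in $k$; the whole point of randomizing $k$ is to average out this oscillation. Concretely, the oscillatory correction $\sin(4K\theta)/(4K\sin 2\theta)$ is bounded in magnitude by $1/(4K\sin 2\theta)$, which is strictly less than $1/2$ once $K$ is a sufficiently large constant multiple of $1/\sqrt{a}$; in the complementary small regime $K\theta \lesssim 1$, a Taylor expansion of both sines gives expected success probability $\Theta(K^2 p) \geq \Theta(K^2 a) = \Theta(a')$. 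Combining the two estimates with $c$ tuned delivers the bound. Finally, to turn the ``at least $a'$'' guarantee into ``exactly $a'$'' as stated, I would append a cheap classical coin-flip rejection step after the measurement, which does not affect the $\Theta(\sqrt{a'/a})$ query count.
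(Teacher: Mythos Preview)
The paper does not give its own proof of this statement; it is quoted in the preliminaries as a known result of Brassard, H{\o}yer, Mosca and Tapp and used as a black box, so there is no in-paper argument to compare against.

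Your sketch is the standard BHMT construction, and the two-dimensional rotation picture together with the averaging identity
\[
\frac{1}{K}\sum_{k=0}^{K-1}\sin^2\bigl((2k+1)\theta\bigr)=\frac{1}{2}-\frac{\sin(4K\theta)}{4K\sin(2\theta)}
\]
is exactly right. There is one small gap: from the randomized-$k$ analysis you only extract success probability $\Omega(a')$ (in the overshoot regime the average is capped near $1/2$), not $\ge a'$. When $a'$ exceeds a fixed constant, your final ``coin-flip rejection'' step cannot by itself upgrade $\Omega(a')$ to $a'$; rejection can only lower a success probability, not raise it. The standard remedy is the exponential-search variant: run rounds with geometrically increasing $K$, verify after each round, and stop on the first success. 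This gives a Las Vegas procedure with expected cost $O(1/\sqrt{p})\le O(1/\sqrt{a})$, matching $\Theta(\sqrt{a'/a})$ for any constant $a'<1$, and then a rejection step is legitimate. With that amendment the argument is complete.
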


    \begin{theorem}[Amplitude Estimation \cite{Brassard:AmpEst}]
    \label{thm:amplitude_estimation}
   There is a quantum algorithm $\A$ that satisfies the~following: given access to a unitary $U$ such that $U \ket{0}= \ket{\psi}$ where
        $\ket{\psi} = \sqrt{a} \ket{\psi_0} + \sqrt{1-a} \ket{\psi_1}$
    and $\ket{\psi_0}$, $\ket{\psi_1}$ are~orthogonal quantum states, $\A$ makes $M$ queries to $U$ and $U^{-1}$  and with probability $\geq 2/3$, outputs~$\widetilde{a}$ such that
    \begin{align}
    \label{eq:approxofatilde}
        \vert \widetilde{a} - a \vert \leq 2 \pi \frac{\sqrt{a(1-a)}}{M} + \frac{\pi^2}{M^2}.
    \end{align}
    \end{theorem}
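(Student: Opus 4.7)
The plan is to adapt the standard Brassard–Høyer–Mosca–Tapp amplitude estimation construction, namely to build a Grover-like rotation operator whose eigenphases encode $a$ and then extract the phase via quantum phase estimation (QPE). Write $a=\sin^2\theta$ with $\theta\in[0,\pi/2]$, and consider the two-dimensional subspace $V=\spann\{\ket{\psi_0},\ket{\psi_1}\}$. In this subspace, $\ket{\psi}=\sin\theta\,\ket{\psi_0}+\cos\theta\,\ket{\psi_1}$.

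First, I would define the reflections $S_0=\Id-2\kb{0}$ (reflection about $\ket{0}$) and $S_{\psi_0}=\Id-2\kb{\psi_0}$ (reflection about the "good" subspace), and form the Grover-type operator $G=-U S_0 U^{-1} S_{\psi_0}$. A direct computation shows that $G$ preserves $V$ and, restricted to $V$, acts as a rotation by angle $2\theta$; its eigenvalues on $V$ are $e^{\pm 2i\theta}$, with $\ket{\psi}$ expressible as an equal-norm superposition of the two corresponding eigenvectors $\ket{w_{\pm}}$. Each application of $G$ uses one query to $U$ and one to $U^{-1}$, and $S_{\psi_0}$ is a "free" reflection (it acts only on the register that distinguishes the good subspace, which in the usual formulation is given by a known marking oracle).

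Next, I would run QPE with $M$ applications of controlled-$G$ on the state $\ket{\psi}=U\ket{0}$. QPE outputs (with probability at least $8/\pi^2>2/3$, which can be boosted by median amplification) an estimate $\widetilde\theta/2\pi$ of one of the two eigenphases $\pm\theta/\pi$ satisfying $|\widetilde\theta-\theta|\leq \pi/M$; taking $\widetilde\theta\in[0,\pi/2]$ (by folding) yields such an estimate of $\theta$ itself. The output is then $\widetilde a=\sin^2\widetilde\theta$.

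Finally, I would translate the additive error in $\theta$ into the advertised error on $a$. By the mean value theorem,
\begin{equation*}
|\widetilde a - a|=|\sin^2\widetilde\theta - \sin^2\theta|\leq |2\sin\theta\cos\theta|\cdot|\widetilde\theta-\theta|+|\widetilde\theta-\theta|^2
= 2\sqrt{a(1-a)}\,|\widetilde\theta-\theta|+|\widetilde\theta-\theta|^2,
\end{equation*}
so plugging in $|\widetilde\theta-\theta|\leq \pi/M$ gives $|\widetilde a - a|\leq 2\pi\sqrt{a(1-a)}/M+\pi^2/M^2$, which is exactly the stated bound. The only real subtlety is ensuring the $2/3$ success probability of QPE and handling the sign ambiguity in which eigenphase is measured, both of which are absorbed by taking the fold $\widetilde\theta\mapsto \min(\widetilde\theta,\pi-\widetilde\theta)$ and (if needed) median-of-$O(1)$ repetitions; the remaining calculation is the routine trigonometric bound above.
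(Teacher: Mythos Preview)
Your proof is correct and follows the standard Brassard--H{\o}yer--Mosca--Tapp construction. Note, however, that the paper does not give its own proof of this statement: Theorem~\ref{thm:amplitude_estimation} is stated in the preliminaries as a known result with a citation to~\cite{Brassard:AmpEst}, so there is no in-paper proof to compare against. Your argument is exactly the one in that reference: build the Grover rotation $G=-US_0U^{-1}S_{\psi_0}$, observe it rotates by $2\theta$ on $\spann\{\ket{\psi_0},\ket{\psi_1}\}$ with $a=\sin^2\theta$, run QPE with $M$ controlled applications to get $|\widetilde\theta-\theta|\le\pi/M$, and translate to the bound on $|\widetilde a-a|$.

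Two minor remarks. First, the inequality you attribute to the mean value theorem is really a second-order Taylor bound (or, equivalently, the identity $\sin^2\widetilde\theta-\sin^2\theta=\sin(\widetilde\theta+\theta)\sin(\widetilde\theta-\theta)$ together with $|\sin(\widetilde\theta+\theta)|\le|\sin 2\theta|+|\widetilde\theta-\theta|$); the conclusion is the same. Second, be careful with the query count: each application of $G$ uses one call to $U$ and one to $U^{-1}$, so ``$M$ queries to $U$ and $U^{-1}$'' in the theorem statement should be read as $M$ applications of the pair, matching your $M$ controlled-$G$ applications.
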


\begin{theorem}[Multiplicative amplitude estimation \cite{ambainis2010multiplicativeAE}] 
    \label{thm:multiplicativeamplitude_estimation}
   Let $c\in (0,1]$. There is a quantum algorithm $\A$ that satisfies the following: given access to a unitary  $U$ such that $U \ket{0}= \ket{\psi}$ where
        $\ket{\psi} = \sqrt{a} \ket{\psi_0} + \sqrt{1-a} \ket{\psi_1}$
    and $\ket{\psi_0}$, $\ket{\psi_1}$ are orthogonal quantum states and promised that either $a=0$ or $a\geq p$, with probability $\geq 1-\delta$, $\A$  outputs an estimate $\widetilde{a}$ satisfying
         $|a - \widetilde{a} | \leq c\cdot  \widetilde{a}$ if $a \geq p$, and
        $\widetilde{a} = 0$ if $ a = 0$.
    %
    The total number of queries made by $\A$ to $U$, $U^{-1}$ is 
    \begin{align}
        O\Bigg( \frac{\log (1/\delta)}{c} \Bigg(1 + \log \log \frac{1}{p} \Bigg) \sqrt{\frac{1}{\max \{a,p \}}} \Bigg).
 \end{align} 
    \end{theorem}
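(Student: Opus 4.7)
The plan is to upgrade the additive guarantee of Theorem~\ref{thm:amplitude_estimation} to a multiplicative one under the promise $a\in\{0\}\cup[p,1]$. Running amplitude estimation with $M$ queries returns $\widetilde{a}$ with $|\widetilde{a}-a|\le 2\pi\sqrt{a(1-a)}/M+\pi^2/M^2$; a short calculation shows that this additive bound becomes a relative error $\le c\widetilde{a}$ precisely when $M\gtrsim 1/(c\sqrt{a})$. Since $a$ is unknown (only the promise $a=0$ or $a\ge p$ is available), the algorithm must discover the right scale of $M$ on the fly.

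First, I would run an outer search over the query budget. I run amplitude estimation at a growing sequence of budgets $M_0<M_1<\cdots$, capped at $M_K=\Theta(1/(c\sqrt p))$ so the final round is always sufficient. At each round I test a predicate of the form ``$\widetilde{a}_k\gtrsim 1/(cM_k)^2$''; if it holds, combining the additive bound of Theorem~\ref{thm:amplitude_estimation} with a short case analysis on how far $\widetilde{a}_k$ can be from $a$ yields $|\widetilde{a}_k-a|\le c\widetilde{a}_k$, so we terminate and output $\widetilde{a}_k$. If no round triggers the predicate we output $0$. Correctness follows because in the $a\ge p$ branch the predicate must fire by the time $M_k$ reaches $\Theta(1/(c\sqrt{a}))$, whereas in the $a=0$ branch amplitude estimation returns exactly $0$ and the predicate never fires.

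Next, I amplify the success probability. Theorem~\ref{thm:amplitude_estimation} only succeeds with probability $2/3$, so at each budget $M_k$ I run it $O(\log(1/\delta))$ times and take the median; this drives the per-round failure probability below the slack needed for a union bound over all rounds. The median-of-means boosting is standard for unbiased location estimators and contributes only the $\log(1/\delta)$ factor appearing in the target complexity.

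Finally, to obtain the $(1+\log\log(1/p))$ factor rather than the $\log(1/p)$ factor that a naive doubling schedule $M_k=2^k$ would yield, I would adopt a two-phase schedule: phase one grows the budget super-exponentially, $M_k=2^{2^k}$, locating the correct scale of $M$ in $O(\log\log(1/p))$ rounds; phase two does a short geometric refinement inside the interval identified by phase one. The main obstacle is that a single super-exponential step can overshoot the optimal $M$ from $1/\sqrt{a}$ all the way up to $1/a$, so one has to argue that the phase-two refinement recovers the tight $\sqrt{1/\max\{a,p\}}$ scaling while contributing only geometric overhead in queries; once the refinement cost is shown to be dominated by its final round, the total query count collapses to the claimed $\widetilde O(\log(1/\delta)/c\cdot(1+\log\log(1/p))\sqrt{1/\max\{a,p\}})$. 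The remaining bookkeeping, including the failure-probability union bound and the geometric series for the query total, is routine.
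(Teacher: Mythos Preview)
The paper does not prove Theorem~\ref{thm:multiplicativeamplitude_estimation}; it is quoted from Ambainis~\cite{ambainis2010multiplicativeAE} as a black-box tool. (The paper does prove its own variant, Lemma~\ref{lem:updatingapproxdistribution}, using a straightforward doubling schedule.) So there is nothing to compare against here, but let me comment on your proposal on its own terms.

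Your first two paragraphs are correct and already contain the whole argument, but your diagnosis of where the $(1+\log\log(1/p))$ factor comes from is wrong, and the two-phase super-exponential schedule you propose to manufacture it does not work for exactly the reason you flag and then wave away. If $M_k=2^{2^k}$, the first round that exceeds the optimal budget $M^\star\asymp 1/(c\sqrt a)$ can have $M_k\asymp (M^\star)^2\asymp 1/(c^2 a)$, and those queries are already spent before any phase-two refinement begins; no downstream bookkeeping can ``recover'' the $\sqrt{1/a}$ scaling once you have paid $1/a$.

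The correct mechanism is simpler. Keep the plain doubling schedule $M_k=2^k$ capped at $M_K=\Theta(1/(c\sqrt p))$. The geometric sum $\sum_{k\le k^\star} M_k$ is dominated by its last term $M_{k^\star}=\Theta(1/(c\sqrt{\max\{a,p\}}))$, so the query total is already tight \emph{before} any confidence boosting. The $(1+\log\log(1/p))$ factor then enters only through the median step: there are at most $K=O(\log(1/p))$ rounds, so to union-bound you repeat each round $O(\log(K/\delta))=O(\log(1/\delta)+\log\log(1/p))$ times, which is at most $O(\log(1/\delta)(1+\log\log(1/p)))$. In short, the $\log\log(1/p)$ is the price of reliability over logarithmically many doubling rounds, not the number of rounds in a super-exponential search.
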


    \section{Classically boosting quantum machine learning algorithms}
    \label{sec:classicalAdaBoost}
    
    In this section, we describe the classical AdaBoost algorithm and explain how one can use AdaBoost to improve a weak quantum learner to a strong quantum learner. 
    \subsection{Classical AdaBoost}
    We begin with presenting the classical AdaBoost algorithm. AdaBoost achieves the following goal: suppose $\A$ is a $\gamma$-weak PAC learner for a concept class $\Cc$ (think of $\gamma = 1/ \poly(n)$).
    Then, for a fixed unknown distribution~$\mathcal{D}$, can we use $\A$ multiple times to output a hypothesis $H$ such that $\Pr_{x\sim \mathcal{D}}[H(x)=c(x)]\geq 2/3$?  Freund and Schapire~\cite{freund:boosting} gave the following simple algorithm that outputs such a strong hypothesis.
    \begin{algorithm}[H]
   		\caption{Classical AdaBoost}
    \label{alg:classicalAdaBoost}
		\textbf{Input:} Classical weak learner $\mathcal{A}$, query access to training samples $S =\{(x_1,c(x_1)), \ldots, (x_M,c(x_M))\}$, where $x_i\sim \mathcal{D}$ and $\mathcal{D}:\01^n\rightarrow [0,1]$ is an \emph{unknown} distribution.
		\\[1mm]
\textbf{Initialize:} Let $D^1$ be the uniform distribution over $S$.

		\begin{algorithmic}[1]
		 \For{t = 1 to $T$  }
		 \parState{%
		 Train a weak learner $\mathcal{A}$ on the distribution $D^t$ using the labelled examples $S$. Suppose we obtain a hypothesis $h_t$.
		 }
		 
		 \State Compute the weighted error $\varepsilon_t = \sum_{x\in S} D^t_x [h_t(x)\neq c(x)]$, and let $\alpha_t = \frac{1}{2} \ln \Big( \frac{1 - \varepsilon_t }{\varepsilon_t} \Big)$. 
		
		 \State Update the distribution $D^{t}_x$  as follows:
		  \begin{align*}
          D^{t+1}_x &= \frac{D^{t}_x}{Z_{t}}
          \times \begin{cases} 
          e^{-\alpha_t} & \text{ if } h_{t}(x) = c(x) \\
          e^{\alpha_t} & \text{ otherwise }
          \end{cases}
           \\
           &= \frac{D^t_x \exp{\big(-c(x) \alpha_t h_t(x) \big)} }{Z_t},
           \end{align*}
\qquad  where $Z_t=\sum_{x \in S} D^t_x \exp{\big(-c(x) \alpha_t h_t(x))} $.
		 \EndFor	
		\end{algorithmic}
				\textbf{Output:} Hypothesis $H$ defined as $H(x) = \sgn \Big( \sum_{t=1}^T \alpha_t h_t(x) \Big)$ for all $x\in \01^n$.
			\end{algorithm}
	We do not prove why the output hypothesis~$H$ is a strong hypothesis  and simply state the main result of Freund and Schapire~\cite{freund:boosting,schapire:foundations}. Suppose $T\geq (\log M)/\gamma^2$, then the output~$H$ of Algorithm~\ref{alg:classicalAdaBoost} with high probability (over the randomness of the algorithm and the training set~$S$) has \emph{zero training error} (i.e., $H(x_i)=c(x_i)$ for every $i\in [M]$). A priori, it might seem that this task is easy to achieve, since we could simply construct a function $g$ that satisfies $g(x_i)=c(x_i)$ for every $i\in [M]$ given explicit access to $S=\{(x_i,c(x_i))\}_{i \in [M]}$. However, recall that the goal of AdaBoost is to output a strong hypothesis $H$ that satisfies $\Pr_{x\sim \mathcal{D}}[H(x)=c(x)]\geq 2/3$ (where $\mathcal{D}$ is the unknown distribution according to which $S$ is generated in Algorithm~\ref{alg:classicalAdaBoost}) and it is not clear whether $g$ satisfies this condition. Freund and Schapire~\cite{freund:boosting} showed the surprising property that the output of classical AdaBoost $H$ (i.e., a weighted combination of the hypotheses generated in each iteration) is in fact a strong hypothesis, provided $M$ is \emph{sufficiently large}. In particular, Freund and Schapire~\cite{freund:boosting} proved the following theorem.
    
    \begin{theorem} 
    [{\cite[Theorems~4.3 and 4.6]{schapire:foundations}} ]
    \label{thm:goingfromtrainingtogeneralization}
    Fix $\eta,\gamma>0$. Let $\Cc=\cup_{n\geq 1}\Cc_n$ be a concept class and~$\A$ be a~$\gamma$-weak PAC learner for $\Cc$ that takes time $R(\Cc)$. Let $n\geq 1$, $\mathcal{D}:\01^n\rightarrow [0,1]$ be an unknown distribution, $c\in \Cc_n$ be the unknown target concept and 
   $$
   M = \Bigg\lceil\frac{\VC(\Cc)}{\gamma^2}\cdot \frac{\log(\VC(\Cc)/\gamma^2)}{\eta^2 }\Bigg\rceil.
   $$
   Suppose we run Algorithm~\ref{alg:classicalAdaBoost} for $T\geq ((\log M)\cdot \log(1/\delta))/(2\gamma^2)$ rounds, then with probability $\geq 1-\delta$ (over the randomness of the algorithm and the random examples $\{(x_i,c(x_i))\}_{i \in [M]}$ where $(x_i,c(x_i))\sim \mathcal{D}$), we obtain a hypothesis $H$ that has zero training error\footnote{Suppose $H$ has training error $\delta$, then the generalization error  becomes $   \Pr_{x\sim \mathcal{D}} [H(x)=c(x)]\geq 1-\delta-\eta$.} and small generalization error
   $$
   \Pr_{x\sim \mathcal{D}} [H(x)=c(x)]\geq 1-\eta.
   $$
   Moreover the time complexity of the classical AdaBoost algorithm is 
   $$
   \widetilde{O}(R(\Cc)\cdot TM n)=\widetilde{O}\Bigg(\frac{\VC(\Cc)}{\eta^2} \cdot R(\Cc) \cdot \frac{n}{\gamma^4}\cdot\log(1/\delta)\Bigg).
   $$
\end{theorem}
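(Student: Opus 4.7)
My plan is to prove this in two stages: first bound the training error of $H$ via the standard AdaBoost potential argument, then convert this into a generalization bound via a VC-theoretic uniform-convergence argument.

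For the training error, the idea is to track the normalization constants $Z_t$. Unrolling the distribution recurrence in Algorithm~\ref{alg:classicalAdaBoost} gives
\[
D^{T+1}_x \;=\; \frac{1}{M}\cdot\frac{\exp\!\big(-c(x)\sum_{t=1}^T \alpha_t h_t(x)\big)}{\prod_{t=1}^T Z_t}.
\]
Since $D^{T+1}$ sums to one on $S$ and $[H(x_i)\neq c(x_i)]\leq \exp(-c(x_i)\sum_t\alpha_t h_t(x_i))$, averaging over $i\in[M]$ shows that the training error of $H$ is at most $\prod_t Z_t$. A direct calculation with $\alpha_t=\tfrac12\ln((1-\varepsilon_t)/\varepsilon_t)$ gives $Z_t=2\sqrt{\varepsilon_t(1-\varepsilon_t)}$. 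To control this I would first amplify the weak learner's per-round success probability: by running $\A$ $O(\log(T/\delta))$ times in round $t$ and keeping the hypothesis whose empirical weighted error (estimated to accuracy $\gamma/2$ from $O(\log(T/\delta)/\gamma^2)$ samples from $D^t$) is smallest, a Chernoff bound ensures $\varepsilon_t\leq 1/2-\gamma$ with probability $\geq 1-\delta/(2T)$. Then $Z_t\leq\sqrt{1-4\gamma^2}\leq e^{-2\gamma^2}$, and a union bound over $T$ rounds gives training error at most $e^{-2T\gamma^2}$ with probability $\geq 1-\delta/2$; choosing $T\geq(\log M)/(2\gamma^2)$ makes this strictly smaller than $1/M$, which, being an integer multiple of $1/M$, must equal zero.

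For the generalization step, I would invoke uniform convergence over the class $\mathcal{H}_T$ of sign-combinations $\sgn(\sum_{t=1}^T \alpha_t h_t(x))$ with $h_t\in\Cc$. A standard composition argument (\cite{schapire:foundations}) bounds $\VC(\mathcal{H}_T)=O(T\cdot\VC(\Cc)\cdot\log(T\cdot\VC(\Cc)))$. The classical Vapnik--Chervonenkis theorem then asserts that if $M=\Omega((\VC(\mathcal{H}_T)/\eta^2)\log(1/\delta))$, with probability $\geq 1-\delta/2$ \emph{every} hypothesis in $\mathcal{H}_T$ with zero training error on the $M$ random samples has generalization error at most $\eta$. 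Substituting $T=O((\log M)\log(1/\delta)/\gamma^2)$ and solving for $M$ reproduces the bound $M=\lceil\VC(\Cc)\log(\VC(\Cc)/\gamma^2)/(\gamma^2\eta^2)\rceil$ in the statement, up to polylog factors absorbed into the ceiling. A final union bound combines the two $\delta/2$ events to obtain $\Pr_{x\sim\mathcal{D}}[H(x)=c(x)]\geq 1-\eta$ with probability $\geq 1-\delta$.

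The time complexity is a straightforward bookkeeping: each of the $T$ rounds invokes $\A$ at cost $R(\Cc)$ (with an $O(\log(T/\delta))$ amplification overhead), estimates $\varepsilon_t$ by summing $M$ indicators in time $O(M)$, and updates the $M$ distribution entries using $O(n)$-bit arithmetic, yielding total time $\widetilde{O}(TMn\cdot R(\Cc))=\widetilde{O}(\VC(\Cc)\cdot R(\Cc)\cdot n\log(1/\delta)/(\eta^2\gamma^4))$ as claimed. The main technical obstacle I anticipate is tightening the VC-dimension bound on $\mathcal{H}_T$ so that the sample size $M$ in the statement is actually sufficient — the naive composition bound loses logarithmic factors that must be absorbed into the $\log(\VC(\Cc)/\gamma^2)$ term, and one must verify that the particular $\alpha_t,h_t$ produced by the algorithm are captured by a class whose VC dimension depends only on $\VC(\Cc)$ and $T$ (and not on the magnitudes of the $\alpha_t$). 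The remaining pieces — the potential-function calculation and the standard PAC concentration bound — are routine.
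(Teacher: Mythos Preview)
The paper does not actually prove this theorem: it is stated as a citation of \cite[Theorems~4.3 and 4.6]{schapire:foundations}, preceded by the sentence ``We do not prove why the output hypothesis~$H$ is a strong hypothesis and simply state the main result of Freund and Schapire.'' So there is no in-paper proof to compare against; the result is imported wholesale from the boosting textbook.

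Your sketch is precisely the standard argument underlying those cited theorems: the potential/product-of-$Z_t$ bound for the training error, followed by a VC uniform-convergence step applied to the class of $T$-fold weighted sign combinations. The one place you correctly flag as delicate --- that the composition bound $\VC(\mathcal{H}_T)=O(T\cdot\VC(\Cc)\log(T\cdot\VC(\Cc)))$ introduces extra logarithmic factors which must be reconciled with the exact form of $M$ in the statement --- is real, and the paper sidesteps it entirely by quoting the finished bound rather than deriving $M$ from first principles. If you actually carry out the substitution you will find the stated $M$ is only correct up to polylogarithmic slack (which the paper implicitly absorbs into the $\widetilde{O}$ on the time complexity but not into the displayed formula for $M$); this is a known looseness in how the theorem is quoted, not an error in your approach.
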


\subsection{Boosting quantum machine learners}
We now describe how classical AdaBoost can improve the performance of \emph{quantum} machine learning algorithms. Suppose $\A$ is a quantum PAC learner that learns a concept class $\Cc$ in time $Q(\Cc)$. Can we use $\A$ multiple times to construct a strong quantum learner? Indeed, this is possible using classical AdaBoost which we describe below.

\begin{algorithm}[H]
   		\caption{Classical boosting of weak quantum learners}
    \label{alg:classical-quantumAdaBoost}
		\textbf{Input:} \emph{Quantum} weak learner $\mathcal{A}$,  \emph{quantum} query access to training samples $S =\{(x_1,c(x_1)), \ldots, (x_M,c(x_M))\}$ where $x_i\sim \mathcal{D}$ and $\mathcal{D}:\01^n\rightarrow [0,1]$ is an unknown distribution.
		\\[1mm]
\textbf{Initialize:} Let $D^1$ be the uniform distribution over $S$.

		\begin{algorithmic}[1]
		 \For{t = 1 to $T$  (assume classical query access to $h_1,\ldots,h_{t-1}$ and knowledge of $D^1,\ldots,D^{t-1}$).}
		 \vspace{0.3mm}
		 
		 \State Prepare  $Q(\Cc)$ copies of $\ket{\psi_t   }=\sum_{x\in S} \sqrt{D^t_x}\ket{x}$. 
		 \vspace{0.3mm}
		
		 \State For every $\ket{\psi_t}$, make a quantum query to $S$ to produce $\ket{\psi'_t}=\sum_{x\in S} \sqrt{D^t_x}\ket{x,c(x)}$. 
		 \vspace{0.3mm}
		
		 \State Pass $\ket{\psi'_t}^{\otimes Q(\Cc)}$ to $\A$ and suppose $\A$ produces a hypothesis $h_t$.
		 
	 \vspace{0.3mm}
		
		 
		 \State Compute the weighted error $\varepsilon_t = \sum_{x\in S} D^t_x \cdot [h_t(x)\neq c(x)]$, and let $\alpha_t = \frac{1}{2} \ln \Big( \frac{1 - \varepsilon_t }{\varepsilon_t} \Big)$. 
		 %
		 \vspace{0.3mm}
		
		 \State Update the distribution $D^{t}_x$  as follows:
		  \begin{align*}
          D^{t+1}_x &= \frac{D^{t}_x}{Z_{t}}
          \times \begin{cases} 
          e^{-\alpha_t} & \text{ if } h_{t}(x) = c(x) \\
          e^{\alpha_t} & \text{ otherwise }
          \end{cases}
           \\
           &= \frac{D^t_x \exp{\big(-c(x) \alpha_t h_t(x) \big)} }{Z_t},
           \end{align*}
	         where $Z_t=\sum_{x \in S} D^t_x \exp{\big(-c(x) \alpha_t h_t(x))} $.
		 \EndFor	
		\end{algorithmic}
				\textbf{Output:} Hypothesis $H$ defined as $H(x) = \sgn \Big( \sum_{t=1}^T \alpha_t h_t(x) \Big)$ for all $x\in \01^n$.
	\end{algorithm}
The correctness of this algorithm  follows from the classical AdaBoost algorithm, since at each iteration, the output $h_t$ of the quantum learner is the same as classical AdaBoost. We simply analyze the time complexity of the algorithm here: in the worst-case, step~$2$ takes time $O(n^2MQ(\Cc))$ (note that one could potentially use tricks by Grover-Rudolph~\cite{groverrudolph:state}, Kaye-Mosca~\cite{kayemosca:state} to prepare $\ket{\psi_t}$ more efficiently), step~$3$ takes one quantum query and time $O(n \log M)$ (assuming we have an efficiently implementable quantum random-access-memory (QRAM), we discuss this further in Section~\ref{sec:describeboosting} after Eq.~\eqref{eq:QAB_distribution_quantumupdateC}), step~$4$ takes time~$Q(\Cc)$ (by assumption of $\A$), step $5$ involves making $M$ queries to $h_t$ and $O(n)$ operations to compute $\varepsilon_t,\alpha_t$ and step~$6$ takes time $O(n M)$ in order to update the distribution. Overall Algorithm~\ref{alg:classical-quantumAdaBoost} takes time $O(n^2 \cdot MQ(\Cc) \cdot T )$ time. Crucially, in Algorithm~\ref{alg:classical-quantumAdaBoost} a quantum computer is required  only to prepare the state $\ket{\psi_t}$ in every step and run the weak quantum learner. The remaining computation can be done on a classical device.
Putting everything together, we obtain the following theorem whose proof follows from Theorem~\ref{thm:goingfromtrainingtogeneralization}.

\begin{theorem}
    Fix $\eta,\gamma>0$. Let $\Cc=\cup_{n\geq 1}\Cc_n$ be a concept class and $\A$ be a $\gamma$-weak quantum PAC learner for $\Cc$ that takes time $Q(\Cc)$. Let $n\geq 1$ and  $c\in \Cc_n$ be the unknown target concept. Then the time complexity of  Algorithm~\ref{alg:classical-quantumAdaBoost} to produce a strong hypothesis~is 
    $$
    \widetilde{O}\Bigg(\frac{\VC(\Cc)}{\eta^2}\cdot Q(\Cc) \cdot \frac{n^2}{\gamma^4}\Bigg),
    $$
    Moreover, Algorithm~\ref{alg:classical-quantumAdaBoost} uses the quantum computer only for state preparation (in Step $2$) and to run the weak-quantum PAC learning  algorithm $\A$ (in Step $4$) and the remaining operations can be performed on a classical device.
\end{theorem}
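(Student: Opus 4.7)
The plan is to derive the bound in two essentially independent steps: first establishing correctness by reduction to classical AdaBoost (Theorem~\ref{thm:goingfromtrainingtogeneralization}), and then carefully accounting the per-iteration quantum cost and multiplying by $T$. The key conceptual observation is that the only role played by the quantum learner $\A$ inside Algorithm~\ref{alg:classical-quantumAdaBoost} is to produce a weak hypothesis $h_t$ for the distribution $D^t$ on $S$; since $\A$ is a $\gamma$-weak quantum PAC learner, whenever it is supplied with $Q(\Cc)$ copies of the quantum example state $\ket{\psi'_t}=\sum_{x\in S}\sqrt{D^t_x}\ket{x,c(x)}$ it will, with probability $\geq 2/3$, output an $h_t$ satisfying $\Pr_{x\sim D^t}[h_t(x)=c(x)]\geq 1/2+\gamma$. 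This is exactly the guarantee a classical weak learner would provide inside Algorithm~\ref{alg:classicalAdaBoost}, so from the perspective of the boosting analysis the two algorithms are indistinguishable.

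Given this, I would simply invoke Theorem~\ref{thm:goingfromtrainingtogeneralization} with $M=\lceil(\VC(\Cc)/\gamma^2)\log(\VC(\Cc)/\gamma^2)/\eta^2\rceil=\widetilde{O}(\VC(\Cc)/(\eta^2\gamma^2))$ and $T=O((\log M)/\gamma^2)=\widetilde{O}(1/\gamma^2)$ iterations, which yields zero training error and generalization error at most $\eta$ with high probability over the draw of $S$ and the internal randomness. A standard union bound amplification (repeating the weak learner $O(\log T)$ times per round and taking the majority, absorbed into the $\widetilde{O}$) handles the fact that $\A$ succeeds only with probability $2/3$ per invocation.

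For the time complexity I would itemize the cost of one iteration of Algorithm~\ref{alg:classical-quantumAdaBoost}. Step~$2$, preparing $Q(\Cc)$ copies of $\ket{\psi_t}=\sum_{x\in S}\sqrt{D^t_x}\ket{x}$, takes $\widetilde{O}(n^2 M\, Q(\Cc))$ time in the worst case via the standard Grover--Rudolph/Kaye--Mosca construction (writing the amplitudes explicitly from the classical table $D^t$). Step~$3$ costs one quantum query per copy plus $O(n\log M)$ overhead; Step~$4$ is $Q(\Cc)$ by assumption; Step~$5$ requires $M$ queries to $h_t$ plus $O(n)$ arithmetic to form $\varepsilon_t$ and $\alpha_t$; and Step~$6$ requires $O(nM)$ arithmetic to update $D^{t+1}$. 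The dominant term is Step~$2$, contributing $\widetilde{O}(n^2 M\, Q(\Cc))$ per iteration. Multiplying by $T=\widetilde{O}(1/\gamma^2)$ and substituting $M=\widetilde{O}(\VC(\Cc)/(\eta^2\gamma^2))$ gives the claimed overall bound $\widetilde{O}(\VC(\Cc)/\eta^2\cdot Q(\Cc)\cdot n^2/\gamma^4)$.

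The ``moreover'' clause is read off directly from the algorithm: a quantum device is only needed to prepare $\ket{\psi_t}$ in Step~$2$ and to execute $\A$ in Step~$4$; Steps~$3$,$5$,$6$ consist entirely of classical table lookups and arithmetic over a classical description of $D^t$ and $h_1,\ldots,h_{t-1}$. The only genuine obstacle I anticipate is mild: justifying the $\widetilde{O}(n^2 M)$ cost for state preparation without a QRAM assumption, which I would handle by citing Grover--Rudolph and noting that each amplitude $\sqrt{D^t_x}$ is computable in $\widetilde{O}(n)$ time from the stored classical table, so the binary-tree marginals needed for the preparation can be built in $\widetilde{O}(nM)$ classical preprocessing and queried in $\widetilde{O}(n)$ time per controlled rotation, giving $\widetilde{O}(n^2 M)$ per copy. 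Everything else is bookkeeping.
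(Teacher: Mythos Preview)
Your proposal is correct and follows essentially the same approach as the paper: correctness is inherited from Theorem~\ref{thm:goingfromtrainingtogeneralization} because the quantum learner plays exactly the role of the classical weak learner inside AdaBoost, and the time bound comes from the same step-by-step accounting (with Step~$2$ dominating at $O(n^2 M\, Q(\Cc))$ per round, multiplied by $T=\widetilde{O}(1/\gamma^2)$ rounds and $M=\widetilde{O}(\VC(\Cc)/(\eta^2\gamma^2))$). One minor slip: Step~$3$ is a quantum query to $S$ applied to the superposition $\ket{\psi_t}$, not a classical table lookup, so it should be grouped with the state-preparation side of the ledger rather than with Steps~$5$,$6$; this does not affect the bound or the ``moreover'' clause, since the theorem statement itself folds Step~$3$ into the quantum state-preparation phase.
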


\section{Quantum Boosting}
\label{sec:quantumboost}

In the previous section, we used \emph{classical} techniques to boost a weak quantum learner to a strong quantum learner. In this section, we use \emph{quantum} techniques in order to improve the time complexity of AdaBoost. Like in classical AdaBoost, we break the analysis into two stages. Stage (1) is a quantum algorithm that reduces training error: produces a hypothesis that does well on the training set and Stage (2) reduces generalization error: we show that for a sufficiently large training set, not only does the hypothesis output in Stage (1) has a small training error, but also has a small generalization error. In Section~\ref{sec:suboptimalsection}, we first present a quantum algorithm for boosting and in Section~\ref{sec:generalizationofquantum}, we show why the hypothesis generated in Section~\ref{sec:suboptimalsection} is a strong~hypothesis.

\subsection{Reducing the training error}
\label{sec:suboptimalsection}
The bulk of the technical work in our quantum boosting algorithm lies in reducing the training error and we devote this entire section to proving it. We now state the main theorem for Stage~(1) of our quantum boosting algorithm. 

\begin{theorem} 
    \label{thm:quantum_AdaBoost}
    Let $\gamma>0$. Let $\Cc=\cup_{n\geq 1}\Cc_n$ be a concept class and $\A$ be a $\gamma$-weak quantum PAC learner for~$\Cc$ that takes time~$Q(\Cc)$. Let $n\geq 1$,   $\mathcal{D}:\01^n\rightarrow [0,1]$ be an unknown distribution, $c\in \Cc_n$ be the unknown target concept and $M$ be sufficiently large.\footnote{We quantify what we mean by \emph{sufficiently large} in the next section, in particular in Theorem~\ref{finalcomplexityofquantumAdaBoost}.} Given a training set $S=\{(x_i,c(x_i))\}_{i \in [M]}$ where $x_i \sim \mathcal{D}$ and $c\in \Cc_n$, our quantum boosting algorithm takes time $\widetilde{O}(n^2 \sqrt{M}\cdot Q(\Cc)^{3/2})$,\footnote{Here $\widetilde{O}(\cdot)$ hides poly-logarithmic factors in $M$.} and with probability $\geq 2/3$, outputs a hypothesis~$H$ that has training error at most $1/10$.
	\end{theorem}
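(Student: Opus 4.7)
The plan is to follow the overall skeleton of classical AdaBoost (Algorithm~\ref{alg:classicalAdaBoost}), but to replace the exact computation of the weighted error $\varepsilon_t = \sum_{x \in S} D^t_x\,[h_t(x)\neq c(x)]$ by a quantum estimation subroutine, and to compensate for the resulting approximation in the distribution update itself. Concretely, I would set $T = \Theta((\log M)/\gamma^2)$, a threshold $\tau = \Theta(1/(QT^2))$, and a multiplicative accuracy $\delta = \Theta(\gamma/T)$. At iteration $t$, I maintain classical data describing the current (approximate) distribution $\widetilde{D}^t$, prepare $Q(\Cc)$ copies of $\ket{\Phi'_t} \propto \sum_{x\in S}\sqrt{\widetilde{D}^t_x}\,\ket{x,c(x)}$ (plus a normalizing garbage register $\ket{\chi_t}$ since $\widetilde{D}^t$ need not sum to $1$) using QRAM over the $M$ stored amplitudes, and feed these to $\A$ to get a hypothesis $h_t$. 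Then I invoke Theorem~\ref{thm:multiplicativeamplitude_estimation} (multiplicative amplitude estimation) on the unitary that prepares $\sum_x \sqrt{\widetilde{D}^t_x}\ket{x,c(x),h_t(x)}$ and flags the branches where $h_t(x)\neq c(x)$: this decides in time $\widetilde{O}(\sqrt{M/\max\{\varepsilon_t,\tau\}})$ whether $\varepsilon_t \geq \tau$ (the \emph{yes} instance) and returns $\varepsilon'_t$ with $|\varepsilon'_t-\varepsilon_t|\leq \delta \varepsilon'_t$, or reports $\varepsilon_t<\tau$ (the \emph{no} instance), in which case we simply set $\varepsilon'_t=\tau$. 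The distribution update is the ``yes''/``no'' rule of Eqs.~\eqref{eq:yesinstanceabstract}--\eqref{eq:noinstanceabstract}.

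The first technical step is to show that the modified update keeps $\widetilde{D}^{t+1}$ close to a genuine distribution. Using the definition of $Z_t = 2\sqrt{\varepsilon'_t(1-\varepsilon'_t)}$ in the yes case and a short direct calculation in the no case, together with $|\varepsilon'_t-\varepsilon_t|\leq \delta\varepsilon'_t$, I expand $\sum_x \widetilde{D}^{t+1}_x$ and show it lies in $[1-O(\delta),1]$; applying this bound inductively over $T$ rounds, each $\widetilde{D}^t$ has total mass in $[1-O(T\delta), 1]$, hence in $[1-O(\gamma),1]$. From this I would conclude $|\braket{\Phi_t}{\Phi'_t}|\geq 1-O(\delta)$ where $\ket{\Phi_t}$ is the ``ideal'' state built from a renormalization $\widehat{D}^t$ of $\widetilde{D}^t$. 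The second step is to argue that the weak quantum learner $\A$, which by Definition~\ref{def:QAB_weak_learner} outputs a weak hypothesis with probability $\geq 2/3$ when fed $Q(\Cc)$ copies of the exact quantum example, still outputs a weak hypothesis for $c$ under $\widehat{D}^t$ when fed $Q(\Cc)$ copies of $\ket{\Phi'_t}$: boosting the success probability of $\A$ to $1-1/(10T)$ costs only an $O(\log T)$ factor, and the total-variation distance between the measurement distributions on $\ket{\Phi_t}^{\otimes Q(\Cc)}$ and $\ket{\Phi'_t}^{\otimes Q(\Cc)}$ is at most $O(Q(\Cc)\cdot\delta)$ by the standard hybrid/telescoping bound on $\| \ket{\Phi_t}^{\otimes Q}-\ket{\Phi'_t}^{\otimes Q}\|$, which I will choose small enough (by shrinking $\delta$ by a $Q(\Cc)$ factor, accounting for the extra $Q(\Cc)^{1/2}$ in the final complexity) so a union bound over $T$ iterations leaves overall success probability $\geq 2/3$.

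The third and hardest step is the training-error analysis in the presence of the non-standard update of Eq.~\eqref{eq:noinstanceabstract}. Following the classical AdaBoost potential-function proof, I unroll the recursion to write the fraction of misclassified training points as at most $\prod_{t=1}^T Z_t'$, where $Z_t'$ is the (approximate) normalizer actually used. For yes rounds, $Z_t' \leq 2\sqrt{\varepsilon'_t(1-\varepsilon'_t)}(1+O(\delta)) \leq \sqrt{1-\gamma^2}(1+O(\delta))$ by the weak-learner guarantee $\varepsilon_t \leq 1/2-\gamma$ and a careful accounting of the $(1+2\delta)$ factor; for no rounds, I exploit the observation highlighted in the proof sketch that $\alpha'_t = \tfrac12\ln((1-\tau)/\tau)$ is large and the update assigns almost all weight to the misclassified examples, which improves (rather than harms) the potential by a factor polynomially small in $Q T^2$. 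Combining both, $\prod_t Z_t' \leq e^{-\Omega(\gamma^2 T)} \leq 1/M$ for $T = \Omega((\log M)/\gamma^2)$, so the training error drops below $1/10$ (and in fact to zero if pushed). The main obstacle I anticipate is precisely this potential-function accounting: the no-instance update is not the standard exponential-weights rule, so I must show it does not undo the progress made by the yes rounds, which forces a careful choice of $\tau$ relative to $Q$ and $T$. The final time bound then follows: each iteration costs $O(n^2 M)$ for reading/writing $\widetilde{D}^t$, $\widetilde{O}(\sqrt{M/\tau})=\widetilde{O}(\sqrt{M\cdot QT^2})$ for multiplicative amplitude estimation, plus $Q(\Cc)$ calls to $\A$ on prepared states, and summing over $T=\widetilde{O}(1/\gamma^2)$ rounds yields the claimed $\widetilde{O}(n^2\sqrt{M}\,Q(\Cc)^{3/2})$ bound.
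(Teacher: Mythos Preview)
Your overall skeleton matches the paper's, but there are two genuine gaps that would prevent the argument from going through.

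First, you propose to ``maintain classical data describing the current (approximate) distribution $\widetilde{D}^t$'' and prepare states ``using QRAM over the $M$ stored amplitudes'', budgeting ``$O(n^2 M)$ for reading/writing $\widetilde{D}^t$'' per iteration. This alone destroys the quadratic speedup: updating all $M$ entries classically costs $\Theta(M)$ per round (since $\widetilde{D}^{t+1}_x$ depends on $h_t(x)$, which varies with $x$), so over $T$ rounds the total is $\widetilde{\Theta}(M)$, contradicting your own final claim of $\widetilde{O}(n^2\sqrt{M}\,Q^{3/2})$. The paper's fix is essential and is the core algorithmic idea you are missing: it stores classically only the $t-1$ scalars $\varepsilon'_1,\ldots,\varepsilon'_{t-1}$, and computes $\widetilde{D}^t_x$ \emph{coherently in superposition} via a unitary $\mathcal{G}_t$ that starts from the uniform $\widetilde{D}^1_x=1/M$ and applies the update rules using oracle calls to $h_1,\ldots,h_{t-1}$ (Eq.~\eqref{eq:QAB_distribution_quantumupdateC}). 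QRAM is used only to load the uniform superposition over $S$; the later distributions never exist as a length-$M$ classical vector. This is also why each call to the amplitude-estimation unitary costs $\Theta(t\cdot n^2)$ time, giving the extra $T^5$ hidden in the final $\widetilde{O}(\cdot)$.

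Second, your training-error accounting for `no' rounds is wrong in two respects. The no-instance update of Eq.~\eqref{eq:noinstanceabstract} assigns \emph{higher} weight to the \emph{correctly} classified points (factor $(2-1/(QT^2))e^{-\alpha'_t}$) and \emph{lower} weight to the misclassified ones (factor $(1/(QT^2))e^{\alpha'_t}$), the opposite of what you wrote. More importantly, when you unroll the recursion, the non-standard update introduces an extra factor $\exp(-\kappa_{[h_t(x)\neq c(x)]})$ in the potential, which can be as large as $QT^2$ per no round; combined with $Z'_t\approx 2/\sqrt{QT^2}$, the net contribution of each no round is $\approx 2\sqrt{Q}\,T>1$, not ``polynomially small in $QT^2$''. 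The paper's Case~II analysis (Section~\ref{sec:proofofboosting}) does \emph{not} show that no rounds help; it bounds the training error by $(2\sqrt{Q}T)^{\ell}\cdot e^{-2(T-\ell)\gamma^2+O(1)}$ where $\ell$ is the number of no rounds, and then argues a dichotomy: either $\ell\leq T/\ln(2\sqrt{Q}T)-1$, in which case this product is $\leq 1/10$, or $\ell$ is so large that the corresponding weak hypotheses already have error $\leq 1/(QT^2)$ and the bound follows directly from $\prod_t 2\sqrt{\varepsilon_t}$. Without this case split your potential-function argument does not close.
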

Since our quantum algorithm is fairly involved to describe and analyze, we break down this section into four subsections, in order to separate the technicalities from our quantum algorithm. In Section~\ref{sec:describeboosting}, we describe our quantum boosting algorithm. In Section~\ref{sec:claimsboosting}, we prove a few claims which are unproven in Section~\ref{sec:describeboosting}. In Section~\ref{sec:proofofboosting}, we analyze the correctness of the algorithm and finally in Section~\ref{sec:compofboosting}, we analyze the time complexity of our quantum boosting algorithm. Putting together these four subsections gives a proof of Theorem~\ref{thm:quantum_AdaBoost}.

\subsubsection{Quantum boosting algorithm for reducing training error}
\label{sec:describeboosting}
    We begin by presenting our quantum algorithm. For simplicity in notation, we first  denote~$Q(\Cc)$ as $Q$ and secondly we assume that $Q$ is the time it takes for $\A$ to output a weak hypothesis with probability at least $1-O(1/T)$ (note that this  only increases the complexity by a multiplicative $O(\log T)$-factor). Since the sample complexity of $\A$ is at most the time complexity, we will assume that it suffices to provide $\A$ with $Q$ quantum examples. 
    Our quantum algorithm is a $T$-round iterative algorithm similar to classical AdaBoost and in each round, our quantum algorithm produces a distribution $\widetilde{D}$. In the $t$th round, our quantum algorithm follows a three-step~process: 
   
     \begin{enumerate}
        \item Invoke the weak quantum learner $\A$ to produce a weak hypothesis  $h_t$ under an \emph{approximate} distribution $\widetilde{D}^t$ over the training set $S$. 
        \item By making quantum queries to $h_t$, our algorithm computes $\varepsilon'_t$, an approximation to  $\widetilde{\varepsilon_t}=\Pr_{x\sim \widetilde{D}^t} [h_t(x)\neq c(x)]$. We then use $\varepsilon'_t$ to update the distribution $\widetilde{D}^t$ to~$\widetilde{D}^{t+1}$. In this step, we depart from standard AdaBoost.
        \item Using $\varepsilon'_t$ compute a \emph{weight} $\alpha'_t$. After $T$ steps, output a hypothesis $H(x)=\sign\Big(\sum_{t=1}^T \alpha'_t h_t(x) \Big)$.
    \end{enumerate}
   Before we describe our quantum algorithm, we first describe a subroutine which will be useful in performing step (2) in the $3$-step procedure above. This subroutine uses ideas developed by Ambainis~\cite{ambainis2010multiplicativeAE} and the proof uses ideas required to prove Theorem~\ref{thm:multiplicativeamplitude_estimation}.  Let $\widetilde{\varepsilon}, M>0$. 
    
    \begin{algorithm}
    \caption{Modified Amplitude Estimation}
    \label{alg:multiplicativeamplitudeestimation}
     \textbf{Input:} The state $\ket{\psi} =  \sqrt{\widetilde{\varepsilon}/M} \ket{\phi_1} \ket{1}+ \sqrt{1 - \widetilde{\varepsilon}/M} \ket{\phi_0} \ket{0} $ and the unitary $U$ such that $U \ket{0}=\ket{\psi}$.
    \vspace{1mm}

	 \begin{algorithmic}[1]
	  \For{$J$ = $\frac{2 \pi \sqrt{M}}{\delta}$ to $\frac{16 \sqrt{2} \pi \sqrt{M}} {\delta }\cdot \sqrt{QT^2} \log (MT/\delta)$ }
	  \vspace{2mm}
	  
	  \parState{%
	  Let $\varepsilon'/M$ be the output after performing amplitude estimation (in Theorem~\ref{thm:amplitude_estimation}) to estimate  $\widetilde{\varepsilon}/M$ using $J$ queries to $U$ and $U^{-1}$.
	  }
	  \vspace{1mm}
	  \parState{%
	  Check if 
	  $
	  \frac{2\sqrt{2} \pi \sqrt{(1 - \delta) \varepsilon'}}{J\sqrt{M}} + \frac{\pi^2}{J^2} \leq  \frac{ \delta  \varepsilon'}{M}.
	  $
	  If yes, then output $\varepsilon'$ and quit the loop. Else, let $J = 2 \cdot J$.
	  } 
	  \EndFor
	 \end{algorithmic}
	 
	 \textbf{Output:} 
	 $\{\varepsilon', \text{ yes}\}$ if there exists $\varepsilon'$ in step $(3)$, else output 
	$\{\varepsilon' = 1/(QT^2), \text{ no}\}$. 
    \end{algorithm}

    \begin{lemma}
    \label{lem:updatingapproxdistribution}
    Let $\delta = 1/(10 Q T^2)$. Algorithm~\ref{alg:multiplicativeamplitudeestimation} satisfies the following: with probability $\geq 1-10\delta/T$, if the output is $\{\varepsilon',\text{ yes}\}$,
    then $| \widetilde{\varepsilon} - \varepsilon' | \leq \delta \varepsilon'$; and if the output is $\{\varepsilon' = 1/(QT^2), \text{ no}\}$, then $| \widetilde{\varepsilon} - \varepsilon' | \leq 1/(QT^2)$. The total number queries to $U $ and $U^{-1}$ used by Algorithm~\ref{alg:multiplicativeamplitudeestimation} is $O(\sqrt{M}Q^{3/2}T^3)$.
    \end{lemma}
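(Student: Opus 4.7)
The plan is to prove the three assertions in the lemma---correctness in the ``yes'' branch, correctness in the ``no'' branch, and the query complexity bound---by invoking Theorem~\ref{thm:amplitude_estimation} at each value of $J$ in the loop of Algorithm~\ref{alg:multiplicativeamplitudeestimation} and combining with the algorithm's stopping rule. The unknown quantity throughout is $\widetilde{\varepsilon}$; the algorithm only has access to its estimate $\varepsilon'$, and the entire technical content is in pushing bounds through this substitution.

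For the ``yes'' branch, suppose the loop exits at some $J$ with output $\varepsilon'$ satisfying Step~3's check. Conditional on the call to amplitude estimation succeeding (amplified to high probability below), Theorem~\ref{thm:amplitude_estimation} yields
$$\left|\frac{\widetilde{\varepsilon}}{M} - \frac{\varepsilon'}{M}\right| \leq \frac{2\pi\sqrt{(\widetilde{\varepsilon}/M)(1-\widetilde{\varepsilon}/M)}}{J} + \frac{\pi^2}{J^2}.$$
The goal is to convert the data-dependent right-hand side into the check quantity $\frac{2\sqrt{2}\pi\sqrt{(1-\delta)\varepsilon'}}{J\sqrt{M}} + \frac{\pi^2}{J^2}$, which involves only $\varepsilon'$. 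A short bootstrap does the job: first apply the crude bound $\sqrt{(\widetilde{\varepsilon}/M)(1-\widetilde{\varepsilon}/M)} \leq 1/2$ together with the minimum $J \geq J_0 = 2\pi\sqrt{M}/\delta$ to obtain a preliminary relation of the form $\widetilde{\varepsilon} \leq 2\varepsilon'$; then plug this back into the AE bound to conclude $\sqrt{(\widetilde{\varepsilon}/M)(1-\widetilde{\varepsilon}/M)} \leq \sqrt{2\varepsilon'/M}$, which combined with Step~3's inequality gives $|\widetilde{\varepsilon}-\varepsilon'|\leq \delta\varepsilon'$.

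For the ``no'' branch, I will prove the contrapositive: if $\widetilde{\varepsilon} \geq 2/(QT^2)$, then the check passes at some $J \leq J_{\max}$. With $J_{\max} = (16\sqrt{2}\pi\sqrt{M}/\delta)\sqrt{QT^2}\log(MT/\delta)$, both terms of the check shrink sufficiently: $\pi^2/J^2$ becomes negligible compared to $\delta \varepsilon'/M$, and the leading term $2\sqrt{2}\pi\sqrt{(1-\delta)\varepsilon'}/(J\sqrt{M})$ drops below $\delta\varepsilon'/(2M)$ once $\varepsilon' = \Omega(1/(QT^2))$. Thus failing the check at every $J$ in the range forces $\varepsilon' \leq 2/(QT^2)$, and by the AE bound $\widetilde{\varepsilon}$ also lies in $[0,\,2/(QT^2)]$; so declaring $\varepsilon' = 1/(QT^2)$ yields $|\widetilde{\varepsilon} - \varepsilon'| \leq 1/(QT^2)$, as required.

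For the probability of success and the query complexity: geometric doubling gives $O(\log(QT^3))$ loop iterations. I will amplify each amplitude-estimation call to failure probability at most $\delta/T^2$ by the standard median-of-$O(\log(T/\delta))$-trials trick, then union-bound across all iterations to get total failure probability at most $10\delta/T$. For the query count, the doubling schedule makes the total cost a geometric sum dominated by the final value $J_{\max}$; substituting $\delta = 1/(10QT^2)$ gives $J_{\max} = \widetilde{O}(\sqrt{M}\cdot QT^2 \cdot Q^{1/2}T) = \widetilde{O}(\sqrt{M}\,Q^{3/2}T^3)$, matching the claimed bound. The main obstacle I anticipate is the bootstrap in the ``yes'' branch: the AE bound has $\widetilde{\varepsilon}$ on its right-hand side while the algorithm only sees $\varepsilon'$, and handling this self-reference cleanly is precisely what forces the $(1-\delta)$ factor and the $\sqrt{2}$ in Step~3's threshold.
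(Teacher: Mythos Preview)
Your overall plan, the ``no'' branch argument, and the query-complexity analysis are all sound and match the paper's approach. The gap is in the bootstrap for the ``yes'' branch.

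You propose to obtain $\widetilde\varepsilon\le 2\varepsilon'$ by replacing $\sqrt{(\widetilde\varepsilon/M)(1-\widetilde\varepsilon/M)}$ with its trivial upper bound $1/2$ and combining with $J\ge J_0=2\pi\sqrt M/\delta$. But that combination only yields
\[
|\widetilde\varepsilon-\varepsilon'|\;\le\;\frac{\pi M}{J}+\frac{\pi^2 M}{J^2}\;\le\;\frac{\delta\sqrt M}{2}+\frac{\delta^2}{4},
\]
which is of order $\delta\sqrt M$. In the ``yes'' case one only knows $\varepsilon'\ge\Omega(1/(QT^2))=\Omega(\delta)$, and since $M$ is large (it scales with $\VC(\Cc)$), $\delta\sqrt M\gg\varepsilon'$ in general. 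So $\widetilde\varepsilon\le 2\varepsilon'$ does not follow. The crude $1/2$ bound discards precisely the factor $\sqrt{\widetilde\varepsilon/M}$ that is needed to make the self-referential estimate close.

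The correct bootstrap keeps $\sqrt{\widetilde\varepsilon}$ in the amplitude-estimation bound and treats
\[
\widetilde\varepsilon\;\le\;\varepsilon'+\frac{2\pi\sqrt{M\widetilde\varepsilon}}{J}+\frac{\pi^2 M}{J^2}
\]
as a quadratic inequality in $\sqrt{\widetilde\varepsilon}$. Solving it with $J\ge 2\pi\sqrt M/\delta$ (so the linear and constant coefficients are $O(\delta)$ and $O(\delta^2)$ respectively) gives $\widetilde\varepsilon\le 2\varepsilon'+O(\delta^2)$; since $\varepsilon'\gg\delta^2$ in the ``yes'' case this is effectively $\widetilde\varepsilon\le 2\varepsilon'$. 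This is the route the paper takes. Once $\widetilde\varepsilon\le 2\varepsilon'$ is in hand, your second step---substitute back into the amplitude-estimation bound and invoke Step~3's check---goes through exactly as you described.
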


    \begin{proof}
     We first consider the case when Algorithm~\ref{alg:multiplicativeamplitudeestimation} outputs $\{\varepsilon',\text{ yes}\}$. In this case, there exists a~$J$ and $\varepsilon'$ which satisfies the relation in step (3) of the algorithm. First observe that, since $\varepsilon'/M$ was obtained by amplitude amplification in step $(2)$, we have 
    \begin{align}
    \label{eq:lemmbound}
    \Bigg| \frac{\varepsilon'}{M} - \frac{\widetilde{\varepsilon}}{M} \Bigg| \leq \Bigg| \frac{\varepsilon'}{M} - \frac{(1 - \delta) \widetilde{\varepsilon }}{M}  \Bigg|  \leq \frac{2 \pi \sqrt{(1 - \delta) \widetilde{\varepsilon}}}{J\sqrt{M}} + \frac{\pi^2}{J^2}  \leq \frac{2\sqrt{2} \pi \sqrt{(1 - \delta) \varepsilon'}}{J\sqrt{M}} + \frac{\pi^2}{J^2},
    \end{align}
    where the second inequality used Eq.~\eqref{eq:approxofatilde} in Theorem~\ref{thm:amplitude_estimation} and the third inequality used the first and second inequalities to conclude
    $ | \widetilde{\varepsilon} - \varepsilon' |  \leq \frac{2 \pi \sqrt{(1 - \delta) M \widetilde{ \varepsilon}}}{J} + \frac{\pi^2 M}{J^2}$. This implies
    $$\widetilde{\varepsilon} \leq \varepsilon' + \frac{2 \pi \sqrt{(1 - \delta) M \widetilde{ \varepsilon}}}{J} + \frac{\pi^2 M}{J^2} \leq 2 \varepsilon',
    $$
    where we used $J \geq (2 \pi \sqrt{M})/ \delta$. Putting together the upper bound in Eq.~\eqref{eq:lemmbound} along with the upper bound in Step (3) of the algorithm, we get $| \widetilde{\varepsilon} - \varepsilon' | \leq \delta \varepsilon'$.
    Furthermore, we also show that $\widetilde{\varepsilon}\geq (1-2\delta)/(64 QT^2)$. Recall that $J,\varepsilon'$ satisfies step $(3)$ of the algorithm. In particular, this implies that
    $$
    \frac{2\sqrt{2} \pi \sqrt{(1 - \delta) \varepsilon'}}{J_{max}\sqrt{M}} + \frac{\pi^2}{J_{max}^2} \leq   \frac{ \delta  \varepsilon'}{M},
    $$
    since $J\leq J_{max}$. Substituting the value of $J_{max}$ in the inequality above gives
    $ \frac{ \sqrt{(1 - \delta)\varepsilon'}}{8 \sqrt{Q T^2}} + \frac{\delta }{512 Q T^2} \leq    \varepsilon'. $
    Solving for the above equation, we obtain $\varepsilon' \geq 1/(64 Q T^2) \cdot (1 - \delta)$ (we ignore the other solution for~$\varepsilon'$ since $\varepsilon' \geq 0$). Using $| \widetilde{\varepsilon} - \varepsilon' | \leq \delta \varepsilon'$, we get $\widetilde{\varepsilon} \geq (1-\delta) \varepsilon'  \geq (1 - 2 \delta )/(64 Q T^2) $.

    Now we consider the case when Algorithm~\ref{alg:multiplicativeamplitudeestimation} outputs $\{\varepsilon' = 1/(QT^2),\text{ no}\}$, and we argue that $| \widetilde{\varepsilon} - \varepsilon' | < 1/(Q T^2)$. In order to see this, first observe that
    \begin{align}
    \label{eq:bound_noinstance}
    \Bigg| \frac{\varepsilon'}{M} - \frac{\widetilde{\varepsilon}}{M} \Bigg| \leq \frac{2\sqrt{2} \pi \sqrt{(1 - \delta) \varepsilon'}}{J\sqrt{M}} + \frac{\pi^2}{J^2} \leq \frac{\delta \sqrt{2 \varepsilon' } }{M} + \frac{\delta^2}{4 M } \leq \frac{10 \delta}{M},
    \end{align}
    where the first inequality used Eq.~\eqref{eq:lemmbound}, the second inequality used $J  \geq (2 \pi \sqrt{M})/ \delta$ and the third inequality used $\varepsilon' < 1$. Using $\delta = 1/(10 Q T^2)$, we obtain $| \widetilde{\varepsilon} - \varepsilon' | \leq 1/(Q T^2)$. Furthermore, we show that in the `no' instance, we have  $\widetilde{\varepsilon} < 1/(Q T^2)$. We prove this by a contrapositive argument: suppose  $\widetilde{\varepsilon} \geq 1/(Q T^2)$, there exists a $J' \in \Big[J^* , J_{max}\Big]$, where $J^*=\frac{8 \pi \sqrt{M} }{ \delta \sqrt{( 1 - \delta) \widetilde{\varepsilon} } }$, for which the  inequality in step (3) of Algorithm~\ref{alg:multiplicativeamplitudeestimation} is satisfied with probability at least $1-10\delta/T$.\footnote{ Note that $J^*\leq J_{max}$ follows immediately by using the lower bound $\widetilde{\varepsilon}\geq 1/QT^2$.} In order to see this, first observe that
    \begin{align}
    \begin{aligned}
    \label{eq:lemmanoste3}
    \frac{2 \sqrt{2} \pi \sqrt{ (1 - \delta) \varepsilon'}}{J\sqrt{M}} + \frac{\pi^2}{J^2} 
       \leq \frac{4 \pi \sqrt{ (1 - \delta) \widetilde{\varepsilon}}}{J\sqrt{M}} + \frac{\pi^2}{J^2} 
    \leq \frac{ \delta  (1 - \delta) \widetilde{\varepsilon }}{2 M} + \frac{\delta^2 (1 - \delta) \widetilde{\varepsilon} }{64 M} \leq  \frac{\delta  (1 - \delta) \widetilde{\varepsilon}}{M},  
        \end{aligned}
    \end{align}
  where the second inequality used $J\geq J^*$ and the remaining inequalities are straightforward. Using Eq.~\eqref{eq:bound_noinstance} and Eq.~\eqref{eq:lemmanoste3}, we have $|\varepsilon'-\widetilde{\varepsilon}|\leq \delta(1-\delta)\widetilde{\varepsilon}$.
    Moreover, using $|\widetilde{\varepsilon}-\varepsilon'|\leq \delta(1-\delta)\widetilde{\varepsilon}$, we can further upper bound Eq.~\eqref{eq:lemmanoste3} by $\delta \varepsilon'/M$, which implies step (3) of Algorithm~\ref{alg:multiplicativeamplitudeestimation} is satisfied, in which case the algorithm would have output `yes' with probability $\geq 1-10\delta/T$. Hence, by the contrapositive argument, if Algorithm~\ref{alg:multiplicativeamplitudeestimation} outputs `no' with probability at least $1-10 \delta/T$, then we have $\widetilde{\varepsilon} < 1/QT^2$. 
    
   Finally, we bound the total number of queries made to $U$ and $U^{-1}$ in Algorithm~\ref{alg:multiplicativeamplitudeestimation}. Given that~$J$ is doubled in every round and $J\leq J_{max} = O( \sqrt{MQT^2} / \delta )$, the total number of queries is
    \begin{align}
    J_{max} + \frac{J_{max} }{2} + \frac{J_{max} }{4} + \cdots +\Big\lceil\frac{2\pi \sqrt{M}}{\delta}\Big\rceil< 2 J_{max} = O\Bigg(\sqrt{MQT^2} / \delta \Bigg) = O(\sqrt{M}Q^{3/2}T^3)
    \end{align}
    using $\delta=O(1/QT^2)$ (for simplicity, we assume that all these terms are powers of $2$). This concludes the proof of the lemma.        \end{proof}%

We now describe our quantum boosting algorithm. 

     \begin{algorithm}[H]
   		\caption{Quantum boosting algorithm }
        \label{alg:quantumAdaBoost}
		\textbf{Input:} Weak quantum learner $\mathcal{A}$ with time complexity $Q$, a training sample $S =\{(x_i,c(x_i))\}_{i\in [M]}$, where $x_i$ is sampled from an unknonwn distribution $\mathcal{D}$.
		
		\vspace{5pt}
		\textbf{Initialize:} Let $\widetilde{D}^1 = D^1$ be the uniform distribution on $S$. Let $h_0$ be the constant function,\footnotemark{} $T = O((\log M) / \gamma^2 )$ and $\delta = 1/(10 Q T^2)$. $\eps'_0 = 1/2$.

		\vspace{5pt}
		\begin{algorithmic}[1]
		 \For{t = 1 to $T$ (assume quantum query access to $h_1,\ldots,h_{t-1}$ and knowledge of $\varepsilon'_1,\ldots,\varepsilon'_{t-1}$.) }
         \vspace{1mm}
		 \State Prepare $Q+1$ many copies of $\ket{\psi_1}=\frac{1}{\sqrt{M}}\sum_{x\in S} \ket{x,c(x),\widetilde{D}^1_x}$. Let $\ket{\Phi_1}=\ket{\psi_1}$.
		 
		 \vspace{3pt}
		  \hspace{-27pt}\underline{\textsf{\textbf{Phase (1): Obtaining hypothesis $h_t$}}}
		  
		  \vspace{5pt}
		 \State Using quantum queries to $\{h_1,\ldots,h_{t-1}\}$ and knowledge of $\{\varepsilon'_1,\ldots,\varepsilon'_{t-1}\}$, prepare the state 
		 $$
		 \ket{\Phi_3}=\Big(\frac{1}{\sqrt{M}}\sum_{x\in S} \ket{x,c(x),\widetilde{D}^t_x}\Big).
		 $$
		 \State Apply amplitude amplification to prepare $\ket{\Phi_6} =\Big(\sum_{x\in S} \sqrt{\widetilde{D}^t_x}\ket{x,c(x)}+\ket{\chi_t}\Big)$. 
		 \vspace{1mm}
		 \State Pass $\ket{\Phi_6}^{\otimes Q}$ to the quantum learner $\mathcal{A}$ to obtain a hypothesis $h_t$. 
		
		 \vspace{5pt}
		  \hspace{-27pt}\underline{\textsf{ \textbf{Phase (2): Estimating weighted errors $\widetilde{\varepsilon}_t$} }}
		 
		 \vspace{5pt}
		 
		 \State Using quantum queries to $h_t$, prepare $\ket{\psi_5}=\frac{1}{\sqrt{M}}\sum_{x\in S} \ket{x,c(x),\widetilde{D}^t_x \cdot  [h_t(x)\neq c(x)]}$. 
		 \vspace{1mm}
		 \State
		 Let $\widetilde{\varepsilon}_t = \Pr_{x\sim \widetilde{D}^t} [h_t(x)\neq c(x)]$. Prepare $\ket{\psi_6}= \sqrt{1 - \widetilde{\varepsilon}_t/M} \ket{\phi_0} \ket{0} + \sqrt{\widetilde{\varepsilon}_t/M} \ket{\phi_1} \ket{1}$. 
		 
		 \vspace{5pt}
		 
		 \State Invoke subroutine~\ref{alg:multiplicativeamplitudeestimation}
		 to estimate $\widetilde{\varepsilon}_t$ with $\varepsilon'_t$.
		 
		 \vspace{5pt}
		 \hspace{-27pt}\underline{\textsf{ \textbf{Phase (3): Updating distributions} }}
         \parState{%
         \textbf{If} {subroutine~\ref{alg:multiplicativeamplitudeestimation} outputs `yes': let $Z_t = 2 \sqrt{\varepsilon'_t (1 - \varepsilon'_t)}$, $ \alpha'_t = \ln \Big(\sqrt{( 1 - \varepsilon'_t)/ \varepsilon'_t } \Big)$ and update  $\widetilde{D}^{t}_x$:}
         }
    	  \begin{align}
    	  \label{eq:QAB_distribution_classicalupdate_a}
          \widetilde{D}^{t+1}_x &= \frac{\widetilde{D}^{t}_x}{(1 + 2 \delta ) Z_{t}}
          \times \begin{cases} 
          e^{-\alpha'_t} & \text{ if } h_{t}(x) = c(x) \\
          e^{\alpha'_t} & \text{ otherwise }.
          \end{cases}
          \end{align}
        \parState{%
        \textbf{If} {subroutine~\ref{alg:multiplicativeamplitudeestimation} outputs `no': let  $Z_t = \Big( 2 \sqrt{QT^2 -1} \Big)/ (QT^2) $, $ \alpha'_t = \ln \Big(\sqrt{ QT^2 - 1 } \Big)$ and update~$\widetilde{D}^{t}_x$:}
        }
		  \begin{align}
		  \label{eq:QAB_distribution_classicalupdate_b}
          \widetilde{D}^{t+1}_x &= \frac{\widetilde{D}^{t}_x}{(1 + 2/(QT^2)) Z_{t}}
          \times \begin{cases} 
          (2 - 1/(Q T^2))e^{-\alpha'_t} & \text{ if } h_{t}(x) = c(x) \\
          (1/(QT^2))e^{\alpha'_t} & \text{ otherwise }.
          \end{cases}
           \end{align}
		\EndFor	
		\end{algorithmic}
		
		\textbf{Output:} Hypothesis $H$ defined as $H(x) = \sgn \Big( \sum_{t=1}^T \alpha'_t h_t(x) \Big)$ for all $x\in \01^n$.
	\end{algorithm}
	\footnotetext{Precisely, we let the query operation $\mathcal{O}_{h_0}$ corresponding to $h_0$ be the identity  map.}
%

 Before describing the state of the quantum boosting algorithm in every step, we make a couple of remarks.
 We use the notation $\widetilde{D}^t_x$ in the quantum boosting algorithm because $\{\widetilde{D}^t_x\}_x$ is not a true distribution since it satisfies $\sum_{x \in S} \widetilde{D}^t_x \leq 1$ (this is also the reason for incorporating the state $\ket{\chi_t}$ in step~$(4)$). In addition to $\widetilde{D}^t_x$, we also define the \emph{true} updated distribution $D^{t+1}_x$ as follows
   \begin{align}
    \label{eq:QAB_distribution_classicalupdatetrue}
    {D}^{t+1}_x=  \frac{\widetilde{D}^t_x}{Z_t} \times \begin{cases} 
      e^{-\alpha'_t} & \text{ if } h_t(x) = c(x) \\
      e^{\alpha'_t} & \text{ otherwise },
   \end{cases}
   \end{align}
where  $\alpha'_t = \ln \Big(\sqrt{( 1 - \varepsilon'_t)/ \varepsilon'_t } \Big)$ and  $\varepsilon'_t,\widetilde{\varepsilon}_t$ are defined in step $(8)$ of Algorithm~\ref{alg:quantumAdaBoost}, and
\begin{align}
\label{eq:defnofZt}
\begin{aligned}
Z_t = \sum_{i=1}^M \widetilde{D}_t(x_i) \exp{\big(-\alpha_t^\prime h_t(x_i)c(x_i) \big) }
     &= \sum_{i:h_t(x_i) = c(x_i)} \widetilde{D}_t(x_i) \cdot  e^{-\alpha_t^\prime } + \sum_{i:h_t(x_i) \neq c(x_i)} \widetilde{D}_t(x_i)\cdot  e^{\alpha_t^\prime}\\
  &= (1 - \widetilde{\varepsilon}_t) \cdot  e^{-\alpha'_t} + \widetilde{\varepsilon}_t \cdot  e^{\alpha_t^\prime}.
  \end{aligned}
  \end{align}
Note that $\sum_{x \in S} D_x^{t+1}= 1$ and observe that our quantum boosting algorithm cannot make the distribution update  Eq.~\eqref{eq:QAB_distribution_classicalupdatetrue} since the value of $\widetilde{\varepsilon}_t$ in Eq.~\eqref{eq:defnofZt} is unknown to the quantum algorithm. Let~$\varepsilon_t$ be  the weighted error given by $\varepsilon_t = \Pr_{x\sim D^t} [h_t(x)\neq c(x)]$ corresponding to the true distribution~$\{D^t_x\}_x$.

 This is one of the main differences between standard AdaBoost and our quantum boosting Algorithm~\ref{alg:quantumAdaBoost}. In standard AdaBoost, one assumes that the $\varepsilon$s can be computed exactly by spending time $O(M)$. However, a quantum algorithm can only approximate the $\varepsilon$s in time $O(\sqrt{M})$. Hence the distribution update from $D^t\rightarrow D^{t+1}$ in classical AdaBoost might result in a sub-normalized distribution $\widetilde{D}^{t+1}$ in the quantum case. Moreover even if a learner could produce a hypothesis $h_t$ that is $(1/2+\gamma)$-close to the target concept $c$ ``under" $\widetilde{D}^{t+1}$, it is not clear if the final hypothesis $H$ has small training error. In order to overcome this, we split the distribution update step into two cases (steps $(9,10)$ in Algorithm~\ref{alg:quantumAdaBoost}) depending on whether $\varepsilon$ is ``large" or ``small". Using the structure of the distribution update we show that the resulting hypothesis~$H$ has small training error. Before we proceed with the proof of the main theorem, we state the following properties about the distributions $\widetilde{D}s$.
\begin{claim}
\label{claim:propertyofD_a}
Let $t\geq 1$, $\widetilde{D}^t:\01^n\rightarrow [0,1]$ be defined as in Eq.~\eqref{eq:QAB_distribution_classicalupdate_a}, \eqref{eq:QAB_distribution_classicalupdate_b}. Then $\sum_{x \in S} \widetilde{D}^t_x  \in [1-30 \delta,1]$. 
\end{claim}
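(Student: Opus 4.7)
The plan is to prove the claim by induction on $t$. For the base case $t=1$, the algorithm sets $\widetilde{D}^1$ equal to the uniform distribution on $S$, so $\sum_{x\in S}\widetilde{D}^1_x = 1 \in [1-30\delta,1]$ trivially. For the inductive step I assume the bound at iteration $t$ and establish it at $t+1$, handling the ``yes'' and ``no'' branches of subroutine~\ref{alg:multiplicativeamplitudeestimation} separately. In both cases the strategy is the same: (i) compute $\sum_x \widetilde{D}^{t+1}_x$ in closed form using the definitions of $\alpha'_t$ and $Z_t$, (ii) invoke Lemma~\ref{lem:updatingapproxdistribution} to control how well $\varepsilon'_t$ approximates $\widetilde{\varepsilon}_t$, and (iii) verify both the upper bound $\leq 1$ and the lower bound $\geq 1-30\delta$.

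For the ``yes'' branch with update \eqref{eq:QAB_distribution_classicalupdate_a} I would split the sum over correctly and incorrectly classified examples. Letting $A_t = \sum_{h_t(x)=c(x)}\widetilde{D}^t_x$ and $B_t = \widetilde{\varepsilon}_t = \sum_{h_t(x)\neq c(x)}\widetilde{D}^t_x$, so that $A_t+B_t=S_t$, the substitutions $e^{-\alpha'_t}=\sqrt{\varepsilon'_t/(1-\varepsilon'_t)}$, $e^{\alpha'_t}=\sqrt{(1-\varepsilon'_t)/\varepsilon'_t}$ and $Z_t = 2\sqrt{\varepsilon'_t(1-\varepsilon'_t)}$ collapse the expression to
\[
\sum_x\widetilde{D}^{t+1}_x \;=\; \frac{A_t\varepsilon'_t+B_t(1-\varepsilon'_t)}{2(1+2\delta)\,\varepsilon'_t(1-\varepsilon'_t)} \;=\; \frac{S_t+(1+r)(1-2\varepsilon'_t)}{2(1+2\delta)(1-\varepsilon'_t)},
\]
where $r=(\widetilde{\varepsilon}_t-\varepsilon'_t)/\varepsilon'_t$ satisfies $|r|\leq\delta$ by Lemma~\ref{lem:updatingapproxdistribution}. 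Plugging in $r=0$ and $S_t=1$ recovers the noiseless value $1/(1+2\delta)$; the role of the explicit $(1+2\delta)$ factor built into the denominator of the update is precisely to absorb the $O(\delta)$ error introduced by $r$. Bounding $|(1-2\varepsilon'_t)/(1-\varepsilon'_t)|\leq1$ and expanding to first order in $\delta$ then yields the two-sided inequality needed to close the induction.

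The ``no'' branch, using update \eqref{eq:QAB_distribution_classicalupdate_b}, is handled analogously. Here $\varepsilon'_t=1/(QT^2)$, $e^{\alpha'_t}=\sqrt{QT^2-1}$, and $Z_t=2\sqrt{QT^2-1}/(QT^2)$, and Lemma~\ref{lem:updatingapproxdistribution} now only guarantees $\widetilde{\varepsilon}_t<1/(QT^2)$ rather than a multiplicative approximation. The same style of simplification shows that $B_t$ is tiny and $A_t\approx S_t$, so the ratio $\sum_x\widetilde{D}^{t+1}_x/S_t$ differs from $1$ by at most $O(1/(QT^2))$, which is again absorbed by the additional $(1+2/(QT^2))$ factor in the denominator.

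The main obstacle is the careful signed error accounting in the ``yes'' branch: the correction $r(1-2\varepsilon'_t)$ can take either sign, so the upper bound $\sum_x\widetilde{D}^{t+1}_x\leq 1$ requires checking that the explicit $(1+2\delta)$ denominator always dominates the positive part of the error (including the gap $1-S_t$ inherited from induction), while the lower bound $\geq 1-30\delta$ requires showing the worst-case negative correction does not push the sum below this threshold. Once this per-step control on $\sum_x\widetilde{D}^{t+1}_x/\sum_x\widetilde{D}^{t}_x$ is in hand, the inductive hypothesis $S_t\in[1-30\delta,1]$ closes the argument.
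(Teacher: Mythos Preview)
Your computational strategy matches the paper's: in each of the `yes'/`no' branches you expand $\sum_x \widetilde D^{t+1}_x$ using the explicit values of $\alpha'_t$ and $Z_t$ and then bound the result via the approximation guarantee on $\varepsilon'_t$ from Lemma~\ref{lem:updatingapproxdistribution}. The paper arrives at the algebraically equivalent form
\[
\sum_{x\in S}\widetilde D^{t+1}_x \;=\; \frac{1}{2(1+2\delta)}\Bigl(\frac{1-\widetilde\varepsilon_t}{1-\varepsilon'_t}+\frac{\widetilde\varepsilon_t}{\varepsilon'_t}\Bigr)
\]
(and the analogous expression with the $\kappa$-factors in the `no' case) and bounds each fraction separately; this is the same computation as yours with $A_t,B_t$ replaced by $1-\widetilde\varepsilon_t,\widetilde\varepsilon_t$.

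Where you diverge is in the inductive wrapper, and here there is a genuine gap. The paper does \emph{not} carry an inductive hypothesis $S_t\in[1-30\delta,1]$; it simply writes $\sum_{x:h_t(x)=c(x)}\widetilde D^t_x=1-\widetilde\varepsilon_t$, tacitly taking $S_t=1$, and proves the one-step bound from there. If you attempt an honest induction with $S_t\ge 1-30\delta$, your own closed form shows the deficit $d_t:=1-S_t$ propagates as
\[
d_{t+1}\;\le\;\frac{d_t}{2(1+2\delta)(1-\varepsilon'_t)}+O(\delta),
\]
and since $\varepsilon'_t$ can sit near $1/2$ the leading coefficient can be essentially $1$, so the bound $d_{t+1}\le 30\delta$ does not regenerate from $d_t\le 30\delta$. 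Your fallback of controlling the \emph{ratio} $S_{t+1}/S_t$ is worse still: any fixed per-step multiplicative loss compounds over $T$ rounds and cannot remain above $1-30\delta$. To make an induction close you would have to invoke the weak-learning margin (so that $\varepsilon'_t\le 1/2-\Omega(\gamma)$ gives a genuine contraction), at the price of a $\gamma$-dependent constant in place of $30$; alternatively, just reproduce the paper's one-step calculation with $S_t=1$ rather than trying to thread $S_t$ through the argument.
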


\begin{claim}
\label{claim:distancefromtrueeps}
Let $t\geq 1$, $\widetilde{\varepsilon}_t = \Pr_{x\sim \widetilde{D}^t} [h_t(x)\neq c(x)]$ be the weighted error corresponding to the approximate distribution $\widetilde{D}^t$ and $\varepsilon_t = \Pr_{x\sim D^t} [h_t(x)\neq c(x)]$ correspond to the true distribution $D^t$. Then $\vert  \widetilde{\varepsilon}_t - \varepsilon_t \vert \leq 50 \delta$. 
\end{claim}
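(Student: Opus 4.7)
The plan is to reduce the bound on $|\widetilde{\varepsilon}_t - \varepsilon_t|$ to a bound on the $\ell_1$ distance $\|\widetilde{D}^t - D^t\|_1$. Since $[h_t(x) \neq c(x)] \in \{0,1\}$, we immediately get
\[
|\widetilde{\varepsilon}_t - \varepsilon_t| \;=\; \Big|\sum_{x \in S} (\widetilde{D}^t_x - D^t_x)\,[h_t(x)\neq c(x)]\Big| \;\leq\; \sum_{x \in S} |\widetilde{D}^t_x - D^t_x|.
\]
Both $\widetilde{D}^t$ and $D^t$ are generated from the same predecessor $\widetilde{D}^{t-1}$: $D^t$ via the true update Eq.~\eqref{eq:QAB_distribution_classicalupdatetrue} and $\widetilde{D}^t$ via the algorithm's update Eq.~\eqref{eq:QAB_distribution_classicalupdate_a} (``yes'' branch at step $t-1$) or Eq.~\eqref{eq:QAB_distribution_classicalupdate_b} (``no'' branch). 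So the task reduces to a one-step comparison of these two updates, split by whether subroutine~\ref{alg:multiplicativeamplitudeestimation} returned ``yes'' or ``no'' in the previous iteration.

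In the ``yes'' branch, both updates multiply $\widetilde{D}^{t-1}_x$ by the same exponential factor $\exp(-\alpha'_{t-1} c(x) h_{t-1}(x))$ and then divide by a (different) normalizer. Hence $\widetilde{D}^t_x = \lambda \cdot D^t_x$ for a single scalar $\lambda = Z_{t-1}/((1+2\delta)\widetilde{Z}_{t-1})$, where $Z_{t-1}$ is the true normalizer of Eq.~\eqref{eq:defnofZt}. Because $D^t$ is a true distribution, $\lambda = \sum_{x \in S} \widetilde{D}^t_x$, which by Claim~\ref{claim:propertyofD_a} lies in $[1-30\delta, 1]$. Therefore $\|\widetilde{D}^t - D^t\|_1 = |1-\lambda| \leq 30\delta$, which handles this branch cleanly.

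The ``no'' branch is where the real work happens: the algorithm's update uses the asymmetric multipliers $(2-1/QT^2)e^{-\alpha'_{t-1}}$ on $S_+ = \{x : h_{t-1}(x)=c(x)\}$ and $(1/QT^2)e^{\alpha'_{t-1}}$ on $S_- = \{x : h_{t-1}(x)\neq c(x)\}$, while the true update uses only $e^{\pm\alpha'_{t-1}}$. Consequently $\widetilde{D}^t$ is no longer a scalar multiple of $D^t$, and I would write the ratios $\widetilde{D}^t_x/D^t_x$ on $S_+$ and $S_-$ separately, substitute $\alpha'_{t-1} = \tfrac{1}{2}\ln(QT^2-1)$, $\widetilde{Z}_{t-1} = 2\sqrt{QT^2-1}/(QT^2)$, and use that $\widetilde{\varepsilon}_{t-1} < 1/(QT^2)$ (guaranteed in the ``no'' branch by Lemma~\ref{lem:updatingapproxdistribution}) to control $Z_{t-1}$. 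The goal is then to show that the two asymmetric multipliers were chosen precisely so that the contributions from $S_+$ and $S_-$ to $\sum_x |\widetilde{D}^t_x - D^t_x|$ each come out to $O(\delta)$, invoking Claim~\ref{claim:propertyofD_a} again to tie down the overall normalization. I expect this calculation to be the technically most involved part of the proof, and it is the step where the specific coefficients $(2-1/QT^2)$ and $(1/QT^2)$ reveal their purpose. Combining the two branches gives $\|\widetilde{D}^t - D^t\|_1 \leq 50\delta$, which together with the first display completes the proof.
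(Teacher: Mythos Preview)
Your overall strategy—reducing to the $\ell_1$ bound $\sum_{x\in S}|\widetilde D^t_x-D^t_x|$—is exactly what the paper does. The ``yes'' branch is correct, and your observation that $\widetilde D^t=\lambda D^t$ with $\lambda=\sum_x\widetilde D^t_x\in[1-30\delta,1]$ (via Claim~\ref{claim:propertyofD_a}) is actually cleaner than the paper's direct computation, which grinds out the slightly sharper constant $4\delta$ by expanding $\bigl|\tfrac{1-\widetilde\varepsilon_t}{1-\varepsilon'_t}+\tfrac{\widetilde\varepsilon_t}{\varepsilon'_t}-2(1+2\delta)\bigr|$.

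The ``no'' branch, however, has a genuine gap. You read $D^t$ as the literal Eq.~\eqref{eq:QAB_distribution_classicalupdatetrue}, with symmetric multipliers $e^{\pm\alpha'_{t-1}}$ and the normalizer of Eq.~\eqref{eq:defnofZt}. Under that reading the bound you are aiming for is simply false. Take $\widetilde\varepsilon_{t-1}=\tau/2$ with $\tau=1/(QT^2)$: Eq.~\eqref{eq:QAB_distribution_classicalupdatetrue} gives
\[
D^t(S_-)=\frac{\widetilde\varepsilon_{t-1}(1-\tau)}{(1-\widetilde\varepsilon_{t-1})\tau+\widetilde\varepsilon_{t-1}(1-\tau)}=\frac{1-\tau}{3-2\tau}\approx \frac13,
\]
whereas Eq.~\eqref{eq:QAB_distribution_classicalupdate_b} gives $\widetilde D^t(S_-)=\tau/\bigl(4(1+2\tau)\bigr)\approx\tau/4$. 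Since the ratio $\widetilde D^t_x/D^t_x$ is constant on $S_-$, this already forces $\sum_{x\in S_-}|\widetilde D^t_x-D^t_x|\approx 1/3\gg 50\delta$. So the asymmetric multipliers were \emph{not} designed to make each piece $O(\delta)$ against the symmetric update of Eq.~\eqref{eq:QAB_distribution_classicalupdatetrue}; the calculation you sketch cannot close.

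What the paper actually does in its Case~II argument is (silently) take $D^{t+1}$ to be the normalized version of $\widetilde D^{t+1}$ \emph{including} the $\kappa$ factors, namely
\[
D^{t+1}_x=\frac{\widetilde D^t_x\,\exp\bigl(-\alpha'_t c(x)h_t(x)+\kappa_{[h_t(x)\neq c(x)]}\bigr)}{(1-\widetilde\varepsilon_t)e^{-\alpha'_t+\kappa_0}+\widetilde\varepsilon_t e^{\alpha'_t+\kappa_1}}.
\]
This is the formula it writes for $D^{t+1}$ in the first displayed equality of its Case~II computation, and the same definition reappears in the proof of Claim~\ref{claim:distancebetweenstates_a}. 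With that $D^{t+1}$ you are back to $\widetilde D^{t+1}=\lambda D^{t+1}$ for a single scalar $\lambda$, and your own Claim~\ref{claim:propertyofD_a} argument from the ``yes'' branch finishes the ``no'' branch instantly with $|1-\lambda|\le 30\delta$. The gap is therefore not in your mechanics but in the identification of what $D^t$ must be in the ``no'' case; Eq.~\eqref{eq:QAB_distribution_classicalupdatetrue} as written does not cover it.
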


We prove these claims later. Observe that our quantum boosting algorithm will run on the sub-normalized distribution $\widetilde{D}^t$ instead of the ideal distribution $D^t$,  since we do not have knowledge of $\widetilde{\varepsilon}_t$ in Eq.~\eqref{eq:defnofZt} and instead have an estimate $\varepsilon'_t$ of $\widetilde{\varepsilon}_t$. 
We now describe the unitary operation that updates $\widetilde{D}_1$ (in step $(2)$) to $\widetilde{D}_t$ (in step $(3)$). For  $t\in \{1,\ldots,T\}$, let $\G_t$ be the quantum circuit that makes the distribution update from $\widetilde{D}^1\rightarrow \widetilde{D}^t$. Given access to $h_1,\ldots,h_{t-1}, c:\01^n\rightarrow \{-1,1\}$ and knowledge of $\varepsilon'_1,\ldots,\varepsilon'_{t-1}$, define~$\G_t$ as the map:
    \begin{align}
    \label{eq:QAB_distribution_quantumupdateC}
    \begin{split}
    \mathcal{G}_t &: \frac{1}{\sqrt{M}} \sum_{x\in S} \ket{x,c(x)} \otimes \ket{\widetilde{D}^1_x} \otimes \ket{[h_1(x) \neq c(x)], \ldots, [h_{t-1}(x) \neq c(x)]} 
    \\
    &\quad
    \rightarrow  \frac{1}{\sqrt{M}} \sum_{x\in S} \ket{x,c(x)} \otimes \ket{\widetilde{D}^{t}_x} \otimes \ket{[h_1(x) \neq c(x)], \ldots, [h_{t-1}(x) \neq c(x)]}.
    \end{split}
    \end{align}
    
    \paragraph{Details of Algorithm~\ref{alg:quantumAdaBoost}.} We are now ready to describe our quantum boosting algorithm in more details. In the $t$th step, we have quantum query access to the hypotheses $\{h_1,\ldots,h_{t-1}\}$ and knowledge of the approximate weighted errors  $\{\varepsilon'_1,\ldots,\varepsilon'_{t-1}\}$. 
      Let $U_1, U_2$ be unitaries that satisfy $U_1:\ket{0}\rightarrow \ket{\psi_1}$ and $U_2:\ket{0}\rightarrow \ket{\Phi_1}$, where
    $$
    \ket{\psi_1} = \Bigg( \frac{1}{\sqrt{M}} \sum_{x\in S} \ket{x,c(x)} \otimes \ket{\widetilde{D}^1_x} \otimes \ket{0}^{\otimes {t+1}} \Bigg),\quad  \ket{\Phi_1}= \Bigg( \frac{1}{\sqrt{M}} \sum_{x\in S} \ket{x , c(x)} \otimes \ket{ \widetilde{D}^1_x} \otimes \ket{0}^{\otimes t} \Bigg).
    $$
Recall that $\widetilde{D}^1$ is the uniform distribution over the training set $S$. We assume that theoretically one could use a quantum random access memory (QRAM) to prepare the state $\ket{\psi_0}=\frac{1}{\sqrt{M}}\sum_{x\in S}\ket{x,c(x)}$ in time $O(\log M)$. By ``use a QRAM to prepare" we mean, we can perform the operation that takes the training set $S=\{(x_i,c(x_i))\}_{i \in [M]}$ stored in a classical data structure and prepares the \emph{uniform} quantum state $\ket{\psi_0}$ in time $O(\log M)$.  Given the QRAM assumption has been controversial and seems strong in quantum machine learning, we make a couple of remarks: (i) our quantum  boosting algorithm \emph{only} requires a QRAM to prepare the \emph{uniform} superposition over classical data~$S$ at the start of each iteration. Also, our quantum algorithm does not use QRAM as an oracle for Grover-like algorithms, so the negative results of~\cite{arunachalam:qram} do not apply to our~algorithm; (ii) we use the QRAM at the start of $T=O(\log M)$ iteration of our algorithm to prepare $\ket{\psi_0}$, so even if the quantum time complexity of preparing $\ket{\psi_0}$ is $O(\sqrt{M})$, then our complexity increases by an \emph{additive} $O(\sqrt{M}\log M)$ term and we  still do not lose our quantum speedup; (iii) of course if QRAM is infeasible then we can also assume that a quantum learner has access to uniform  quantum examples $\ket{\psi_0}$ or has \emph{quantum query access} to the training examples in $S$ (i.e., can perform the map $\ket{x,b}\rightarrow \ket{x,b\cdot c(x)}$ for $x\in S$). and in both  cases we do not need a~QRAM.

    We now describe the quantum boosting algorithm. The algorithm begins by first preparing $ U_1\ket{0}\otimes U_2^{\otimes Q} \ket{0}=\ket{\psi_1} \otimes \ket{\Phi_1}^{\otimes Q}$. We then apply $(Q+1)(t-1)$ quantum queries to the oracles $\{O_{h_1}, \ldots, O_{h_{t-1}}\}$ to obtain\footnote{ To be precise, we obtain the $t$th indicator function $[h_t(x) \neq c(x)]$ as follows: first use $  O_{h_t}$ to perform the map $\frac{1}{\sqrt{M}} \sum_{x\in S} \ket{x,c(x)} \ket{\widetilde{D}^1_x} \ket{0} = \frac{1}{\sqrt{M}} \sum_{x\in S} \ket{x , c(x)} \ket{ \widetilde{D}^1_x} \ket{h_t(x) } $, next apply the CNOT gate between the second and fourth register to produce $ \frac{1}{\sqrt{M}} \sum_{x\in S} \ket{x , c(x)} \ket{ \widetilde{D}^1_x} \ket{h_t(x) \cdot c(x) }$}
    \begin{dmath}
    \label{eq:QAB_algorithm_(t-1)query}  
      \ket{\psi_2} \otimes \ket{\Phi_2}^{\otimes Q} = \Bigg( \frac{1}{\sqrt{M}} \sum_{x\in S} \ket{x,c(x)} \otimes \ket{\widetilde{D}^1_x} \otimes \ket{[h_1(x) \neq c(x)], \ldots ,[h_{t-1}(x) \neq c(x)]}\ket{0}^2 \Bigg) \otimes 
		   \Bigg( \frac{1}{\sqrt{M}} \sum_{x\in S} \ket{x , c(x)} \ket{ \widetilde{D}^1_x} \ket{[h_1(x) \neq c(x)], \ldots ,[h_{t-1}(x) \neq c(x)], 0}  \Bigg)^{\otimes Q} .
    \end{dmath}
    We then apply the unitary $\G_{t}$ on $\ket{\psi_2}$ and each of the $Q$ copies of $\ket{\Phi_2}$ to update $\widetilde{D}^1$. The resulting state is
    \begin{dmath}
    \label{eq:QAB_algorithm_distributionupdate}
	  \ket{\psi_3} \otimes \ket{\Phi_3}^{\otimes Q} = \Bigg( \frac{1}{\sqrt{M}} \sum_{x\in S} \ket{x,c(x)} \otimes \ket{\widetilde{D}^t_x} \otimes \ket{[h_1(x) \neq c(x)], \ldots ,[h_{t-1}(x) \neq c(x)]}\ket{0}^2 \Bigg) \otimes \Bigg( \frac{1}{\sqrt{M}} \sum_{x\in S} \ket{x , c(x)} \ket{ \widetilde{D}^t_x} \ket{[h_1(x) \neq c(x)], \ldots, [h_{t-1}(x) \neq c(x)],0}  \Bigg)^{\otimes Q} .
	\end{dmath}
    We now break down the algorithm into two steps, the first phase uses $\ket{\Phi_3}^{\otimes Q}$ to obtain $h_t$ and the second phase uses $h_t$ and $\ket{\psi_3}$ to compute $\varepsilon'_t$. 
    
    \paragraph{Phase (1): Obtaining hypothesis $h_t$.}  Let $V : \ket{p} \ket{0} \ket{0} \rightarrow \ket{p}  \Big( \sqrt{1-p} \ket{0} + \sqrt{p} \ket{1} \Big) \ket{ \sin^{-1} \big(\sqrt{p}\big) }$. For each of the $Q$ copies of $\ket{\Phi_3}$, append an auxiliary $\ket{0}$ and apply $V$ to obtain
        \begin{align}
    \label{eq:QAB_algorithm_controlledrefNprime}
    \ket{\Phi_4} = \frac{1}{\sqrt{M}} \sum_{x\in S} \ket{x , c(x)} \ket{ \widetilde{D}^t_x} \ket{[h_1(x) \neq c(x)], \ldots, [h_{t-1}(x) \neq c(x)]} \Big( \sqrt{\widetilde{D}^t_x} \ket{0} + \sqrt{1 - \widetilde{D}^t_x} \ket{1} \Big).
    \end{align}
    Let $W_t$ be the unitary that performs $W_t : \ket{\Phi_1} \rightarrow \ket{\Phi_4}$ and $\widetilde{W}_t = W_t U_2$ be the map $\widetilde{W}_t : \ket{0} \rightarrow \ket{\Phi_4}$. Let $Y_t$ be the unitary that uses an expected $O(\sqrt{M} \log T)$ invocations of $\widetilde{W}_t$ and $\widetilde{W}_t^{-1}$ to perform amplitude amplification (in Theorem~\ref{thm:amp_amplification}) on the state $\ket{\Phi_4}$ and with probability at least $1-O(1/T)$ produce the state $\ket{\Phi_5}$
    \begin{align}
    \label{eq:QAB_algorithm_amplitudeamplify}
    \ket{\Phi_5} =  Y_t \ket{\Phi_4} = \sum_{x\in S} \sqrt{\widetilde{D}^t_x} \ket{x , c(x)} \ket{\widetilde{D}^t_x} \ket{[h_1(x) \neq c(x)], \ldots, [h_{t-1}(x) \neq c(x)],0} +\ket{\Psi},
    \end{align}
    where $\ket{\Psi}$ is orthogonal to the first register. Observe that without $\ket{\Psi}$, the state $\ket{\Phi_5}$ is no longer a quantum state because $\{\widetilde{D}^t_x\}_x$ is a sub-normalized distribution and note that the complexity of amplitude amplification is $\Theta\Big(\sqrt{M\big/\sum_{x \in S} \widetilde{D}^t_x}\Big)=\Theta(\sqrt{M})$ using Claim\ref{claim:propertyofD_a}. Moreover, using  Claim~\ref{claim:propertyofD_a}, we have
    \begin{align}
    \label{eq:propertyofPsi}
     \|\ket{\Psi} \|\leq 1-\sum_{x \in S}\widetilde{D}^t_x\leq 30 \delta. \end{align}
Observe that if we had run the boosting algorithm with the ideal distribution $D^t$, we would have obtained the state
\begin{align}
    \label{eq:Phiprime}
    \ket{\Phi'_5}  = \sum_{x\in S} \sqrt{{D}^t_x} \ket{x , c(x)} \ket{{D}^t_x} \ket{[h_1(x) \neq c(x)], \ldots, [h_{t-1}(x) \neq c(x)],0},
\end{align}
instead of  $\ket{\Phi_5}$. We now uncompute the auxiliary registers $\ket{\widetilde{D}^t_x} \ket{[h_1(x) \neq c(x)], \ldots, [h_{t-1}(x) \neq c(x)],0}$ in $\ket{\Phi_5}$ as follows: let $\G_t^{-1}$ be the unitary which maps $\ket{\widetilde{D}^t_x}\rightarrow \ket{\widetilde{D}^1_x}$ and let $O_{h_1},\ldots,O_{h_{t-1}}$ be the query operations that uncompute the $\{[h_i(x) \neq c(x) ] \}_{i \in [t-1]}$ registers. Applying $\G_t^{-1}$ on the \emph{actual} state~$\ket{\Phi_5}$ (instead of the \emph{ideal} state $\ket{\Phi_5'}$) gives
$$
\G_t^{-1}\ket{\Phi_5}=   \sum_{x\in S} \sqrt{\widetilde{D}^t_x} \ket{x , c(x)} \ket{0} \ket{[h_1(x) \neq c(x)], \ldots, [h_{t-1}(x) \neq c(x)],0}+\G_{t}^{-1}\ket{\Psi},
$$
  and then performing $O_{h_1}^{-1},\ldots,O_{h_{t-1}}^{-1}$ gives us 
    \begin{align}
        \label{eq:phi6}
    \ket{\Phi_6}= \sum_{x\in S} \sqrt{\widetilde{D}^t_x} \ket{x , c(x)} \ket{0} \ket{0}^t+O_{h_{t-1}}\cdots O_{h_1}\cdot \G_{t}^{-1}\ket{\Psi}.
    \end{align}    
By performing the operations $\G_t^{-1}$, $O_{h_1}^{-1},\ldots,O_{h_{t-1}}^{-1}$ on the \emph{ideal} state $\ket{\Phi'_5}$, we would have 
    \begin{align}
        \label{eq:phi6'}
        \ket{\Phi'_6}= \sum_{x\in S} \sqrt{{D}^t_x} \ket{x , c(x)} \ket{0} \ket{0}^t.
    \end{align}
   Ideally, our goal would be to pass $Q$ copies of $\ket{\Phi'_6}$ to a quantum learner in order to obtain a hypothesis $h_t$. Although, we do not have access to $\ket{\Phi_6'}$, we continue our quantum boosting algorithm by passing $Q$ copies of $\ket{\Phi_6}$ to a quantum learner (instead of $Q$ copies of $\ket{\Phi_6'}$). A priori, it is not clear what will be the output of the quantum learner on input $\ket{\Phi_6}^{\otimes Q}$. In order to understand this, we first show that $\ket{\Phi_6}$ and $\ket{\Phi_6'}$ are close. Using this, it is not hard to see that a quantum learner would \emph{behave similarly} when given $\ket{\Phi_6}^{\otimes Q}$ instead of $\ket{\Phi_6'}^{\otimes Q}$. In order to formalize this, we first state the following claim which we prove later.
    \begin{claim}
    \label{claim:distancebetweenstates_a}
    Let $\ket{\Phi_6}$ and $\ket{\Phi'_6}$ be as defined in Eq.~\eqref{eq:phi6},~\eqref{eq:phi6'}. Then we have $|\langle \Phi_6 \vert \Phi'_6 \rangle| \geq 1-50 \delta$.
    \end{claim}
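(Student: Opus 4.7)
The plan is to expand $\langle \Phi_6 | \Phi'_6 \rangle$ using the natural decomposition $\ket{\Phi_6} = \ket{a} + \ket{b}$ from Eq.~\eqref{eq:phi6}, where $\ket{a} = \sum_{x\in S}\sqrt{\widetilde{D}^t_x}\,\ket{x,c(x)}\ket{0}\ket{0}^t$ is the principal summand and $\ket{b} = O_{h_{t-1}}^{-1}\cdots O_{h_1}^{-1}\,\G_t^{-1}\ket{\Psi}$ is the residual, and to bound the two contributions $\langle a | \Phi'_6 \rangle$ and $\langle b | \Phi'_6 \rangle$ separately.

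The residual piece is the easier one: each of $\G_t^{-1}$ and the $O_{h_i}^{-1}$ is unitary, so $\|\ket{b}\| = \|\ket{\Psi}\| \leq 30\delta$ by Eq.~\eqref{eq:propertyofPsi}, and since $\ket{\Phi'_6}$ is a unit vector (because $D^t$ is a proper probability distribution), Cauchy--Schwarz yields $|\langle b | \Phi'_6 \rangle| \leq \|\ket{b}\|\cdot \|\ket{\Phi'_6}\| \leq 30\delta$.

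For the principal overlap, orthonormality of the computational basis $\{\ket{x,c(x)}\ket{0}^{t+1}\}_{x\in S}$ gives
\[
\langle a | \Phi'_6 \rangle \;=\; \sum_{x \in S} \sqrt{\widetilde{D}^t_x \cdot D^t_x},
\]
which is the classical fidelity (Bhattacharyya coefficient) between the sub-normalized measure $\widetilde{D}^t$ and the probability distribution $D^t$. I would lower-bound this fidelity by comparing the algorithm's update rules (Eqs.~\eqref{eq:QAB_distribution_classicalupdate_a},~\eqref{eq:QAB_distribution_classicalupdate_b}) against the ideal one in Eq.~\eqref{eq:QAB_distribution_classicalupdatetrue}. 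In the clean `yes' branch, $\widetilde{D}^t_x = D^t_x/(1+2\delta)$ uniformly in $x$, so the sum collapses to $1/\sqrt{1+2\delta} \geq 1-\delta$. In the `no' branch the two measures are distorted non-uniformly by the factors $(2-10\delta)$ and $(10\delta)$, but that branch is triggered only when $\widetilde{\varepsilon}_{t-1} < 1/(QT^2) = 10\delta$, so the subset on which they disagree substantially carries tiny $\widetilde{D}^{t-1}$-mass; a direct estimate, combined with Claim~\ref{claim:propertyofD_a} to control $\sum_x \widetilde{D}^t_x$, still yields fidelity at least $1 - O(\delta)$.

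Combining the two bounds via the triangle inequality delivers $|\langle \Phi_6 | \Phi'_6 \rangle| \geq 1 - O(\delta)$, which matches the claimed $1-50\delta$ once the constants are tracked carefully. I expect the main obstacle to be the fidelity bound in the `no' branch, since there $\widetilde{D}^t$ and $D^t$ are genuinely not related by a scalar rescaling; the resolution has to exploit that the branch is invoked only when the weighted error is already below $10\delta$ to confine the non-uniform distortion to a set of negligible mass, borrowing bookkeeping similar to what is needed for Claim~\ref{claim:distancefromtrueeps}.
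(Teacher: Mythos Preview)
Your decomposition $\ket{\Phi_6}=\ket{a}+\ket{b}$, the residual bound $|\langle b|\Phi'_6\rangle|\le\|\ket{\Psi}\|\le 30\delta$ via unitarity and Eq.~\eqref{eq:propertyofPsi}, and the identification $\langle a|\Phi'_6\rangle=\sum_{x\in S}\sqrt{\widetilde D^t_x D^t_x}$ all match the paper's argument exactly.

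Where you diverge is in bounding that fidelity. The point you are missing is that $D^t$ is, in \emph{both} branches, simply the renormalization of $\widetilde D^t$. In the `yes' branch the updates~\eqref{eq:QAB_distribution_classicalupdate_a} and~\eqref{eq:QAB_distribution_classicalupdatetrue} share the same $x$-dependent numerator $\widetilde D^{t-1}_x\exp(-\alpha'_{t-1}c(x)h_{t-1}(x))$ and differ only in the normalizing constant. In the `no' branch the paper's proof takes $D^t$ to carry the same $\kappa$-factors as~\eqref{eq:QAB_distribution_classicalupdate_b}, with the exact normalizer $(1-\widetilde\varepsilon_{t-1})e^{-\alpha'_{t-1}+\kappa_0}+\widetilde\varepsilon_{t-1}e^{\alpha'_{t-1}+\kappa_1}$ replacing the approximate one. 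Hence $\widetilde D^t_x/D^t_x$ is a scalar independent of $x$ in both cases, equal to $\sum_{y\in S}\widetilde D^t_y$, and
\[
\sum_{x\in S}\sqrt{\widetilde D^t_x\,D^t_x}\;=\;\sqrt{\sum_{x\in S}\widetilde D^t_x}\;\ge\;\sqrt{1-30\delta}
\]
directly from Claim~\ref{claim:propertyofD_a}. There is no non-uniform distortion in the `no' branch, and the small-mass argument you anticipate is unnecessary. (The paper actually recomputes the mass explicitly rather than citing Claim~\ref{claim:propertyofD_a}, getting the slightly sharper $1-2\delta$ in the `yes' case and $1-15\delta$ in the `no' case; either way the reverse triangle inequality then gives $|\langle\Phi_6|\Phi'_6\rangle|\ge 1-50\delta$.)

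A related minor correction: in the `yes' branch the scalar ratio is not exactly $1/(1+2\delta)$, because the algorithm's normalizer $2(1+2\delta)\sqrt{\varepsilon'_{t-1}(1-\varepsilon'_{t-1})}$ and the ideal $Z_{t-1}=(1-\widetilde\varepsilon_{t-1})e^{-\alpha'_{t-1}}+\widetilde\varepsilon_{t-1}e^{\alpha'_{t-1}}$ coincide only when $\widetilde\varepsilon_{t-1}=\varepsilon'_{t-1}$; the correct scalar is $\sum_x\widetilde D^t_x$.
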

Recall that $\ket{\Phi_6'}$ is the ideal state that satisfies the following: suppose $Q$ copies of $\ket{\Phi'_6}$ are given to a weak quantum learner, then with probability at least $1-1/T$, the learner outputs a weak hypothesis~$h_t$. We now show that the same learner, when fed $Q$ copies of $\ket{\Phi_6}$ (instead of $\ket{\Phi'_6}$) will output~$h_t$ with probability at least $1-17/T$. In order to see this, let $p'=\mathsf{Pr}[ \A$ outputs $h_t$ given $\ket{\Phi'_6}^{\otimes Q}] \geq 1 - 1/T$ and $p=\mathsf{Pr}[ \A$ outputs $h_t$ given $\ket{\Phi_6}^{\otimes Q}]$. Let $\mathcal{H}$ be the hypothesis class and suppose $\{\mathsf{E}_{h_t}\}_{h_t \in \mathcal{H}}$ (satisfying $\sum_{h_t \in \mathcal{H}} \mathsf{E}_{h_t} = \Id$) is the final POVM performed by $\A$. Then we have the following,
    \begin{align}  
    \label{eq:boundingprobofp-p'1_a}
    \begin{aligned}
	\big\lvert p'-p\big\rvert &=\Big\lvert \Tr \big( \mathsf{E}_{h_t} \vert \Phi'_6 \rangle \langle \Phi'_6 \vert^{\otimes Q} \big) - \Tr \big( \mathsf{E}_{h_t}  \vert \Phi_6 \rangle \langle \Phi_6 \vert^{\otimes Q}  \big) \Big\rvert\\
		&\leq \sum_{h_t \in \mathcal{H}} \Big\lvert \Tr \big( \mathsf{E}_{h_t} \vert \Phi'_6 \rangle \langle \Phi'_6 \vert^{\otimes Q} \big) - \Tr \big( \mathsf{E}_{h_t}  \vert \Phi_6 \rangle \langle \Phi_6 \vert^{\otimes Q}  \big) \Big\rvert \\
		&\leq 2 \norm{ (|\Phi'_6 \rangle \langle \Phi'_6 |)^{\otimes Q} - ( |\Phi_6 \rangle \langle \Phi_6 | )^{\otimes Q}  }_1
		\\
		&=4(1-\langle \Phi_6\vert \Phi'_6\rangle^{2Q})^{1/2} 
		\\
		&\leq 4(1-(1-50\delta )^{2Q})^{1/2} \leq 4(1-(1-50 Q \delta)^2)^{1/2}\leq 16/T,
	\end{aligned}
	\end{align}
	where we have used the definition of trace distance in the second equality, Claim~\ref{claim:distancebetweenstates_a} in the third inequality, Bernoulli's inequality $(1-x)^t\geq 1-xt$  (for $x\leq 1$ and $t\geq 0$) in the penultimate inequality and $\delta=1/(10 QT^2)$ in the final inequality. Additionally, the second inequality follows from the definition of the trace~distance between quantum states
    $$
    \norm{\rho - \sigma}_1 = \max_{\{\mathsf{E}_m\}} \frac{1}{2} \sum_m \vert \Tr(E_m (\rho - \sigma))  \vert.
    $$
    In particular, suppose we have a weak learner $\A$ that outputs $h_t$ with probability at least $p'$, then for every $t \in [T]$, we have 
	\begin{align}  
	p=\mathsf{Pr}[ \A \text{ outputs }h_t \text{ given }\ket{\Phi_6}^{\otimes Q}] \geq p' - 16/T \geq 1 - 1/T - 16/T = 1 - 17/T.
	\end{align}
        Hence, on passing $\ket{\Phi_6}^{\otimes Q}$ to a quantum learner $\mathcal{A}$, it outputs a weak hypothesis $h_t$ with  probability at least $ 1 - 17/T$ using Eq.~\eqref{eq:boundingprobofp-p'1_a}. We assume that the output $h_t$ is presented in terms of an oracle~$O_{h_t}$ (which on query $\ket{x,b}$ outputs $\ket{x,b\cdot h_t(x)}$ for all $b\in \pmset{},x\in \01^n$). 
    
    \paragraph{Phase (2): Computing $\varepsilon'_t$.}   Using $O_{h_t}$ produced in Phase (1), we now perform the query operation~$O_{h_t}$ on $\ket{\psi_3}$ (defined in Eq.~\eqref{eq:QAB_algorithm_distributionupdate}) and obtain
	\begin{align}
	\label{eq:QAB_algorithm_(t-th)query}
	  \ket{\psi_4}=O_{h_t}\ket{\psi_3}= \frac{1}{\sqrt{M}} \sum_{x\in S} \ket{x,c(x)} \otimes \ket{\widetilde{D}_x^t} \otimes \ket{[h_1(x) \neq c(x)], \ldots, [h_{t}(x) \neq c(x)],0} .
	\end{align}
    Using arithmetic operations, one can additionally produce the following state
    \begin{align}
    \label{eq:QAB_algorithm_arithmeticprocess}
	  \ket{\psi_5} = \frac{1}{\sqrt{M}} \sum_{x\in S} \ket{x,c(x)} \otimes \ket{\widetilde{D}_x^t \cdot [h_{t}(x) \neq c(x)] } \otimes \ket{[h_1(x) \neq c(x)], \ldots, [h_{t}(x) \neq c(x)],0}.
    \end{align}
    We now apply the controlled reflection operator $V: \ket{p} \ket{0} \ket{0} \rightarrow \ket{p}  \Big( \sqrt{1-p} \ket{0} + \sqrt{p} \ket{1} \Big) \ket{ \sin^{-1} \big(\sqrt{p}\big) }$ (where we store $\sin^{-1}(\cdot)$ up to $n$ bits of accuracy) to~$\ket{\psi_5}$ to obtain
    $$
      \ket{\psi_6}= \frac{1}{\sqrt{M}} \sum_{x\in S} \ket{x,c(x)} \otimes \ket{ \beta_x^t } \otimes \ket{[h_1(x) \neq c(x)], \ldots, [h_{t}(x) \neq c(x)]} \otimes \Big( \sqrt{1 - \beta^t_x} \ket{0} + \sqrt{\beta^t_x} \ket{1} \Big) \otimes \ket{\sin^{-1} \big(\sqrt{ \beta_x^t }\big)},
    $$
    where $\beta^t_x = \widetilde{D}^t_x [h_{t}(x) \neq c(x)]$. We  can rewrite the above equation as
      \begin{align}
    \label{eq:QAB_algorithm_controlreflection}
    \ket{\psi_6} &= \sqrt{ \widetilde{\varepsilon}_t/ M } \ket{\phi_1} \ket{1}+ \sqrt{1 - \widetilde{\varepsilon}_t/M } \ket{\phi_0} \ket{0} ,
    \end{align}
     where  $\widetilde{\varepsilon}_t = \sum_{x\in S } \beta_x^t = \sum_{x\in S }\widetilde{D}^t_x [h_{t}(x) \neq c(x)]$ and $\ket{\phi_0}$, $\ket{\phi_1}$ are defined as
    \begin{subequations} \label{eq:QAB_algorithm_normalizedstates}
	\begin{align} 
	  \ket{\phi_0} &= \frac{1}{\sqrt{ M }}  \sum_{x\in S}  \frac{\sqrt{1- \beta_x^t} }{\sqrt{1- \widetilde{\varepsilon}_t/ M } } \ket{x,c(x)} \otimes \ket{ \beta_x^t } \otimes \ket{[h_1(x) \neq c(x)], \ldots, [h_{t}(x) \neq c(x)]} \otimes  \ket{\sin^{-1} \big(\sqrt{ \beta_x^t }\big)}, 
		\\
	  \ket{\phi_1} &= \frac{1}{\sqrt{ M }}  \sum_{x\in S}  \frac{\sqrt{\beta_x^t} }{\sqrt{ \widetilde{\varepsilon}_t/ M } } \ket{x,c(x)} \otimes \ket{ \beta_x^t } \otimes \ket{[h_1(x) \neq c(x)], \ldots, [h_{t}(x) \neq c(x)]} \otimes  \ket{\sin^{-1} \big(\sqrt{ \beta_x^t }\big)}  . 
	\end{align}
	\end{subequations}
    Let $F_t$ be the unitary given by the map $F_t : \ket{\psi_1} \rightarrow \ket{\psi_6}$ and $\widetilde{F}_t = F_t U_1$ be the map $\widetilde{F}_t : \ket{0} \rightarrow \ket{\psi_6} $. Let $P_t$ be the unitary that implements amplitude estimation using $ J_t $ invocations of $\widetilde{F}_t$ and $\widetilde{F}_t^{-1}$.  In order to approximate $\widetilde{\varepsilon}_t$, we run Algorithm~\ref{alg:multiplicativeamplitudeestimation} on the state $\ket{\psi_6}$ assuming  unitary access to $\widetilde{F}_t$. Depending on the output $\varepsilon'_t$ of Algorithm~\ref{alg:multiplicativeamplitudeestimation}, we  compute $\alpha'_t=\frac{1}{2}\ln \Big(\frac{1-\varepsilon'_t}{ \varepsilon'_t}\Big)$. Using $\varepsilon'_t$ and $\alpha'_t$, we now update the distribution from $\widetilde{D}^t$ to $\widetilde{D}^{t+1}$,

\subsubsection{Proof of claims}
\label{sec:claimsboosting}
In this section, we state and prove a few claims from the previous section. We restate these claims for convenience of the reader. Additionally, we will crucially use the following relations multiple times in this section: for every $t\geq 1$, let $\widetilde{D}^t$ be the sub-normalized distribution defined in Algorithm~\ref{alg:quantumAdaBoost} (in particular  Eq.~\eqref{eq:QAB_distribution_classicalupdate_a}) when Algorithm~\ref{alg:multiplicativeamplitudeestimation} outputs `yes',  then
\begin{align}
\label{eq:defnofZtrepeat}
\begin{aligned}
 \sum_{x \in S} \widetilde{D}^t_x \exp(-\alpha'_t c(x) h_t(x))&=\sum_{i:h_t(x_i) = c(x_i)} \widetilde{D}^t(x_i) \cdot  e^{-\alpha_t^\prime } + \sum_{i:h_t(x_i) \neq c(x_i)} \widetilde{D}^t(x_i)\cdot  e^{\alpha_t^\prime}\\
  &= (1 - \widetilde{\varepsilon}_t) \cdot  e^{-\alpha'_t} + \widetilde{\varepsilon}_t \cdot  e^{\alpha_t^\prime},
  \end{aligned}
  \end{align}
  where $\widetilde{\varepsilon}_t = \Pr_{x\sim \widetilde{D}^t} [h_t(x)\neq c(x)]$.
 Also let $\widetilde{D}^t$ be the  sub-normalized distribution defined in Algorithm~\ref{alg:quantumAdaBoost} (in particular  Eq.~\eqref{eq:QAB_distribution_classicalupdate_b}) when Algorithm~\ref{alg:multiplicativeamplitudeestimation} outputs `no', then 
 \begin{align}
\label{eq:defnofZtrepeat_b}
\begin{aligned}
 \sum_{x \in S} \widetilde{D}^t_x \exp(-\alpha'_t c(x) h_t(x)+ \kappa_{[h_t(x) \neq c(x)]} )&=\sum_{i:h_t(x_i) = c(x_i)} \widetilde{D}^t(x_i) \cdot  e^{-\alpha_t^\prime + \kappa_0} + \sum_{i:h_t(x_i) \neq c(x_i)} \widetilde{D}^t(x_i)\cdot  e^{\alpha_t^\prime + \kappa_1}\\
  &= (1 - \widetilde{\varepsilon}_t) \cdot  e^{-\alpha'_t + \kappa_0 } + \widetilde{\varepsilon}_t \cdot  e^{\alpha_t^\prime + \kappa_1}.
  \end{aligned}
  \end{align}

\begin{customthm}{4.3}
\label{claim:propertyofD}
Let $t\geq 1$, $\widetilde{D}^t:\01^n\rightarrow [0,1]$ be as defined in Eq.~\eqref{eq:QAB_distribution_classicalupdate_a}, \eqref{eq:QAB_distribution_classicalupdate_b}. Then $\sum_{x \in S} \widetilde{D}^t_x  \in [1-30 \delta,1]$. 
\end{customthm}

\begin{proof} We divide the proof of the claim into two cases. Recall $\delta = 1/(10 Q T^2)$.

\textbf{Case I:} Suppose Algorithm~\ref{alg:multiplicativeamplitudeestimation} outputs `yes' in the $t$th iteration. Recall the definition of $\widetilde{D}^{t+1}$.
\begin{align}
    \widetilde{D}^{t+1}_x=  \frac{\widetilde{D}^{t}_x}{2(1+2\delta) \sqrt{\varepsilon'_t (1 - \varepsilon'_t) }} \times \begin{cases} 
      e^{-\alpha'_t} & \text{ if } h_t(x) = c(x) \\
      e^{\alpha'_t} & \text{ otherwise },
   \end{cases}
   \end{align}
  where $\alpha'_t=\frac{1}{2}\ln \Bigg(\frac{1-\varepsilon'_t}{\varepsilon'_t} \Bigg)$ and $|\widetilde{\varepsilon}_t - \varepsilon'_t|\leq \delta \varepsilon'_t$. In order to prove the lower bound, observe that
  \begin{align*}
      \sum_{x \in S} \widetilde{D}^{t+1}_x&=\frac{1}{2(1+2\delta) \sqrt{\varepsilon'_t (1 - \varepsilon'_t) }} \sum_{x\in S} \widetilde{D}^t_x \exp(-\alpha'_t c(x) h_t(x))
      \\
      &= \frac{\sum_{x \in S} \widetilde{D}^t_x \exp(-\alpha'_t c(x) h_t(x))}{(1-\widetilde{\varepsilon}_t)e^{-\alpha'_t}+\widetilde{\varepsilon}_t e^{\alpha'_t}} \cdot \frac{(1-\widetilde{\varepsilon}_t)e^{-\alpha'_t}+ \widetilde{\varepsilon}_t e^{\alpha'_t}}{2(1+2 \delta) \sqrt{\varepsilon'_t (1 - \varepsilon'_t) }}
      \\
      &=\frac{(1-\widetilde{\varepsilon}_t)e^{-\alpha'_t}+\widetilde{\varepsilon}_te^{\alpha'_t}}{2(1+2\delta) \sqrt{\varepsilon'_t (1 - \varepsilon'_t) }}\tag{using Eq.~\eqref{eq:defnofZtrepeat}}
      \\
      &=\frac{1}{2(1+2\delta)} \frac{1}{\sqrt{\varepsilon'_t (1 - \varepsilon'_t) }} \cdot \Bigg( (1-\widetilde{\varepsilon}_t)\sqrt{\frac{\varepsilon'_t}{1-\varepsilon'_t}}+ \widetilde{\varepsilon}_t \sqrt{\frac{1-\varepsilon'_t}{\varepsilon'_t}} \Bigg)\tag{using the definition of $\alpha'_t$}
      \\
      &=\frac{1}{2(1+2 \delta)} \Bigg( \frac{1-\widetilde{\varepsilon}_t}{1-\varepsilon'_t}+ \frac{\widetilde{\varepsilon}_t}{\varepsilon'_t}\Bigg),
  \end{align*}
  where the second equality used $\sum_{x\in S} \widetilde{D}^t_x \exp(-\alpha'_t c(x) h_t(x))= (1-\widetilde{\varepsilon}_t)e^{-\alpha'_t}+\widetilde{\varepsilon}_t e^{\alpha'_t}$ (which follows from Eq.~\eqref{eq:defnofZt}).  
  Since $|\widetilde{\varepsilon}_t - \varepsilon'_t|\leq \delta \varepsilon'_t$, we have
\begin{align}
    \label{eq:lowerboundeps/eps'_a}
    \frac{\widetilde{\varepsilon}_t}{\varepsilon'_t} \geq \frac{\varepsilon'_t( 1 -\delta)}{\varepsilon'_t}= 1-\delta.
\end{align}
Additionally,
\begin{align}
\label{eq:lowerbound1-eps/1-eps'_a}
    \frac{1-\widetilde{\varepsilon}_t}{1-\varepsilon'_t}\geq \frac{1-\varepsilon'_t (1+\delta) }{1-\varepsilon'_t}=1-\frac{\delta \varepsilon'_t}{1-\varepsilon'_t}\geq  1-2\delta,
\end{align}
where the second inequality uses $\varepsilon'_t\leq 2/3$ (since we assume $\varepsilon_t\leq 1/2$). Putting together Eq.~\eqref{eq:lowerboundeps/eps'_a} and Eq.~\eqref{eq:lowerbound1-eps/1-eps'_a} into the expression for $ \sum_{x\in S} \widetilde{D}^{t+1}_x$ , we get
$$
\sum_{x\in S} \widetilde{D}^{t+1}_x\geq \frac{2-3\delta}{2(1+2\delta)}\geq 1-4\delta\geq 1-30\delta.
$$
Next, we prove the upper bound. Note that 
\begin{align}
\label{eq:upperboundeps/eps'_a}
    \frac{\widetilde{\varepsilon}_t}{\varepsilon'_t} \leq \frac{\varepsilon'_t( 1 + \delta)}{\varepsilon'_t}= 1 + \delta,
\end{align}
and
\begin{align}
\label{eq:upperbound1-eps/1-eps'_a}
    \frac{1 - \widetilde{\varepsilon}_t}{1 - \varepsilon'_t} \leq \frac{1-\varepsilon'_t (1 - \delta) }{1-\varepsilon'_t}=1 + \frac{\delta \varepsilon'_t}{1-\varepsilon'_t}\leq  1 + 2\delta,
\end{align}
where the second inequality uses $\varepsilon'_t\leq 2/3$. Using Eq.~\eqref{eq:upperboundeps/eps'_a},~\eqref{eq:upperbound1-eps/1-eps'_a}, we have
\begin{align*}
\sum_{x\in S} \widetilde{D}^{t+1}_x&=\frac{1}{2(1+2\delta)} \Bigg( \frac{1-\widetilde{\varepsilon}_t}{1-\varepsilon'_t}+ \frac{\widetilde{\varepsilon}_t}{\varepsilon'_t} \Bigg)
\leq   \frac{1}{2(1+2 \delta)}\cdot \Big( ( 1 + 2 \delta ) + ( 1 + \delta) \Big) 
\leq  \frac{2 + 4 \delta}{2(1+2\delta)} = 1.
\end{align*}

\textbf{Case II:} Suppose Algorithm~\ref{alg:multiplicativeamplitudeestimation} outputs `no' in the $t$th iteration.  The distribution  $\widetilde{D}^{t+1}$ is then updated according to 
  \begin{align}
    \widetilde{D}^{t+1}_x &= \frac{\widetilde{D}^{t}_x}{(1 + 2/(QT^2)) Z_{t}}
          \times \begin{cases} 
          (2 - 1/(Q T^2))e^{-\alpha'_t} & \text{ if } h_{t}(x) = c(x) \\
          (1/(QT^2))e^{\alpha'_t} & \text{ otherwise },
          \end{cases}
   \end{align}
  where we use $\varepsilon'_t = 1/(Q T^2)$, $\alpha'_t=\ln \sqrt{ (1-\varepsilon'_t)/\varepsilon'_t}$ and $Z_t = 2 \sqrt{\varepsilon'_t (1 - \varepsilon'_t) }$. Let $\kappa_0 = \ln(2 - 1/(Q T^2))$ and $\kappa_1 = \ln(1/(Q T^2))$.  In order to prove the upper and lower bounds of the claim, we first observe that
    \begin{align*}
    \label{eq:understandingdtilde}
      \sum_{x\in S} \widetilde{D}^{t+1}_x&=\frac{1}{(1+2/(Q T^2)) \cdot 2\sqrt{\varepsilon'_t (1 - \varepsilon'_t) }} \sum_{x\in S} \widetilde{D}^t_x \exp\big(-\alpha'_t c(x) h_t(x) + \kappa_{[h_t(x) \neq c(x)]}\big)
      \\
      &= \frac{ (1-\widetilde{\varepsilon}_t)e^{-\alpha'_t + \kappa_0} + \widetilde{\varepsilon}_t e^{\alpha'_t + \kappa_1 } }{2(1+2/(QT^2)) \sqrt{\varepsilon'_t (1 - \varepsilon'_t) }} \tag{using Eq.~\eqref{eq:defnofZtrepeat_b}}
      \\
      &=\frac{ (2 - 1/(Q T^2))(1-\widetilde{\varepsilon}_t)e^{-\alpha'_t}+(1/(Q T^2))\widetilde{\varepsilon}_t e^{\alpha'_t} }{2(1+2/(QT^2)) \sqrt{\varepsilon'_t (1 - \varepsilon'_t) }}     \\
      &=\frac{1}{(1+2/(QT^2) )} \Bigg( \Bigg( 1 - \frac{1}{2 Q T^2} \Bigg)\cdot  \frac{1-\widetilde{\varepsilon}_t}{1-\varepsilon'_t}+  \frac{1}{2 Q T^2} \cdot  \frac{\widetilde{\varepsilon}_t}{\varepsilon'_t}\Bigg),    
  \end{align*}
 where the second equality used $\widetilde{\varepsilon}_t = \sum_{x:h_t(x) \neq c(x)} \widetilde{D}^t_x$, third equality follows by the definition of $\kappa_0,\kappa_1$ and the final equality used $\alpha'_t=\frac{1}{2}\ln \Big(\frac{1-\varepsilon'_t}{\varepsilon'_t} \Big)$.  We now prove the lower bound in the claim: 
  \begin{align*}
            \sum_{x\in S} \widetilde{D}^{t+1}_x&=  \frac{1}{(1+2/(QT^2) )} \Bigg( \Bigg( 1 - \frac{1}{2 Q T^2} \Bigg) \cdot \frac{1-\widetilde{\varepsilon}_t}{1-\varepsilon'_t}+  \frac{1}{2 Q T^2} \cdot \frac{\widetilde{\varepsilon}_t}{\varepsilon'_t}\Bigg) 
            \\
            &\geq \frac{1}{(1+2/(QT^2) )} \Bigg( \Bigg( 1 - \frac{1}{2 Q T^2} \Bigg)\cdot  \frac{1-\widetilde{\varepsilon}_t}{1-\varepsilon'_t} \Bigg)\tag{using $\widetilde{\varepsilon}_t> 0$}\\
      &\geq \frac{1}{(1+2/(QT^2) )} \cdot \Bigg( 1 - \frac{1}{2 Q T^2} \Bigg) \tag{using $1-\widetilde{\varepsilon}_t \geq 1 - \varepsilon'_t$}\\
       &\geq 1 - \frac{3}{ Q T^2} = 1 - 30 \delta \tag{since $\delta = \frac{1}{10 Q T ^2}$ }, 
  \end{align*}
  where we used $\widetilde{\varepsilon}_t \leq \varepsilon'_t$ in the penultimate inequality because we are in the `no' instance of Lemma~\ref{lem:updatingapproxdistribution} in Case II of our proof. We finally get the desired upper bound in the claim as follows
\begin{align*}
\sum_{x\in S} \widetilde{D}^{t+1}_x &=  \frac{1}{(1+2/(QT^2) )} \Bigg( \Bigg( 1 - \frac{1}{2 Q T^2} \Bigg) \cdot \frac{1-\widetilde{\varepsilon}_t}{1-\varepsilon'_t}+  \frac{1}{2 Q T^2} \cdot \frac{\widetilde{\varepsilon}_t}{\varepsilon'_t}\Bigg) 
\\
&\leq   \frac{1}{(1+2/(QT^2) )} \Bigg( \Bigg( 1 - \frac{1}{2 Q T^2} \Bigg) \cdot \frac{1-\widetilde{\varepsilon}_t}{1-1/(QT^2)}+  \frac{1}{2 Q T^2}  \Bigg) \tag{using 
$ \widetilde{\varepsilon}_t \leq \varepsilon'_t = 1/(QT^2)$}
\\
&\leq   \frac{1}{(1+2/(QT^2) )} \Bigg( \Bigg( 1 - \frac{1}{2 Q T^2} \Bigg) \cdot \frac{1}{1-1/(QT^2)}+  \frac{1}{2 Q T^2}  \Bigg) \tag{using $1- \widetilde{\varepsilon}_t \leq 1$}
\\
&\leq  \frac{1}{(1+2/(QT^2) )} \Bigg( \Bigg( 1 - \frac{1}{2 Q T^2} \Bigg) \cdot \Bigg(1 + \frac{2}{QT^2}\Bigg)+  \frac{1}{2 Q T^2}  \Bigg) \leq 1.
\end{align*}
 \end{proof}

\begin{customthm}{4.4}
\label{claim:distancefromtrueeps_a}
Let $t\geq 1$, $\widetilde{\varepsilon}_t = \Pr_{x\sim \widetilde{D}^t} [h_t(x)\neq c(x)]$ be the weighted error corresponding to the  sub-normalized distribution $\widetilde{D}^t$ and $\varepsilon_t = \Pr_{x\sim D^t} [h_t(x)\neq c(x)]$ correspond to the true distribution $D^t$. Then $\vert  \widetilde{\varepsilon}_t - \varepsilon_t \vert \leq 50 \delta$. 
\end{customthm}

\begin{proof} We break down the proof of the claim into two cases.

\textbf{Case I:}  Algorithm~\ref{alg:multiplicativeamplitudeestimation} outputs `yes' in the $t$th iteration. Recall the definition of the sub-normalized distribution $\widetilde{D}^t$ in Eq.~\eqref{eq:QAB_distribution_classicalupdate_a} and the true distribution $D^t$ in Eq.~\eqref{eq:QAB_distribution_classicalupdatetrue}. We then have 
  \begin{align*}
    \vert \widetilde{\varepsilon}_{t+1} - \varepsilon_{t+1} \vert  &= \Bigg\vert \sum_{x\in S} \widetilde{D}^{t+1}_x [h_{t+1}(x)\neq c(x)] - \sum_{x \in S} D^{t+1}_{x} [h_{t+1}(x)\neq c(x)]   \Bigg\vert
    \\
    &\leq \sum_{x\in S} \Big\vert\widetilde{D}^{t+1}_x - D^{t+1}_x   \Big\vert \cdot \Big\vert [h_{t+1}(x)\neq c(x)] \Big\vert 
    \\
    &\leq \sum_{x\in S} \Big\vert \widetilde{D}^{t+1}_x - D^{t+1}_x   \Big\vert  \tag{since $ [h_{t+1}(x)\neq c(x)]  \leq 1$ }
    \\
     &= \sum_{x\in S}  \widetilde{D}^t_x \exp(-\alpha'_t c(x) h_t(x)) \Bigg\vert \frac{1}{ 2(1 + 2 \delta ) \sqrt{\varepsilon'_t (1 - \varepsilon'_t) }} - \frac{1}{(1 - \widetilde{\varepsilon}_t) \cdot  e^{-\alpha'_t} + \widetilde{\varepsilon}_t \cdot  e^{\alpha_t^\prime}}  \Bigg\vert 
     \\
     &= \frac{\sum_{x\in S} \widetilde{D}^t_x \exp(-\alpha'_t c(x) h_t(x))}{(1 - \widetilde{\varepsilon}_t)   e^{-\alpha'_t} + \widetilde{\varepsilon}_t   e^{\alpha'_t}} \Bigg\vert  \frac{(1 - \widetilde{\varepsilon}_t)   e^{-\alpha'_t} + \widetilde{\varepsilon}_t   e^{\alpha'_t}}{2(1 + 2 \delta ) \sqrt{\varepsilon'_t (1 - \varepsilon'_t) }} - 1  \Bigg\vert
     \\
     &= \frac{1}{2 (1 + 2 \delta)} \Bigg\vert \frac{1 - \widetilde{\varepsilon}_t}{1 - \varepsilon'_t} + \frac{\widetilde{\varepsilon}_t}{\varepsilon'_t} - 2(1+2\delta) \Bigg\vert \tag{using Eq.~\eqref{eq:defnofZtrepeat}}
     \\
     &\leq \frac{1}{2 (1 + 2 \delta)} \Bigg( \Bigg\vert \frac{ \widetilde{\varepsilon}_t - \varepsilon'_t}{1 - \varepsilon'_t} \Bigg\vert + \Bigg\vert \frac{ \widetilde{\varepsilon}_t - \varepsilon'_t}{ \varepsilon'_t} \Bigg\vert + 4 \delta \Bigg)  \tag{using triangle inequality }
     \\
     &\leq \frac{\delta}{2 (1 + 2 \delta)} \Bigg( \Bigg\vert \frac{\varepsilon'_t}{1 - \varepsilon'_t} \Bigg\vert + 5 \Bigg) \tag{using $|\widetilde{\varepsilon}_t - \varepsilon'_t|\leq \delta \varepsilon'_t$}
     \\
      &\leq \frac{7 \delta}{2 (1 + 2 \delta)} \leq 4 \delta. \tag{using $ \varepsilon'_t \leq 2/3$}
\end{align*}

  \textbf{Case II:} Algorithm~\ref{alg:multiplicativeamplitudeestimation} outputs `no' in the $t$th iteration. Recall the definition of the sub-normalized distribution $\widetilde{D}^t$ in Eq.~\eqref{eq:QAB_distribution_classicalupdate_b}. Let $\kappa_0 = \ln(2 - 1/(Q T^2))$  and $\kappa_1 = \ln(1/(Q T^2))$. We have 
  \begin{align*}
    &\vert \widetilde{\varepsilon}_{t+1} - \varepsilon_{t+1} \vert\\
    &= \Bigg\vert \sum_{x\in S} \widetilde{D}^{t+1}_x [h_{t+1}(x)\neq c(x)] - \sum_{x \in S} D^{t+1}_{x} [h_{t+1}(x)\neq c(x)]   \Bigg\vert 
     \\
     &\leq   \sum_{x\in S} |\widetilde{D}^{t+1}_x -  D^{t+1}_{x} |\\ 
     &= \sum_{x\in S} \widetilde{D}^t_x \exp(-\alpha'_t c(x) h_t(x) + \kappa_{[h_t(x) \neq c(x) ]} )\cdot    \Bigg\vert  \frac{1}{2(1 + 2/(QT^2) ) \sqrt{\varepsilon'_t (1 - \varepsilon'_t) }} - \frac{1}{(1 - \widetilde{\varepsilon}_t)   e^{-\alpha'_t+\kappa_0}  + \widetilde{\varepsilon}_t   e^{\alpha'_t+\kappa_1} }  \Bigg\vert
     \\
     &= \frac{\sum_{x\in S} \widetilde{D}^t_x \exp(-\alpha'_t c(x) h_t(x) + \kappa_{[h_t(x) \neq c(x) ]} )}{(1 - \widetilde{\varepsilon}_t)   e^{-\alpha'_t+\kappa_0}  + \widetilde{\varepsilon}_t   e^{\alpha'_t+\kappa_1}} \cdot  \Bigg\vert  \frac{(1 - \widetilde{\varepsilon}_t)   e^{-\alpha'_t+\kappa_0}  + \widetilde{\varepsilon}_t   e^{\alpha'_t+ \kappa_1}  }{2(1 + 2/(QT^2) ) \sqrt{\varepsilon'_t (1 - \varepsilon'_t) }} - 1  \Bigg\vert
     \\
     &= \frac{1}{2 (1 + 2/(QT^2 ) )} \Bigg\vert \Bigg(2 - \frac{1}{QT^2} \Bigg) \cdot \frac{1 - \widetilde{\varepsilon}_t}{1 - \varepsilon'_t} + \frac{1}{QT^2} \cdot \frac{\widetilde{\varepsilon}_t}{\varepsilon'_t} - 2\Bigg(1+ \frac{2}{QT^2}\Bigg) \Bigg\vert\tag{using Eq.~\eqref{eq:defnofZtrepeat_b}}
     \\
     &\leq \frac{1}{2 (1 + 2(Q T^2) )} \Bigg( 2 \cdot \Bigg\vert \frac{ 1-\widetilde{\varepsilon}_t }{1 - \varepsilon'_t}-1 \Bigg\vert  + \frac{1}{QT^2} \cdot \Bigg( \Bigg\vert \frac{ 1-\widetilde{\varepsilon}_t }{1 - \varepsilon'_t} \Bigg\vert + \Bigg\vert \frac{ \widetilde{\varepsilon}_t }{ \varepsilon'_t} \Bigg\vert \Bigg) + \frac{4}{QT^2} \Bigg)  \tag{using triangle inequality }
     \\
     &\leq \frac{1}{2 (1 + 2/(QT^2) )} \Bigg(  \frac{2}{QT^2} \cdot \Bigg( \frac{1}{1 - 1/(QT^2) } \Bigg) + \frac{1}{QT^2} \cdot \Bigg( \frac{1}{1 - 1/(QT^2) }+1 \Bigg) + \frac{4}{QT^2} \Bigg) 
     \\
      &\leq \frac{5}{QT^2} = 50 \delta, \tag{using $ \delta = 1/(10QT^2)$}
\end{align*}
where the second last inequality used $0 \leq \widetilde{\varepsilon}_t \leq \varepsilon'_t = 1/(QT^2)$ and $|\widetilde{\varepsilon}_t-\varepsilon'_t|\leq 1/(QT^2)$. 
\end{proof}

\begin{customthm}{4.5}
\label{claim:distancebetweenstates}
 Let $t \geq 1$, $\ket{\Phi_6}= \sum_{x\in S} \sqrt{\widetilde{D}^t_x} \ket{x , c(x)} \ket{0} \ket{0}^t+O_{h_{t-1}}\cdots O_{h_1}\cdot \G_{t}^{-1}\ket{\Psi}$ and $\ket{\Phi'_6}=\sum_{x\in S} \sqrt{{D}^t_x} \ket{x , c(x)} \ket{0} \ket{0}^t$ be defined as in Eq.~\eqref{eq:phi6},~\eqref{eq:phi6'} respectively. Then we have $|\langle \Phi_6 \vert \Phi'_6 \rangle| \geq 1-50 \delta$.
\end{customthm}

\begin{proof}
We break down the proof into two cases. 

 \textbf{Case I:} Algorithm~\ref{alg:multiplicativeamplitudeestimation} outputs `yes' in the $t$th iteration. We now lower bound the inner product~between 
$$
    \ket{\Phi_6}= \sum_{x\in S} \sqrt{\widetilde{D}^t_x} \ket{x , c(x)} \ket{0} \ket{0}^t+\underbrace{O_{h_{t-1}}\cdots O_{h_1}\cdot \G_{t}^{-1}\ket{\Psi}}_{:=\ket{\Psi'}}, \qquad     \ket{\Phi'_6}= \sum_{x\in S} \sqrt{{D}^{t}_x} \ket{x , c(x)} \ket{0} \ket{0}^t.
$$
In order to do so, we first lower bound the following quantity 
\begin{align}
\label{eq:distbetweendist_a}
\begin{aligned}
    \sum_{x\in S} \sqrt{\widetilde{D}^{t+1}_x D^{t+1}_x }&=\sum_{x\in S} \sqrt{\frac{\widetilde{D}^t_x\exp(-\alpha'_t c(x)h_t(x))}{2(1+2 \delta) \sqrt{\varepsilon'_t (1 - \varepsilon'_t) } }}\cdot  \sqrt{\frac{\widetilde{D}^t_x\exp(-\alpha'_t c(x)h_t(x))} {(1-\widetilde{\varepsilon}_t)e^{-\alpha'_t}+\widetilde{\varepsilon}_t e^{\alpha'_t}} }
    \\
    &=\Bigg(\frac{1}{2(1+2 \delta) \sqrt{\varepsilon'_t (1 - \varepsilon'_t) }} \cdot  \frac{1}{(1-\widetilde{\varepsilon}_t)e^{-\alpha'_t}+\widetilde{\varepsilon}_t e^{\alpha'_t}}\Bigg)^{1/2} \sum_{x \in S} \widetilde{D}^t_x\exp(-\alpha'_t c(x)h_t(x))
    \\
    &=\Bigg(\frac{(1-\widetilde{\varepsilon}_t)e^{-\alpha'_t}+\widetilde{\varepsilon}_t e^{\alpha'_t}}{2(1+2 \delta ) \sqrt{\varepsilon'_t (1 - \varepsilon'_t) }} \Bigg)^{1/2} \cdot \frac{\sum_{x \in S} \widetilde{D}^t_x\exp(-\alpha'_t c(x)h_t(x))}{(1-\widetilde{\varepsilon}_t)e^{-\alpha'_t}+\widetilde{\varepsilon}_t e^{\alpha'_t}}   
    \\
    &=\Bigg(\frac{1}{2(1+2 \delta)} \Bigg( \frac{1-\widetilde{\varepsilon}_t}{1-\varepsilon'_t}+ \frac{\widetilde{\varepsilon}_t}{\varepsilon'_t}\Bigg) \Bigg)^{1/2} \cdot 1\geq 1-2\delta,
    \end{aligned}
\end{align}
where the first equality used Eq.~\eqref{eq:QAB_distribution_classicalupdate_a} and~\eqref{eq:QAB_distribution_classicalupdatetrue}, the final equality used Eq.~\eqref{eq:defnofZtrepeat},
and the final inequality used Eq.~\eqref{eq:lowerboundeps/eps'_a} and Eq.~\eqref{eq:lowerbound1-eps/1-eps'_a} to conclude
\begin{align*}
\frac{1}{2(1+2 \delta)} \Bigg( \frac{1-\widetilde{\varepsilon}_t}{1-\varepsilon'_t}+ \frac{\widetilde{\varepsilon}_t}{\varepsilon'_t}\Bigg) \geq 1 - 4 \delta.
\end{align*}
 We are now ready to prove the claim
\begin{align*}
    | \langle \Phi_6 \vert \Phi'_6\rangle |=\Big\vert\sum_{x \in S} \sqrt{\widetilde{D}^t_x D^t_x }+\langle \Psi' \vert \Phi_6\rangle \Big\vert &\geq \Bigg\vert \Big|\sum_{x \in S} \sqrt{ \widetilde{D}^t_x D^t_x }\Big|-\Big|\langle \Psi' \vert \Phi_6\rangle\Big| \Bigg\vert \tag{by reverse triangle inequality}\\
    & \geq  \Big|\sum_{x \in S} \sqrt{\widetilde{D}^t_x D^t_x } \Big|-\Big|\langle \Psi' \vert \Phi_6\rangle\Big|\\
&\geq  1-2\delta -\Big|\langle \Psi' \vert \Phi_6\rangle\Big|\tag{using Eq.\eqref{eq:distbetweendist_a}}\\
    &\geq  1-2\delta -\Big  \| \ket{\Psi'}\Big\|=1-2 \delta -\Big  \| \ket{\Psi}\Big\|\geq 1 - 50 \delta   \tag{using Eq.~\eqref{eq:propertyofPsi}}
\end{align*}
where the penultimate inequality used $
\langle \Psi'\vert \Phi_6\rangle \leq \|\ket{\Psi'}\|\leq  30 \delta$  from Eq.~\eqref{eq:propertyofPsi}.

\textbf{Case II:} Algorithm~\ref{alg:multiplicativeamplitudeestimation} outputs `no' in the $t$th iteration.  
Recall that $\kappa_0 = \ln(2 - 1/(Q T^2))$ and $\kappa_1 = \ln(1/(Q T^2))$. Using Eq.~\eqref{eq:QAB_distribution_classicalupdate_b} and $0 \leq \widetilde{\varepsilon}_t \leq \varepsilon'_t = 1/(QT^2)$, we have
\begin{align}
\label{eq:distbetweendist_b}
\begin{aligned}
    &\sum_{x\in S} \sqrt{\widetilde{D}^{t+1}_x D^{t+1}_x }\\
    &=\sum_{x\in S}\sqrt{\frac{\widetilde{D}^t_x\exp(-\alpha'_t c(x)h_t(x)  + \kappa_{[h_t(x) \neq c(x) ]} )}{2(1+2/(Q T^2)) \sqrt{\varepsilon'_t (1 - \varepsilon'_t) }}}\cdot \sqrt{\frac{\widetilde{D}^t_x\exp(-\alpha'_t c(x)h_t(x)  + \kappa_{[h_t(x) \neq c(x) ]} )}{(1-\widetilde{\varepsilon}_t)e^{-\alpha'_t + \kappa_0}+\widetilde{\varepsilon}_t e^{\alpha'_t + \kappa_1 }}}\\ 
    &=\Bigg(\frac{1}{2(1+2/(Q T^2)) \sqrt{\varepsilon'_t (1 - \varepsilon'_t) }} \cdot  \frac{1}{(1-\widetilde{\varepsilon}_t)e^{-\alpha'_t + \kappa_0}+\widetilde{\varepsilon}_t e^{\alpha'_t + \kappa_1 }}\Bigg)^{1/2}  \cdot \sum_{x\in S} \widetilde{D}^t_x\exp(-\alpha'_t c(x)h_t(x)  + \kappa_{[h_t(x) \neq c(x) ]} ) 
    \\
    &=\Bigg(\frac{(1-\widetilde{\varepsilon}_t)e^{-\alpha'_t + \kappa_0}+\widetilde{\varepsilon}_t e^{\alpha'_t + \kappa_1 }}{2(1+2/(Q T^2)) \sqrt{\varepsilon'_t (1 - \varepsilon'_t) }} \Bigg)^{1/2}  \cdot  \frac{\sum_{x \in S} \widetilde{D}^t_x\exp(-\alpha'_t c(x)h_t(x)  + \kappa_{[h_t(x) \neq c(x) ]} ) }{(1-\widetilde{\varepsilon}_t)e^{-\alpha'_t + \kappa_0}+\widetilde{\varepsilon}_t e^{\alpha'_t + \kappa_1 }}
    \\
    &=\Bigg(\frac{(2 - 1/(Q T^2)) \cdot (1-\widetilde{\varepsilon}_t)e^{-\alpha'_t}+(1/(Q T^2)) \cdot \widetilde{\varepsilon}_t e^{\alpha'_t}}{2(1+2/(Q T^2) ) \sqrt{\varepsilon'_t (1 - \varepsilon'_t) }} \Bigg)^{1/2} 
    \\
  &=\Bigg(  \frac{1}{( 1+2/(QT^2)) } \cdot \Bigg( \Bigg( 1 - \frac{1}{2 Q T^2} \Bigg) \cdot \frac{1-\widetilde{\varepsilon}_t}{1-\varepsilon'_t}+ \Bigg( \frac{1}{2 Q T^2} \Bigg) \cdot \frac{\widetilde{\varepsilon}_t}{\varepsilon'_t} \Bigg) \Bigg)^{1/2} \geq 1 - \frac{3}{2 QT^2}.
    \end{aligned}
\end{align}
The first equality used Eq.~\eqref{eq:QAB_distribution_classicalupdate_b} and Eq.~\eqref{eq:defnofZtrepeat_b}, the fourth equality used the modified distribution update in Eq.~\eqref{eq:QAB_distribution_classicalupdate_b} to conclude
\begin{align*}
\sum_{x \in S}\widetilde{D}^t_x\exp\big(-\alpha'_t c(x)h_t(x)  + \kappa_{[h_t(x) \neq c(x) ]} \big)&=\sum_{x:h_t(x)=c(x)}\widetilde{D}^t_x\exp(-\alpha'_t  + \kappa_{1} )+\sum_{x:h_t(x)\neq c(x)}\widetilde{D}^t_x\exp(\alpha'_t  + \kappa_{0} )\\
    &= (1 - \widetilde{\varepsilon}_t) \cdot  e^{-\alpha'_t + \kappa_0} + \widetilde{\varepsilon}_t \cdot  e^{\alpha_t^\prime + \kappa_1}.
 \end{align*}
 and the last inequality used the lower bound on $\sum_{x\in S} \widetilde{D}^{t}_x$ in Claim 4.3 (Case II).  We are now ready to prove the claim
\begin{align*}
    | \langle \Phi_6 \vert \Phi'_6\rangle |
    & \geq  \Big|\sum_{x \in S} \sqrt{\widetilde{D}^{t}_x D^ t_x} \Big|-\Big|\langle \Psi' \vert \Phi_6\rangle\Big|
    \\
   &\geq  1-  \frac{3}{2QT^2} -\Big|\langle \Psi' \vert \Phi_6\rangle\Big|\tag{using Eq.\eqref{eq:distbetweendist_b}}
   \\
    &\geq  1- \frac{3}{2QT^2} -\Big  \| \ket{\Psi'}\Big\| \\
    &\geq 1-\frac{3}{2QT^2} -\frac{3}{QT^2}  \geq 1-\frac{5}{QT^2} =1 - 50 \delta \tag{using Eq.~\eqref{eq:propertyofPsi} and $\delta= 1/(10 Q T^2)$}.
\end{align*}
This concludes the proof of the claim.
\end{proof}

The proofs of these three claims conclude the description of the quantum boosting algorithm. It remains to prove the correctness of the algorithm.

\subsubsection{Proof of correctness }
\label{sec:proofofboosting}

    \paragraph{Output of quantum algorithm.} After $T$ rounds, the algorithm computes $\varepsilon'_1,\ldots,\varepsilon'_T$  and outputs a hypothesis $H$ which is a weighted combination of the weak hypotheses $\{ h_1, \cdots, h_T \}$ obtained after~$T$ rounds of boosting. The final hypotheses $H$ is then given by 
    $$
    H(x) = \sign \Big( \sum_{t=1}^T \alpha^\prime_t h_t(x) \Big),
    $$
    where $\alpha_t^\prime =  \frac{1}{2} \ln \Big( \frac{ 1 - \varepsilon'_t}{\varepsilon'_t} \Big)$ is the weight obtained in the $t$th iteration. 

      \paragraph{Probability of outputting $H$.} 
    We now bound the probability of failure of the quantum boosting algorithm in obtaining the strong hypothesis $H$. The first source of error is due to amplitude amplification in step $(4)$ of the boosting algorithm, which fails with probability $\leq \frac{1}{3 T}$.  The second error is due to the weak quantum learner failing to output a weak hypothesis in step $(5)$, whose probability is $\leq \frac{1}{3 T}$.  The third source of error is in estimating $\widetilde{\varepsilon}_t$ in step $(8)$, the probability of failure in estimating $\widetilde{\varepsilon}_t$ is $\leq 10\delta/T=O(1/(QT^3))$ (since we set $\delta = 1/(10Q T^2)$). By applying a union bound over the $T$ iterations and all three failure events, we ensure that the overall failure probability of outputting $H$ at the end of our quantum boosting algorithm is an arbitrary constant at most $1/3$ (with a constant overhead in the complexity).
    
    It remains to argue that the training error of $H$ is $\leq 1/10$, i.e., $H(x) = c(x)$ for $9/10$ of the $(x,c(x))$s in $S$. To prove this, we analyze the training error of $H$ with respect to the uniform distribution  $\widetilde{D}^1$ as follows.  We break the proof of correctness into two cases and argue separately. In fact in the first case we will argue that $H$ has zero training error and in the second case we will show the training error of $H$ is at most $1/10$.
    
    \textbf{Case I: } Suppose Algorithm~\ref{alg:multiplicativeamplitudeestimation} outputs `yes' for every $t \in [T]$. This case corresponds to the setting where each weighted error $\widetilde{\varepsilon}_t$ is estimated by an $\varepsilon'_t$ such that $| \varepsilon'_t - \widetilde{\varepsilon_t} | \leq \delta \varepsilon'_t $  for every iteration of the quantum boosting algorithm. In this case
    \begin{align}
    \label{eq:QAB_algorithm_classicaldistribution_a}
    \begin{aligned}
       \widetilde{D}^{t+1}(x) &= \frac{\widetilde{D}^{t}(x)}{Z'_{t}}
       \times \begin{cases} 
      e^{-\alpha_t^\prime} & \text{ if } h_{t}(x) = c(x) \\
      e^{\alpha_t^\prime} & \text{ otherwise }
   \end{cases}
    = \frac{\widetilde{D}^t(x) \exp{\big(-c(x) \alpha_t^\prime h_t(x) \big)} }{Z'_t} .
    \end{aligned}
    \end{align}
     where $Z'_t = 2 (1+2 \delta) \sqrt{ \varepsilon_t^\prime (1 - \varepsilon_t^\prime) }$. By definition, we obtain
     \begin{align}
     \label{eq:QAB_algorithm_classicalfinaldistribution_a}
     \begin{aligned}
     \widetilde{D}^{T+1}(x) &= \widetilde{D}^1(x) \cdot \prod_{t=1}^T \frac{ \exp{ \big( -c(x) \alpha_t^\prime h_t(x)  \big) }}{Z'_t} 
     = \frac{{D}^1(x) \exp{\big( - c(x) \cdot  \sum_{t=1}^T \alpha_t^\prime h_t(x) \big)}  }{\Pi_{t=1}^T Z'_t},
     \end{aligned}
     \end{align}
     where the second equality used $\widetilde{D}^1=D^1$ which is the uniform distribution.
We now upper bound the training error under the distribution $D^1$
     \begin{align}
     \begin{aligned}
     \label{eq:QAB_algorithm_trainingerrorbound_a}
     \Pr_{x\sim {D}^1} [H(x) \neq c(x)] &=\Pr_{x\sim {D}^1} \Big[\sign\Big (\sum_{t=1}^T \alpha_t^\prime h_t(x)\Big) \neq  c(x)\Big]\\
     &\leq \Pr_{x\sim {D}^1} \Big[\exp\Big(-\sum_{t=1}^T \alpha_t^\prime h_t(x) \cdot  c(x)\Big)\Big]
     \\
     &= \sum_{i=1}^M {D}^1(x_i) \exp \Big( - c(x_i) \sum_{t=1}^T \alpha_t^\prime h_t(x_i) \Big) = \sum_{i=1}^M \widetilde{D}^{T+1}(x_i) \Pi_{t=1}^T Z'_t \leq \Pi_{t=1}^T Z'_t, 
     \end{aligned}
     \end{align}
     where the first equality used the definition of $H(x) = \sign ( \sum_{t=1}^T \alpha^\prime_t h_t(x) )$, the  first inequality used $[\sign(z)\neq y]\leq e^{-z\cdot y}$ for $z\in \mathbb{R},y\in \pmset{}$,  the final equality used Eq.~\eqref{eq:QAB_algorithm_classicalfinaldistribution_a} and the final inequality used the fact that $\widetilde{D}^{T+1}$ is a sub-normalized distribution ($ \sum_{x \in S} \widetilde{D}^{T+1}_x \leq 1$). We are now in a stage to analyze the training error of $H$ on $D^1$,
     \begin{align*}
     \label{eq:QAB_algorithm_Finaltrainingerrorbound_a}
      \Pr_{x\sim {D}^1} [H(x) \neq c(x)] \leq \prod_{t=1}^T Z'_t &= (1+2 \delta)^T\prod_{t=1}^T 2 \sqrt{ \varepsilon_t^\prime (1 - \varepsilon_t^\prime) } \tag{using Eq.~\eqref{eq:QAB_algorithm_trainingerrorbound_a} and definition of $Z'_t$} 
      \\
     &\leq e^{2\delta T}\prod_{t=1}^T 2 \sqrt{ \frac{\widetilde{\varepsilon}_t}{1 - \delta} \cdot \Big(1 -\frac{ \widetilde{\varepsilon}_t}{1 + \delta}\Big) } \quad \tag{since  $|\widetilde{\varepsilon}_t - \varepsilon'_t | \leq \delta \varepsilon'_t$}   
     \\
     &\leq e^{2\delta T}\prod_{t=1}^T 2 \sqrt{ \widetilde{\varepsilon}_t (1 + 2 \delta) ( 1 - \widetilde{\varepsilon}_t (1 - \delta) )  }
     \\
     &\leq e^{2\delta T}\prod_{t=1}^T 2 \sqrt{ (\varepsilon_t + 4 \delta ) (1 + 2 \delta) ( 1 - ( \varepsilon_t - 4 \delta ) (1 - \delta) )  } \tag{using $ \vert \widetilde{\varepsilon}_t - \varepsilon_t   \vert \leq 4 \delta$}
     \\
     &\leq e^{2\delta T}\prod_{t=1}^T 2 \sqrt{ \varepsilon_t (1 - \varepsilon_t) + 75\delta}
     \\
     &\leq  e^{2\delta T}\prod_{t=1}^T 2 \sqrt{ 1/4 - \gamma_t^2 + 75 \delta } \quad \tag{since  $ \varepsilon_t \leq 1/2 - \gamma_t$}
     \\
    &= e^{2\delta T}\prod_{t=1}^T \sqrt{ 1 - 4 ( \gamma_t^2 - 75 \delta ) }
     \\
     &\leq e^{2\delta T}\prod_{t=1}^T \sqrt{ 1 - 4 ( \gamma^2 - 75 \delta )  } \quad \tag{since  $\gamma \leq \gamma_t$ for all $t$}
     \\
     &\leq \exp{ \bigg(2 \delta T -2 \sum_{t=1}^T \bigg( \gamma^2 - 75 \delta \bigg) \bigg) } \quad \tag{since  $1 + x \leq e^x$ for $x \in \mathbb{R}$}
     \\
     &\leq \exp{\big(- 2 T \gamma^2 + 16/(QT) \big)} \tag{since  $\delta=1/(10QT^2)$},
     \end{align*}
     where we used Claim~\ref{claim:distancefromtrueeps_a} (Case I) in the third inequality to conclude $|\widetilde{\varepsilon}_t-\varepsilon_t|\leq 4\delta$.  

    In order to conclude the proof-of-correctness, note that for   $T=O((\log M)/\gamma^2)$  and for a sufficiently large constant in the $O(\cdot)$, the final upper bound on the expression is  
    $$\Pr_{x\sim {D}^1} [H(x) \neq c(x)]< 1/M.
    $$
    Since ${D}^1$ is the uniform distribution over $S$, i.e., ${D}^1_x=1/M$ for $(x,c(x))\in S$, this implies that  $\Pr_{x\sim {D}^1} [H(x) \neq c(x)]=0$. Hence $H$ has \emph{zero training error}.
    
    \textbf{Case II:} In this case, we assume that Algorithm~\ref{alg:multiplicativeamplitudeestimation} outputs `no' in the first $\ell \in[T]$ rounds of the quantum boosting Algorithm~\ref{alg:quantumAdaBoost}.\footnote{Our analysis also works when Algorithm~\ref{alg:multiplicativeamplitudeestimation} outputs `no' for arbitrary $\ell$ rounds of the quantum boosting algorithm instead of the first $\ell$ rounds.} We additionally assume that $\ell\leq T/\log (2 \sqrt{Q} T) -1$, which is standard in AdaBoost for the following reason: suppose the  weighted errors of each of the first $t \in [\ell]$  hypotheses satisfies $ \varepsilon_t \leq 1/QT^2\ll 1/3$ (which is the `no' instance of Algorithm~\ref{alg:multiplicativeamplitudeestimation}), then observe that the resulting learner is \emph{strong} and we need not do boosting in the first place. Moreover, suppose $\ell\geq T/\log (2 \sqrt{Q} T)$, then observe that the final hypothesis 
    after the $T$ rounds of AdaBoost has training error at most $1/10$ and we are again done:
    \begin{align*}
     \Pr_{x\sim {D}^1} [H(x) \neq c(x)] =  \Pr_{x\sim {D}^1} \Big[\sign \Big(\sum_{t=1}^{T} \alpha_t h_t(x) \Big) \neq c(x)\Big]  \leq \prod_{t=1}^{T} Z_t &= \prod_{t=1}^{T} 2 \sqrt{ \varepsilon_t (1 - \varepsilon_t) } \\
     &\leq \prod_{t=1}^{\ell} 2 \sqrt{ \varepsilon_t } \leq \Bigg( \frac{2}{\sqrt{Q} T} \Bigg)^{\ell} \leq \frac{1}{10} ,
    \end{align*}
    where the last equality used $\ell\geq T/\log (2 \sqrt{Q} T)$ and $T\geq \log M$. 
    
    So from here onwards we will assume $\ell\leq T/\log (2 \sqrt{Q} T)$ and still show that the training error is at most $1/10$. 
    Note that for the first $\ell$ iterations, the distribution follows the update rule, which defers from the standard AdaBoost update: for every $k\in [\ell]$, 
    \begin{align}
    \label{eq:QAB_algorithm_classicaldistribution_b}
    \begin{aligned}
        \widetilde{D}^{k + 1}_x &= \frac{ \widetilde{D}^{k}_x}{2 (1 + 2/(QT^2))  \sqrt{\varepsilon'_{k} (1 - \varepsilon'_{k} ) }}
          \times \begin{cases} 
          (2 - 1/(QT^2)) e^{-\alpha'_{k}} & \text{ if } h_{k}(x) = c(x) \\
          (1/(QT^2)) e^{\alpha'_{k}} & \text{ otherwise }
          \end{cases}
    \\
    &= \frac{\widetilde{D}^{k}(x) \exp{\big(-\alpha_{k}' \cdot c(x)  h_{k}(x) + \kappa_{[h_k(x) \neq c(x)]} \big) } }{Z'_{k}},
    \end{aligned}
    \end{align}
     where $\varepsilon'_{k} = 1/(QT^2)$ and $Z'_{k} = 2 (1+2/(QT^2) ) \sqrt{ \varepsilon_{k}^\prime (1 - \varepsilon_{k}^\prime) }$. Let $\kappa_0 = \ln(2 - 1/(Q T^2))$  and $\kappa_1 = \ln(1/(Q T^2))$. In particular, observe that for every $\ell\geq 1$, we have
     \begin{align}
\label{eq:recursionforDtilde}
 \widetilde{D}^{\ell+ 1}_x=\frac{ D^1_x}{\Pi_{i=1}^{\ell} Z'_i}\cdot \exp\Big(- c(x)\cdot \sum_{i=1}^{\ell}\alpha'_ih_i(x)\Big)\cdot \exp\Big(\sum_{i=1}^{\ell} \kappa_{[h_i(x)\neq c(x)]} \Big).
\end{align}
     We bound the training error as follows:
     \begin{align*}
     \Pr_{x\sim {D}^1} [H(x) \neq c(x)] 
     &\leq \sum_{x\in S} {D}^1_x \exp \Big( - c(x) \sum_{t=1}^{T} \alpha_t^\prime h_t(x) \Big) 
     \\
     &= \sum_{x\in S} {D}^1_x \exp \Big( - c(x) \sum_{t=1}^{\ell} \alpha_t^\prime h_t(x) \Big) \cdot \exp \Big( - c(x) \sum_{t=\ell+1}^{T} \alpha_t^\prime h_t(x) \Big) 
     \\
     &= \prod_{t=1}^{\ell} Z'_t \sum_{x\in S} \widetilde{D}^{\ell+1}_x \exp \Big( - c(x) \sum_{t=\ell+1}^{T} \alpha_t^\prime h_t(x) \Big) \cdot \exp\Big(-\sum_{i=1}^{\ell} \kappa_{[h_i(x)\neq c(x)]} \Big)
     \\
     &= \prod_{t=1}^T Z'_t  \sum_{x\in S} \widetilde{D}^{T+1}_x \exp\Big(-\sum_{i=1}^{\ell} \kappa_{[h_i(x)\neq c(x)]} \Big)
     \leq  \prod_{t=1}^T Z'_t \cdot (QT^2)^{\ell}\cdot \sum_{x\in S} \widetilde{D}^{T+1}_x\leq \prod_{t=1}^T Z'_t \cdot(QT^2)^{\ell},
     \end{align*}
     where the second equality uses Eq.~\eqref{eq:recursionforDtilde} (i.e., distribution update for `no' instances) and third equality uses Eq.~\eqref{eq:QAB_algorithm_classicalfinaldistribution_a} (i.e., distribution update for the `yes' instances), the penultimate inequality uses $\exp(-\kappa_0)\leq \exp(-\kappa_1)\leq QT^2$ (we remark that this bound is very loose, since $\exp(\kappa_0)=O(1)$) and the final inequality uses the fact that $\widetilde{D}$ is a sub-normalized distribution by Claim~\ref{claim:propertyofD}.
        Continuing to upper bound the above expression, we get
     \begin{align*}
      \Pr_{x\sim {D}^1} [H(x) \neq c(x)] &\leq  (QT^2)^{\ell} \prod_{t=1}^{T} Z'_t 
      \\
      &= \Bigg( (QT^2)^{\ell} \prod_{t=1}^\ell Z'_t \Bigg) \cdot \Bigg( \prod_{t = \ell + 1}^T Z'_t \Bigg)
      \\
      &= \Bigg( (QT^2)^{\ell}  (1 + 2/(QT^2))^{\ell}  \prod_{t=1}^\ell 2\sqrt{ 1/(QT^2) \cdot (1 -  1/(QT^2) )} \Bigg) \cdot \Bigg(  (1+2 \delta)^{T-\ell}  \prod_{t=\ell +1 }^T 2 \sqrt{ \varepsilon_t^\prime (1 - \varepsilon_t^\prime) } \Bigg)  
      \\
      &\leq \Bigg( (QT^2)^{\ell}  (1 + 2/(QT^2))^{\ell}  \prod_{t=1}^\ell 2/\sqrt{QT^2}  \Bigg) \cdot \Bigg( (1+2 \delta)^{T-\ell}  \prod_{t=\ell +1 }^T 2 \sqrt{ \varepsilon_t^\prime (1 - \varepsilon_t^\prime) } \Bigg) \\ 
      &\leq \Bigg( (2 \sqrt{Q} T)^{\ell} \exp{\Big((2 \ell)/(Q T^2) \Big) } - 2 (T-\ell) \gamma^2 + (16T - 16\ell)/(QT^2)  \Bigg)
      \\
      &\leq (2 \sqrt{Q} T)^{\ell}  \exp{\Big(- 2 (T-\ell) \gamma^2 + 16/(QT)\Big)} \leq  \exp{\Big(2\ell (\ln(2\sqrt{Q}T)+\gamma^2)- 2 T \gamma^2 +1 \Big)}, 
      \end{align*}
      where the second equality used
      \[ 
      Z'_t=\begin{cases} 
      2 (1 + 2/(QT^2))  \sqrt{1/(QT^2)\cdot  (1 - 1/(QT^2) ) } & \text{ for } t\leq \ell\\
      2 (1+2 \delta)\cdot   \sqrt{ \varepsilon_t^\prime (1 - \varepsilon_t^\prime) }  & \text{ for } t\geq \ell+1,
      
   \end{cases}
\]
      the third inequality used $1 + x \leq e^x$ for $x \in \mathbb{R}$ and the second factor 
      $$\exp{\Big(- 2 (T-\ell) \gamma^2 + (16T - 16\ell)/(QT^2) \Big)},
      $$ 
      came from the upper bound of the training error derived in Case I  with $T$ replaced by $T - \ell$ (recall that we had showed $\Pr_{x\sim {D}^1} [H(x) \neq c(x)] \leq  \prod_{t=1}^T Z'_t \leq \exp{\big(- 2 T \gamma^2 + 16/(QT) \big)}$). Finally, using $\ell\leq T/\ln (2 \sqrt{Q} T)~-1$, we have
      \begin{align*}
      \Pr_{x\sim {D}^1} [H(x) \neq c(x)]&\leq \exp{\Big(2\ell (\ln(2 \sqrt{Q}T)+\gamma^2)- 2 T \gamma^2  + 1 \Big)}\\
      &\leq \exp{\Big(\Big(\frac{2T}{\ln (2 \sqrt{Q} T)} -2\Big)\cdot \big(\ln(2 \sqrt{Q}T)+\gamma^2\big)- 2 T \gamma^2  + 1 \Big)}\\
      &=\exp\Big(2T-2\gamma^2-2\ln(2\sqrt{Q}T)+\frac{2T\gamma^2}{\ln (2 \sqrt{Q} T)}-2T\gamma^2+1\Big)\leq \frac{e}{4QT^2}\leq \frac{1}{10},
      \end{align*}
      where the final inequality used that $Q,T=O(\log M)$ are sufficiently large. Hence, we have shown that $H$ has training error at most $1/10$.

\subsubsection{Complexity of the algorithm}
    \label{sec:compofboosting}
    First we analyze the \emph{query} complexity of the quantum boosting algorithm (where the query complexity refers to the total number of queries made to the hypothesis-oracles $\{O_{h_1},\ldots,O_{h_T}\}$).  We consider the complexity of the $t$th iteration: in \emph{phase~1}, the number of queries made to $\{h_1, \ldots, h_{t-1} \}$ in order for the weak quantum learner $\A$ to output the hypothesis $h_t$ is at most $\sqrt{M} Q \cdot t$: the~$\sqrt{M}$-factor comes from amplitude amplification and the application of the unitary $\widetilde{W}_t : \ket{0} \rightarrow \ket{\Phi_4}$ involves $Q(t-1)$ queries for the $Q$ copies of the input to the weak learner. An additional $Q (t -~1)$ queries to $\{h_1, \ldots, h_{t-1} \}$ are required while applying $O_{h_1},\ldots,O_{h_{t-1}}$  to uncompute the queries. In \emph{phase~2}, the number of queries made during multiplicative amplitude estimation in order to compute $\varepsilon'_t$ is $\sqrt{M}Q^{3/2}T^3\cdot t$: the $ \sqrt{M}Q^{3/2}T^3$-factor is due to multiplicative amplitude estimation (in Lemma~\ref{lem:updatingapproxdistribution}). Furthermore, each application of $\widetilde{F}_t: \ket{0} \rightarrow \ket{\psi_6}$ involves making $t$ queries. Putting together the contribution from both phases, the total query complexity of the quantum boosting algorithm is  
   $$
   \sum_{t=1}^T \sqrt{M} Q (t-1) + Q (t - 1)+ \sqrt{M}Q^{3/2}T^3 t   = O( \sqrt{M}Q^{3/2}T^5 + \sqrt{M} Q T^2 )=\widetilde{O}( \sqrt{M} Q^{3/2} T^5).
   $$
   
   We now discuss the time complexity of the quantum boosting algorithm. We begin by analyzing the time complexity of the $t$th iteration. Assuming that a quantum RAM can prepare a \emph{uniform} superposition $ \frac{1}{\sqrt{M}} \sum_{x \in S} \ket{x, c(x)} $ using $O(n\log M)$ gates, the time complexity of preparing the initial state $\ket{\psi_1} \otimes \ket{\Phi_1}^{\otimes Q} $ is $O(n Q)$.\footnote{As we mentioned earlier, we could also assume that a quantum learner has access to the uniform quantum examples $\frac{1}{\sqrt{M}}\sum_{x\in S} \ket{x,c(x)}$, in which case we do not need to assume a quantum RAM.} In the second step, in order to prepare $\ket{\psi_2} \otimes \ket{\Phi_2}^{\otimes Q} $, our quantum algorithm uses $O(Q t)$ quantum queries to $\{h_1, \ldots, h_{t-1}\}$ and this can be performed in time $O(Q t )$. The third step involves updating the registers from $\widetilde{D}^1$ to $\widetilde{D}^t$ which requires $Q+~1$ applications of the control unitary $\G_t$. Since there are $t-1$ control qubits and updating the distribution register is an arithmetic operation, the third step for implementing $O(Q)$ operations of $\mathcal{G}_t$ can be performed in $O(n^2 Q t)$ time.\footnote{A quantum circuit can perform arithmetic operations with the same time complexity as a Boolean circuit \cite{kitaev1995quantum}.}
   
   In phase $1$ of the quantum algorithm, we perform amplitude amplification with the unitary $Y_t^{\otimes Q}$ which makes $O(\sqrt{M} Q)$ calls to $\widetilde{W}_t$ and $\widetilde{W}_t^{-1}$. This takes time $O(n^2 \sqrt{M}Q t)$. Next, in order to uncompute the $t-1$ quantum queries in the $Q$ copies, our algorithm uses $O(n Q t)$ time. The weak learner $\mathcal{A}$ takes as input $Q$ samples and outputs a hypothesis~$h_t$ in time $O(n^2 Q)$. Note that we require the quantum learner to output an oracle for $h_t$ instead of explicitly outputting a circuit for $h_t$. 
   
   In phase $2$, the algorithm initially performs an arithmetic  operation $ \sum_{x \in S} \ket{x}\ket{\widetilde{D}_x^t} \ket{ [h_{t}(x) \neq c(x)] } \rightarrow  \sum_{x \in S} \ket{x}  \ket{\widetilde{D}_x^t  [h_{t}(x) \neq c(x)] } \ket{ [h_{t}(x) \neq c(x)] }$ using $O(n)$ gates. Then a controlled reflection operator $V  : \ket{p} \ket{0} \ket{0} \rightarrow \ket{p}  \Big( \sqrt{1-p} \ket{0} + \sqrt{p} \ket{1} \Big) \ket{ \sin^{-1} \big(\sqrt{p}\big) } $ is applied where the operation $\ket{p} \ket{0} \ket{0} \rightarrow \ket{p} \ket{0} \ket{\sin^{-1} \big(\sqrt{p}\big)}$ is an arithmetic process and uses $O(n)$  gates while the operation $\ket{p} \ket{0} \ket{\sin^{-1} \big(\sqrt{p}\big)} \rightarrow \ket{p}  \Big( \sqrt{1-p} \ket{0} + \sqrt{p} \ket{1} \Big) \ket{ \sin^{-1} \big(\sqrt{p}\big) }$ uses one controlled rotation gate. The next step is phase estimation which involves applying QFT using $O(n \cdot \log n)$ gates. The time required for amplitude estimation in order to compute $\varepsilon'_t$ is $O( \sqrt{M}Q^{3/2}T^3 \cdot  t n^2)$: the $\sqrt{M}Q^{3/2}T^3$ calls are made to the unitaries $\widetilde{F}_t , \widetilde{F}_t^{-1}$ and each application of $\widetilde{F}_t: \ket{0} \rightarrow \ket{\psi_6}$ requires  $O(n^2 \cdot t)$ time.  The overall time complexity of the quantum algorithm is  
   \begin{align*}
    \sum_{t=1}^T O\Big(n^2 \sqrt{M}Q^{3/2}T^3  t +  n^2 \sqrt{M} Q t + n^2 Q t \Big) =\widetilde{O}(n^2 \sqrt{M}  Q^{3/2} T^5).
   \end{align*}

 \subsection{Reducing generalization error}
 \label{sec:generalizationofquantum}
 
 In the previous section, we showed that our quantum boosting algorithm produces a hypothesis $H$ that has training error $1/10$ over the training set $\{(x_i,c(x_i))\}_{i \in [M]}$, where $(x_i,c(x_i))$ was sampled according to the unknown distribution $\mathcal{D}$. Now we consider Stage (2) of the algorithm. Recall that the goal of our quantum boosting algorithm  is to output a hypothesis $H:\01^n\rightarrow \{-1,1\}$ that satisfies
 \begin{align}
 \label{eq:requirementofstrong}
 \Pr_{x\sim \mathcal{D}} [H(x)=c(x)]\geq 1-\eta.
\end{align} 
A priori, it is unclear if the output of the quantum boosting algorithm $H$ satisfies Eq.~\eqref{eq:requirementofstrong}. However, we saw in Theorem~\ref{thm:goingfromtrainingtogeneralization} that as long as $M$, i.e., the number of training examples (given as input to the quantum boosting  algorithm) is \emph{large enough}, then not only does $H$ has \emph{zero} training error, but it also ensures small \emph{generalization error}, i.e., $H$ also satisfies Eq.~\eqref{eq:requirementofstrong}.  In particular, Stage~$(2)$ of classical AdaBoost simply uses Theorem~\ref{thm:goingfromtrainingtogeneralization} to argue that: suppose the training error of~$H$ is $0$, then the \emph{generalization error} of $H$ is at most $\eta$ as long as $M\geq O(\VC(\Cc)/\eta^2)$. Using Theorem~\ref{thm:goingfromtrainingtogeneralization}, we can now prove our main theorem: 
   \begin{theorem} [Complexity of Quantum Boosting algorithm]
   \label{finalcomplexityofquantumAdaBoost}
   Fix $\eta>0$ and $\gamma>0$. Let $\Cc=\cup_{n\geq 1}\Cc_n$ be a concept class  and~$\A$ be a $\gamma$-weak quantum PAC algorithm for $\Cc$ that takes time $Q(\Cc)$. Let $n\geq 1$,  $\mathcal{D}:\01^n\rightarrow [0,1]$ be an unknown distribution,  $c\in \Cc_n$ be the unknown target concept and
   $$
   M= \Bigg\lceil\frac{\VC(\Cc)}{\gamma^2}\cdot \frac{\log(\VC(\Cc)/\gamma^2)}{\eta^2 }\Bigg\rceil.
   $$
 Suppose we run Algorithm~\ref{alg:quantumAdaBoost} for $T\geq ((\log M)\cdot \log(1/\delta))/(2\gamma^2)$ rounds, then with probability $\geq 1-\delta$ (over the randomness of the algorithm), we obtain a hypothesis $H$ that has training error $1/10$ and generalization error  $$
   \Pr_{x\sim \mathcal{D}} [H(x)=c(x)]\geq 1-1/10-\eta.
   $$
 Moreover, the time complexity of the quantum boosting algorithm is 
   \begin{align}
       \label{eq:quantumcomplexityofAdaBoost}
   T_Q=O( n^2 \sqrt{M}Q(\Cc)^{3/2}T^5 )=\widetilde{O}\Bigg(\frac{\sqrt{\VC(\Cc)} }{\eta}\cdot Q(\Cc)^{3/2} \cdot \frac{n^2}{\gamma^{11}} \cdot \polylog(1/\delta)\Bigg).
  \end{align}
\end{theorem}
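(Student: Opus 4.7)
The plan is to combine two ingredients: the training-error bound from Theorem~\ref{thm:quantum_AdaBoost} (Stage~(1)), which is already proven in Sections~\ref{sec:describeboosting}--\ref{sec:compofboosting}, together with a generalization argument in the spirit of Theorem~\ref{thm:goingfromtrainingtogeneralization} (Stage~(2)). Concretely, first I would invoke Theorem~\ref{thm:quantum_AdaBoost} with the stated choice of $M$ and $T$ to conclude that, conditioned on the favorable event (which holds with probability $\geq 1-\delta$ after an appropriate union bound over the $T$ iterations, the amplitude-amplification failure in Step~(4), the learner failure in Step~(5), and the modified amplitude estimation failure in Step~(8)), the output $H:\01^n\to\{-1,1\}$ satisfies $\Pr_{x\sim D^1}[H(x)\neq c(x)]\leq 1/10$ on the empirical sample $S$.

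Next I would upgrade this empirical bound to a generalization bound exactly as in classical AdaBoost. The key observation is that $H(x)=\sign\bigl(\sum_{t=1}^T \alpha'_t h_t(x)\bigr)$ lies in the class $\mathcal{H}_T$ of sign-thresholded linear combinations of $T$ hypotheses from the effective hypothesis class produced by $\A$ (which is contained in a class whose VC dimension is bounded in terms of $\VC(\Cc)$). Standard VC-based uniform convergence (the same result invoked in Theorem~\ref{thm:goingfromtrainingtogeneralization}) says that with
\[
M \;\geq\; \Bigg\lceil \frac{\VC(\Cc)}{\gamma^2}\cdot \frac{\log(\VC(\Cc)/\gamma^2)}{\eta^2}\Bigg\rceil
\]
examples, the training and generalization errors of every $H\in\mathcal{H}_T$ differ by at most $\eta$ with high probability, so the $1/10$ training error translates into
\[
\Pr_{x\sim\mathcal{D}}[H(x)\neq c(x)]\;\leq\; 1/10+\eta.
\]
Since this step is identical to the classical analysis, I would simply cite Theorem~\ref{thm:goingfromtrainingtogeneralization} (applied to the robust update rule, noting that $H$ lives in the same hypothesis class as the classical AdaBoost output) rather than redo the VC covering-number argument.

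Finally, the complexity claim follows by substitution. Theorem~\ref{thm:quantum_AdaBoost} gives running time $\widetilde{O}(n^2\sqrt{M}\,Q(\Cc)^{3/2}\,T^5)$. Plugging in $T=\Theta((\log M)\log(1/\delta)/\gamma^2)$ and $M=\widetilde{\Theta}(\VC(\Cc)/(\gamma^2\eta^2))$ and simplifying with $\widetilde{O}(\cdot)$ absorbing logarithmic factors yields
\[
T_Q \;=\; \widetilde{O}\!\left(\frac{\sqrt{\VC(\Cc)}}{\eta}\cdot Q(\Cc)^{3/2}\cdot \frac{n^2}{\gamma^{11}}\cdot \polylog(1/\delta)\right),
\]
as claimed, with the $\gamma^{-11}=\gamma^{-1}\cdot \gamma^{-10}$ arising from one factor of $\gamma^{-2}$ in $M$ and five factors of $\gamma^{-2}$ in $T^5$.

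The only non-routine step is Stage~(2): one must verify that the generalization argument of Theorem~\ref{thm:goingfromtrainingtogeneralization} goes through \emph{verbatim} for the robust update rule of Algorithm~\ref{alg:quantumAdaBoost}, because the theorem as stated concerns the standard classical distribution update. Fortunately this is essentially free: the generalization bound depends only on (a) the form of the final hypothesis (a thresholded $\alpha'$-weighted sum of weak hypotheses from a class of bounded VC dimension) and (b) the training error achieved, both of which are the same here. Hence no new probabilistic argument is needed and the proof reduces to bookkeeping of the constants in Theorem~\ref{thm:quantum_AdaBoost} and Theorem~\ref{thm:goingfromtrainingtogeneralization}.
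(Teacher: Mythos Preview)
Your proposal is correct and follows essentially the same approach as the paper: invoke Theorem~\ref{thm:quantum_AdaBoost} for the training-error guarantee, appeal to Theorem~\ref{thm:goingfromtrainingtogeneralization} (noting, as you do, that the VC-based generalization bound depends only on the form of $H$ and its training error, not on the specific distribution-update rule), and then substitute $M$ and $T$ to read off the complexity. The paper's own argument is even terser than yours, simply stating that ``using Theorem~\ref{thm:goingfromtrainingtogeneralization}, we can now prove our main theorem'' without spelling out the union bound or the $\gamma^{-11}$ bookkeeping you include.
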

Picking $\eta=1/10$ we get that $H$ has generalization error at most $1/5$. Recall that the complexity of classical AdaBoost using Theorem~\ref{thm:goingfromtrainingtogeneralization} is
$$
T_C=\widetilde{O}\Bigg(\frac{\VC(\Cc)}{\eta^2} \cdot R(\Cc) \cdot \frac{n}{\gamma^4}\log(1/\delta)\Bigg).
$$
In comparison, $T_Q$ is quadratically better than $T_C$ in terms of the $\VC$ dimension of the concept class $\Cc$ and $1/\eta$. Additionally, we could potentially have $Q(\Cc)\ll R(\Cc)$ since the the quantum time complexity of a weak learner can be much lesser than the classical time complexity of learning a concept class as exhibited by~\cite{servediogortler:equiv} (under computational assumptions). 

\paragraph{Open questions.} We conclude with a few interesting questions:  (i) can we improve the polynomial dependence on $1/\gamma$ in the quantum complexity of  boosting? (ii) can we use the quantum boosting algorithm to improve the complexities various quantum algorithms that use \emph{classical} AdaBoost on top of a weak quantum algorithms? (iii) are there \emph{practically relevant} concept classes which have large $\VC$ dimension  for which our quantum boosting algorithm gives a large quantum~speedup, (iv) could one replace the quantum phase estimation step in our quantum boosting algorithm by variational techniques developed by Peruzzo et al.~\cite{vqeforpe}?, (v) could one prove lower bounds on the complexity of quantum boosting?

\paragraph{Acknowledgements.} 
RM would like to thank Seth Lloyd
for his support and guidance during the course of the project.
RM’s research was supported by Seth Lloyd in the Research
Laboratory of Electronics at MIT. Part of this work was
done when SA was a Postdoc at Center for Theoretical
Physics, MIT (funded by the MIT-IBM Watson AI Lab
under the project Machine Learning in Hilbert space) and
visiting Henry Yuen in University of Toronto. SA was also
supported by the Army Research Laboratory and the Army
Research Office under grant number W911NF-20-1-0014. We thank
Ashley Montanaro and Ronald de Wolf for clarifications
regarding multiplicative phase estimation and Ronald de
Wolf for several comments that significantly improved the
presentation of this paper. We thank Steve Hanneke and
Nishant Mehta for various clarifications regarding classical
boosting algorithms. We thank the anonymous reviewers of
ICML 2020 for helpful suggestions on a prelimnary version
of this paper.

\newcommand{\etalchar}[1]{$^{#1}$}

\end{document}